\definecolor{dullmagenta}{rgb}{0.4,0,0.4}   
\definecolor{darkblue}{rgb}{0,0,0.4}
\def\equationautorefname~#1\null{\textrm{Eq.~(#1)}\null}
\def\figureautorefname~#1\null{\textrm{Fig.~#1}\null}
\def\tableautorefname~#1\null{\textrm{Tab.~#1}\null}
\def\sectionautorefname~#1\null{\textrm{Sec.~#1\;}\null}
\def\subsectionautorefname~#1\null{\textrm{Sec.~#1\;}\null}
\def\subsubsectionautorefname~#1\null{\textrm{Sec.~#1\;}\null}
\def\pageautorefname~#1\null{\textrm{page~#1\;} \null}
\author{Seiichiro Tani\footnote{seiichiro.tani@acm.org}\\
Department of Mathematics, School of Education\\
Waseda University
}
\date{}
\newcommand{\domain}{\mathrm{dom}}
\newcommand{\claw}{\mathsf{claw}}
\newcommand{\adeg}{\widetilde{\deg}}
\newcommand{\pSearch}{\mathsf{pSearch}}
\title{Approximate Degrees of Multisymmetric Properties\\
with Application to Quantum Claw Detection}
\begin{document}

\sloppy
\maketitle
\thispagestyle{empty}
\begin{abstract}
    The claw problem is central in the fields of theoretical computer science as well as cryptography.
    The goal of the problem is to detect the existence of a pair of inputs $x$ and $y$ (or find such a pair if they exist) such that $f(x)=g(y)$ for a given pair of funcitons $f$ and $g$ as oracles (i.e., black boxes).
    The optimal quantum query complexity of the problem is known to be $\Omega\left(\sqrt{G}+(FG)^{1/3} \right)$ for input functions $f\colon [F]\to Z$ and $g\colon [G]\to Z$. 
    However, the lower bound was proved when the range $Z$ is sufficiently large (i.e., $\abs{Z}=\Omega(FG)$). The current paper proves the lower bound holds even for every smaller range $Z$ with $\abs{Z}\ge F+G$. This implies that $\Omega\left(\sqrt{G}+(FG)^{1/3} \right)$ is tight for every such range.
    In addition, the lower bound
    $\Omega\left(\sqrt{G}+F^{1/3}G^{1/6}M^{1/6}\right)$ is provided for even smaller range $Z=[M]$ with every $M\in [2,F+G]$ by reducing the claw problem for $\abs{Z}= F+G$.
    The proof technique is general enough to apply to any $k$-symmetric property (e.g., the $k$-claw problem), i.e., the Boolean function $\Phi$ on the set of $k$ functions with different-size domains and
    a common range such that $\Phi$ is invariant under the permutations over each domain and the permutations over the range. More concretely,
    it generalizes Ambainis's argument [Theory of Computing, 1(1):37-46] 
    to the multiple-function case by using the notion of multisymmetric polynomials.
\end{abstract}
\clearpage
\setcounter{page}{1}
\pagestyle{plain}
\section{Introduction}
\subsection{Background}
Due to their simplicity and broad applicability, the collision detection/finding problem and its variant, the \emph{claw detection/finding problem}, are central problems in the fields of theoretical computer science 
(e.g.,\cite{BuhDurHeiHoyMagSanWol01CCC,IwaKaw03NG,AarShi04JACM,Amb07SICOMP,BelChiJefKotMag13ICALP,Zha15QIC})
as well as cryptography (e.g.,\cite{BraHOyTap98LATIN,JouLuc09ASIACRYPT,NikSas16ASIACRYPT,LiuZha19EUROCRYPTO,HosSasTanXag20TCS}).
For a given function $f$, the goal of the collision detection is to detect the existence of a pair of elements $(x,y)$ such that $f(x)=f(y)$
(Since the collision detection without any promise on the input $f$ is often called \emph{element distinctness}, we may use the latter term when appropriate.)
As a natural variant of the collision detection, we can define the \emph{claw detection} for two functions $f,g$ as the problem of detecting the existence of a pair $(x,y)$ such that $f(x)=g(y)$.  
When we are required to \emph{find} a collision (a claw) if it exists,
we refer to the corresponding ``search'' problem as the \emph{collision finding} (the \emph{claw finding}, respectively) in this paper. 

These problems are classically hard (i.e., time-consuming) to solve for large domains. This hardness forms the basis of the security of various cryptographic applications.
This fact in turn makes these problems appealing as targets with which to investigate quantum computational power: If one finds quantum algorithms that solve the problems very efficiently, this exhibits solid quantum advantages concerning practically essential problems; if one proves that the problem is very hard even against quantum computation, this gives theoretical evidence that the cryptographic applications are secure against quantum computers.

Interestingly, the actual situation is somewhere in between. For these problems, quantum speed-ups are possible but at most polynomial. This has motivated the invention of various essential techniques for improving quantum upper bounds (i.e., quantum algorithms) and lower bounds.

For element distinctness,  
Buhrman et al.~\cite{BuhDurHeiHoyMagSanWol01CCC} proved 
the upper bound $O(n^{3/4})$ and the lower bound $\Omega(\sqrt{n})$
on the quantum query complexity for $f\colon [n]\to Z$ for a finite set $Z$,
where $[n]$ denotes the set $\set{1,\dots, n}$.
Aaronson and Shi~\cite{AarShi04JACM} improved the lower bound to
$\Omega(n^{2/3})$
by introducing a novel technique to the polynomial method to bound the degrees of multi-variate polynomials that approximate symmetric Boolean functions. Ambainis~\cite{Amb07SICOMP} then provided the breakthrough upper bound $O(n^{2/3})$ by developing a general framework of quantum search based on quantum walk. His work was followed by many generalizations and variants that form one of the mainstreams of quantum algorithms (e.g.,\cite{Sze04FOCS,MagNayRolSan11SICOMP,AmbGilJefKok20STOC,JefZur23STOC}). 

As for claw detection for $f\colon [F]\to Z$ and $g\colon [G]\to Z$, where
$F$ and $G$ are positive integers such that
$F\le G$ and $Z$ is a finite set, 
Buhrman et al.~\cite{BuhDurHeiHoyMagSanWol01CCC} proved 
that the quantum query complexity is $\Omega(\sqrt{G})$ and $O(\sqrt{G}+F^{1/2}G^{1/4})$.
Zhang~\cite{Zha05COCOON} then proved the lower bound $\Omega(\sqrt{G}+(FG)^{1/3})$ on the quantum query complexity by 
providing the reduction from collision detection with the promise that the input function with domain $[FG]$ is either one-to-one or two-to-one and using the lower bound $\Omega((FG)^{1/3})$~\cite{AarShi04JACM} for that promised case. Tani~\cite{Tan09TCS} then proved the matching upper bound $O(\sqrt{G}+(FG)^{1/3})$ by running a quantum-walk-based search on a particular graph product of two Johnson graphs corresponding to the two input functions.

These tight bounds depend only on the domain size(s) of the input function(s). This means that these bounds are tight if the function range is sufficiently large. Indeed, the lower bound $\Omega (n^{2/3})$ of element distinctness was initially proved when the range size is $\Omega(n^2)$~\cite{AarShi04JACM}. 
In some cases of smaller range size, however, it is still non-trivial to detect a collision (e.g., a collision for a given function $f\colon [n]\to [n]$ since $f$ can be one-to-one). Moreover, in some applications, such as cryptographic ones, the range size is often known and can be a relatively small value. Ambainis~\cite{Amb05ToC} and Kutin~\cite{Kut05ToC} independently proved the same lower bound $\Omega (n^{2/3})$ holds even for the range of size $n$.  In particular, Ambainis’s argument is so general that it can be applied to any function $\Phi$ from a set of functions $f$ to $\set{0,1}$ if $\Phi$ is invariant under the permutations $\pi$ and $\sigma$ over the domain and range, respectively, of $f$, i.e., $\Phi(f)=\Phi(\sigma f\pi)$ (e.g., if $\Phi$ tells the existence of a collision for $f$, then it is easy to see that $\Phi(f)=\Phi(\sigma f\pi)$).

As for claw detection, Zhang’s lower bound holds if the range size of input functions is $\Omega(FG)$. This condition seems very strong, considering that it is still non-trivial  to detect a claw for a given pair of functions 
$f\colon [F]\to [F+G]$ and $g\colon [G]\to [F+G]$ with $F\le G$,
since the images of $f$ and $g$ can be disjoint.  Therefore, it is natural to ask whether the lower bound 
$\Omega(\sqrt{G}+(FG)^{1/3})$ still
holds for a smaller range, such as $[F+G]$, and, more ambitiously, whether a multi-function version of Ambainis’s argument exists. From a practical point of view, relaxing the condition on the function range for tight bounds could contribute to more accurate security analysis of post-quantum ciphers, such as those based on supersingular elliptic curve isogenies~\cite{JaqSch19CRYPTO}, one form (SIKE) of which had been considered until the final round of the post-quantum standardization at NIST.

\subsection{Polynomial Method}\label{subsec:polynomialmethod}
There are two primary lower-bound techniques for quantum query complexity: the polynomial method \cite{BeaBuhCleMosWol01JACM} and the adversary method \cite{Amb02JCSS,HoyLeeSpa07STOC}. 
Although the polynomial method does not necessarily give optimal lower bounds~\cite{Amb06JCSS}, it often gives non-trivial lower bounds that were (or are) not previously known to be obtainable using other lower-bound techniques (e.g., Refs.~\cite{AarShi04JACM,BunKotTha20TOC}). 
The highlight of the method is the claim that the minimum degree of a polynomial that approximates a Boolean function $\Phi$ is at most twice the number of quantum queries required to compute $\Phi$. This allows us to use the results of polynomial approximation theory to obtain lower bounds on the quantum query complexity of $\Phi$.
More concretely, let $f$ be any function from $[n]$ to $[m]$. Then, $f$ can be identified 
with $nm$ Boolean variables $\{x_{ij}:i\in\left[n\right],j\in\left[m\right]\}$ such that $x_{ij}=1$ 
if and only if $f\left(i\right)=j$. Let $\Phi$ be a Boolean function that tells whether input $f$ has a certain property (e.g., the property that $f$ has at least one collision). 
Then, $\Phi$ is a function in the variables $\{x_{ij}\}$.  
The polynomial method claims that the minimum degree of a polynomial that approximates $\Phi$ within error $\epsilon$ is at most twice the number of quantum queries required 
to compute $\Phi$ with error probability at most $\epsilon$. 
For $m\geq n$, suppose that $\Phi^\prime$ 
is the restriction of $\Phi$ to the case 
where the range of input function is $\left[n\right]$.
Then, it is easy to see that if $d$ is the minimum degree of a polynomial that approximates $\Phi$, then the minimum degree of a polynomial that approximates $\Phi^\prime$ is at most $d$ (since $\Phi^\prime$ is obtained by setting $x_{ij}=0$ for all $i\in [n]$ and $j\in\left[m\right]\setminus\left[n\right]$ in $\Phi$). 
Ambainis~\cite{Amb05ToC} proved that the converse is also true if $\Phi$ is a symmetric property, i.e., a Boolean function invariant under the permutations over domain $[n]$ and range $\left[m\right]$, respectively, of $f$.
For instance, following the lower-bound proof of 
element distinctness by Aaronson and Shi~\cite{AarShi04JACM},
its minimum approximate degree is $\Omega\left(n^{2/3}\right)$
if the range size of input functions is $m\in \Omega(n^2)$. 
Thus, the polynomial method together with Ambainis's argument implies that the quantum query complexity of the element distinctness problem is $\Omega\left(n^{2/3}\right)$ even if the function range is $\left[n\right]$. 
This relaxes the condition (i.e., $m=\Omega(n^2)$) on the range size of input functions
required in Ref.~\cite{AarShi04JACM} required to obtain the optimal lower bound.

\subsection{Our Contribution}\label{sec:outcontribution}
Our main contribution is to prove a multi-function version of Ambainis’s argument. 
As an application of our argument to two functions 
$f\colon [F]\to [M]$ and $g\colon [G]\to [M]$, we show that the optimal quantum lower bound for claw detection proved for range size $FG$ still holds for smaller range size $F+G$. 
Moreover, using this new lower bound, we obtain a lower bound for an even smaller range $[M]$ with $2\le\left|M\right|<F+G$. In the following, we summarize our results more concretely. 
To simplify the notation, 
we represent by $\calR^{\calD}$ the set of functions from $\calD$ to $\calR$ for any ordered sets $\calD$ and $\calR$. For instance, 
if $f$ is a function from $[F]$ to $[M]$, then $f\in [M]^{[F]}$.

\subsubsection{Approximate Degree of Multisymmetric Properties}
For simplicity, we first consider the case where two functions $f\colon\left[F\right]\rightarrow\left[M\right]$ 
and $g\colon\left[G\right]\rightarrow\left[M\right]$ are given as input. 
Then, $f$ can be identified with $FM$ Boolean variables $\{x_{ij}:i\in\left[F\right],j\in\left[M\right]\}$ 
such that $x_{ij}=1$ if and only if $f\left(i\right)=j$. 
Similarly, $g$ can be identified with $GM$ Boolean variables 
$\{y_{kj}:k\in\left[F\right],j\in\left[M\right]\}$. 
For $\calD\subseteq [M]^{[F]} \times [M]^{[G]}$,
Let $\Phi\colon \calD\to \set{0,1}$ be a Boolean function that tells 
whether input pair $(f,g)$ in $\calD$ has a certain property 
(e.g., the claw property that $f$ and $g$ have at least one claw). 
Then, $\Phi$ is a function in the $FM+GM$ variables $\{x_{ij}\}\cup \{y_{kj}\}$.
We say that $\Phi$ is bisymmetric, if $\Phi$ 
is invariant under the permutations over domains $\left[F\right]$ and $[G]$, respectively, and the common range $\left[M\right]$ of $f$ and $g$, i.e., 
$\Phi\left(f,g\right)=\Phi\left(\sigma f\pi_F,\sigma g\pi_G\right)$,
where $\pi_F$ and $\pi_G$ are permutations 
over $[F]$ and $[G]$, respectively, and $\sigma$ is a permutation over $\left[M\right]$.
For instance, if $\Phi$ is the claw property, then $\Phi$ is bi-symmetric.

\begin{theorem}\label{th:bisymmetricproperty_intro}
    For $F,G,M\in \Natural$ such that $M\ge F+G$, 
    let 
       $\Phi\colon [M]^{[F]}\times [M]^{[G]}\to \set{0,1}$ 
    be a bisymmetric Boolean function of a pair of functions 
    $f\colon [F]\to [M]$ and $g\colon [G]\to [M]$. 
    In addition, for every $M'$ with $F+G\le M'<M$,
    $\Phi'\colon [M']^{[F]}\times [M']^{[G]}\to \set{0,1}$ be
    the restriction of $\Phi$ to
    the case where the input functions are $f\colon [F]\to [M']$ 
    and $g\colon [G]\to [M']$.
        Then, for $0<\epsilon <1/2$, the minimum degree of a polynomial that $\epsilon$-approximates $\Phi$ is equal to the minimum degree of a polynomial that $\epsilon$-approximates $\Phi'$.
\end{theorem}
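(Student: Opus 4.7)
The inequality $\deg_\epsilon(\Phi') \le \deg_\epsilon(\Phi)$ is immediate: any degree-$d$ polynomial that $\epsilon$-approximates $\Phi$ yields one for $\Phi'$ by setting all variables $x_{ij}, y_{kj}$ with $j \in [M] \setminus [M']$ to zero. The real content is the opposite inequality, which requires lifting a low-degree approximator of $\Phi'$ on range $[M']$ to one of $\Phi$ on range $[M]$ without increasing the degree. My plan is to generalize Ambainis's single-function argument by letting multisymmetric polynomials play the role of ordinary symmetric polynomials.

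Starting from a degree-$d$ polynomial $p'$ that $\epsilon$-approximates $\Phi'$, the first step is to symmetrize: because $\Phi'$ is bisymmetric, averaging $p'$ over $S_F \times S_G \times S_{M'}$ produces a bisymmetric polynomial $\bar{p}'$ of degree at most $d$ that still $\epsilon$-approximates $\Phi'$. Reducing each higher power $x_{ij}^k$ to $x_{ij}$, I may assume $\bar{p}'$ is multilinear. The $S_F$- and $S_G$-symmetries then imply that on Boolean inputs, $\bar{p}'$ depends only on the column sums $a_j := \sum_{i=1}^F x_{ij}$ and $b_j := \sum_{k=1}^G y_{kj}$ for $j \in [M']$; indeed, $S_F$-symmetrizing a multilinear monomial collects the $x_{ij}$-factors at each column $j$ into a term $\binom{a_j}{t}$ of degree $t$ in $a_j$. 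Hence $\bar{p}'$ equals a polynomial $R(a_1,b_1,\ldots,a_{M'},b_{M'})$ of total degree at most $d$, which is \emph{multisymmetric} in the $M'$ pairs $(a_j,b_j)$ thanks to the residual $S_{M'}$-symmetry.

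By the fundamental theorem on multisymmetric polynomials, $R$ can be written as a polynomial $Q$ in the power sums $P^{(M')}_{\alpha,\beta} := \sum_{j=1}^{M'} a_j^\alpha b_j^\beta$ with $\alpha,\beta \ge 0$ and $\alpha+\beta \ge 1$; a graded refinement shows that only $(\alpha,\beta)$ with $\alpha+\beta \le d$ are needed, and the weighted degree of $Q$ is at most $d$. I then define the lifted polynomial
\begin{equation*}
\bar{p}(x,y) \;:=\; Q\!\left(P^{(M)}_{\alpha,\beta}\right), \qquad \text{where}\quad P^{(M)}_{\alpha,\beta} := \sum_{j=1}^{M} a_j^\alpha b_j^\beta.
\end{equation*}
Because $P^{(M)}_{\alpha,\beta}$ has degree $\alpha+\beta$ in the original Boolean variables, $\bar{p}$ has degree at most $d$; because each $P^{(M)}_{\alpha,\beta}$ is $S_M$-symmetric in the $j$-index, $\bar{p}$ is bisymmetric. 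To check the approximation, fix any $(f,g) \in [M]^{[F]} \times [M]^{[G]}$: since $|\image(f) \cup \image(g)| \le F+G \le M'$, there is $\sigma \in S_M$ for which $(\sigma f,\sigma g)$ takes values in $[M']$. For this shifted input, $a_j = b_j = 0$ for $j > M'$, so $P^{(M)}_{\alpha,\beta} = P^{(M')}_{\alpha,\beta}$ and therefore $\bar{p}(\sigma f,\sigma g) = \bar{p}'(\sigma f,\sigma g)$. Bisymmetry of both $\Phi$ and $\bar{p}$ then gives $|\bar{p}(f,g) - \Phi(f,g)| = |\bar{p}'(\sigma f,\sigma g) - \Phi'(\sigma f,\sigma g)| \le \epsilon$.

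The step I expect to be the main obstacle is the graded version of the fundamental theorem of multisymmetric polynomials, i.e., the claim that a multisymmetric polynomial of total degree $d$ in $(a_j,b_j)_{j=1}^{M'}$ admits an expression in power sums whose weighted degree is also at most $d$. Although this is a natural refinement of the classical statement, the multisymmetric ring is subtler than the ordinary symmetric ring (it has many more generators than its Krull dimension), and one must argue carefully -- for instance via an induction on degree using Newton-type identities -- that the substitution $P^{(M')}_{\alpha,\beta} \mapsto P^{(M)}_{\alpha,\beta}$ does not inflate the degree of $\bar{p}$ beyond $d$.
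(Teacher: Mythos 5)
Your proposal is correct and takes essentially the same approach as the paper. The paper factors the argument into two lemmas — first passing from a polynomial in the Boolean variables $x_{ij},y_{kj}$ to one in the column sums $z_j,w_j$ by averaging over $S_F\times S_G$ (Lemma~\ref{lm:bisymmetryZW}), then lifting a polynomial in $z'_1,\dots,z'_{M'},w'_1,\dots,w'_{M'}$ to one in $z_1,\dots,z_M,w_1,\dots,w_M$ via symmetrization over $S_{M'}$, the fundamental theorem of multisymmetric polynomials, and power-sum substitution (Lemma~\ref{lm:bisymmetryZWandZ'W'}) — whereas you perform the three symmetrizations at once, but the content is identical. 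The ``obstacle'' you flag, a graded form of the fundamental theorem ensuring the power-sum representation does not inflate the degree, is exactly what the paper states and proves as Theorem~\ref{th:multsym-polynomial} (see Appendix~\ref{appdx:fundamentaltheoremofmultsymmetric}), so your plan is complete once that lemma is in hand.
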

With the promise that the union of the images of input functions $f$ and $g$, i.e., ${f([F])\cup g([G])}$, is upper-bounded in size 
by $\mathfrak{U}$, \autoref{th:bisymmetricproperty_intro} also holds for every $M'$ with $\mathfrak{U}\le M' <M$ with a slight modification to the proof.
For instance, if both $f$ and $g$ are two-to-one,
then $\abs{f([F])\cup g([G])}\le (F+G)/2$, and thus
\autoref{th:bisymmetricproperty_intro} holds for every 
$M'$ with $(F+G)/2\le M' <M$.
Since $\abs{f([F])\cup g([G])}\le F+G$ always holds, we obtain \autoref{th:bisymmetricproperty_intro} in the above form (i.e., without any promise on the image sizes).

The definition of bisymmetric properties can be generalized in a straightforward way:
Let $\Phi\colon \calF_1 \times\cdots\times \calF_k \to \set{0,1}$
be a Boolean function over the set of $k$-tuples of functions in $\calF_1\times\cdots \times \calF_k$, where $\calF_{\ell}\colon [F_{\ell}]\to [M]$
for $F_{\ell},M\in \Integer_+$.
As in the case of two functions,
we can see that
$\Phi$ depends on Boolean variables $x^{(\ell)}_{ij}\in\{0,1\}$ for 
        $(\ell,i,j)\in [k]\times [F_{\ell}]\times [M]$ such that $x^{(\ell)}_{ij}=1$ if and only if $f_{\ell}\left(i\right)=j$.
We say that $\Phi$ is $k$-symmetric, if $\Phi(f_1,\dots,f_k)$ is invariant under the permutations over $[F_\ell]$ for each $\ell\in [k]$, respectively, and the common range $[M]$ of $f_{\ell}$'s, i.e., 
\begin{equation*}
\Phi(f_1,\dots,f_k)=\Phi\left(\sigma f_1\pi_{F_1},\dots,\sigma f_k\pi_{F_k}\right),  
\end{equation*}
for every permutation $\pi_{F_i}$ over $[F_i]$ for $i\in [k]$ and every permutation $\sigma$ over $[M]$. 
Then, we have a generalization of \autoref{th:bisymmetricproperty_intro}.

\begin{theorem}\label{th:multisymmetricproperty_intro}
    Let $k\ge 2$.
    For $F_\ell\ (\ell\in [k])$ and $M\in \Natural$ such that $M\ge F_1+\dots +F_k$, 
    $\Phi:\calF_1\times\dots \times \calF_k\to \set{0,1}$
    be a $k$-symmetric Boolean function over the set of $k$-tuples of functions $(f_1,\dots,f_k)\in \calF_1\times \dots \times \calF_k$, 
    where $\calF_{\ell}$ denotes $[M]^{[F_\ell]}$ for each $\ell$.
    In addition, 
    for every ${M'}$ such that $F_1+\dots+F_k\le {M'}< M$,
    let 
    $\Phi':\calF'_1\times\dots \times \calF'_k\to \set{0,1}$
    be 
    the restriction of $\Phi$ to the domain $[{M'}]^{[F_1]}\times\dots \times [{M'}]^{[F_k]}$, i.e., 
    the $k$-symmetric Boolean function over the set of $k$-tuples of functions 
    $(f'_1,\dots,f'_k)\in \calF'_1\times\dots \times \calF'_k$,
    where $\calF'_\ell$ denotes $[{M'}]^{[F_\ell]}$ for each $\ell\in [k]$,
    such that $\Phi'(f'_1,\dots,f'_k)=\Phi(f_1,\dots,f_k)$ for all functions $f'_{1},\dots,f'_{k},f_1,\dots, f_{k}$
    satisfying $f'_{\ell}(x)=f_{\ell}(x)$ for all $x\in [F_\ell]$ and all $\ell\in [k]$.

    Then, the minimum degree of a polynomial that $\epsilon$-approximates $\Phi$ is equal to the minimum degree of a polynomial that $\epsilon$-approximates $\Phi'$
\end{theorem}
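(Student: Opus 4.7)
The plan is to establish the equality of approximate degrees by proving both inequalities. The direction that the $\epsilon$-approximate degree of $\Phi'$ is at most that of $\Phi$ follows by restriction: given a polynomial $p$ of degree $d$ that $\epsilon$-approximates $\Phi$, setting $x^{(\ell)}_{ij}=0$ for every $j\in [M]\setminus[M']$ and all $\ell,i$ yields a polynomial of degree at most $d$ that, by the definition of $\Phi'$ as a restriction of $\Phi$, $\epsilon$-approximates $\Phi'$ on its domain.

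For the nontrivial direction, let $p'$ of degree $d$ $\epsilon$-approximate $\Phi'$. First I would symmetrize $p'$ by averaging $p'\circ g$ over the product group $S_{F_1}\times\cdots\times S_{F_k}\times S_{M'}$; since $\Phi'$ is $k$-symmetric, the averaged polynomial $\bar p'$ has degree at most $d$ and still $\epsilon$-approximates $\Phi'$. Writing $s^{(\ell)}_j=\sum_{i\in[F_\ell]}x^{(\ell)}_{ij}$, the polynomial $\bar p'$ is invariant under the diagonal $S_{M'}$-action permuting the vectors $\bigl(s^{(1)}_j,\dots,s^{(k)}_j\bigr)$ indexed by $j\in[M']$, and therefore lies in the ring of multisymmetric polynomials in these vectors. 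By the fundamental theorem on multisymmetric polynomials (a polarization of Newton's identities), that ring is generated by the multisymmetric power sums
\begin{equation*}
    p_{\vec a}(s)=\sum_{j=1}^{M'}\prod_{\ell=1}^{k}\bigl(s^{(\ell)}_j\bigr)^{a_\ell}, \qquad \vec a=(a_1,\dots,a_k),\ |\vec a|:=a_1+\cdots+a_k\ge 1,
\end{equation*}
and any such invariant of degree $d$ in the $x$'s can be written as a polynomial $Q$ in $\{p_{\vec a}:1\le|\vec a|\le d\}$ whose weighted degree, where the weight of $p_{\vec a}$ is $|\vec a|$, is at most $d$. Applying this to $\bar p'$ gives $\bar p'=Q\bigl(p_{\vec a}(s):1\le|\vec a|\le d\bigr)$.

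Next I would lift to the full range $[M]$ by substituting the enlarged power sums
\begin{equation*}
    \tilde p_{\vec a}(\tilde s)=\sum_{j=1}^{M}\prod_{\ell=1}^{k}\bigl(\tilde s^{(\ell)}_j\bigr)^{a_\ell}, \qquad \tilde s^{(\ell)}_j=\sum_{i\in[F_\ell]}\tilde x^{(\ell)}_{ij},
\end{equation*}
into $Q$, setting $p(\tilde x):=Q\bigl(\tilde p_{\vec a}(\tilde s):1\le|\vec a|\le d\bigr)$. Since $\tilde p_{\vec a}$ has degree $|\vec a|$ in the $\tilde x$'s, the weighted-degree bound on $Q$ forces $\deg p\le d$. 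To verify the approximation guarantee, fix any tuple $(f_1,\dots,f_k)$ with range $[M]$. Because $\bigl|\bigcup_\ell f_\ell([F_\ell])\bigr|\le F_1+\cdots+F_k\le M'$, some $\sigma\in S_M$ sends $\bigcup_\ell f_\ell([F_\ell])$ into $[M']$. Since $p$ is symmetric under permutations of the range index by construction and $\Phi$ is $k$-symmetric, replacing each $f_\ell$ by $\sigma f_\ell$ alters neither $p(f_1,\dots,f_k)$ nor $\Phi(f_1,\dots,f_k)$. For the relabeled tuple, $\tilde s^{(\ell)}_j=0$ whenever $j>M'$, so $\tilde p_{\vec a}=p_{\vec a}$ for every $|\vec a|\ge 1$; consequently $p(\sigma f_1,\dots,\sigma f_k)=\bar p'(\sigma f_1,\dots,\sigma f_k)$. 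Combining this with $\Phi'(\sigma f_1,\dots,\sigma f_k)=\Phi(\sigma f_1,\dots,\sigma f_k)$ (the defining property of the restriction), the approximation error of $p$ at $(f_1,\dots,f_k)$ equals that of $\bar p'$ at $(\sigma f_1,\dots,\sigma f_k)$, which is at most $\epsilon$.

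The main obstacle is the invoked multivariate analogue of Newton's identities in degree-preserving form: for $k\ge 2$ the multisymmetric polynomial ring is not freely generated by the power sums $p_{\vec a}$ (nontrivial relations appear already in low degree), so one must verify carefully that any degree-$d$ multisymmetric invariant admits an expression in the $p_{\vec a}$'s with weighted degree at most $d$. This degree-preserving multisymmetric expansion is the central algebraic ingredient that replaces Ambainis's use of univariate symmetric functions; once it is in hand, the lifting argument above is a routine adaptation of his single-function approach to the multi-function setting.
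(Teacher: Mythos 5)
Your overall strategy mirrors the paper's: the trivial direction by restriction, and for the converse, symmetrization over domain and range permutations, reduction to a multisymmetric polynomial in power sums, and lifting those power sums from range $[M']$ to $[M]$ followed by the $\sigma$-relabeling argument. The paper carries this out as two separate lemmas (Lemma~\ref{lm:GeneralBisymmetryZW} passing from the $x^{(\ell)}_{ij}$-variables to the preimage-count variables $z^{(\ell)}_j$, then Lemma~\ref{lm:GeneralbisymmetryZWandZ'W'} passing from $[M']$ to $[M]$ via power sums), while you merge them, which is harmless.

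However, there is a genuine gap in the first merged step. You assert that after averaging over $S_{F_1}\times\cdots\times S_{F_k}\times S_{M'}$, the polynomial $\bar p'$ ``is invariant under the diagonal $S_{M'}$-action permuting the vectors $(s^{(1)}_j,\dots,s^{(k)}_j)$ \ldots\ and therefore lies in the ring of multisymmetric polynomials in these vectors.'' Invariance under that action is neither sufficient nor even the relevant reason: $\bar p'$ is a polynomial in the original Boolean variables $x^{(\ell)}_{ij}$, and many $S_{M'}$-invariant polynomials in the $x$'s are not polynomials in the linear slices $s^{(\ell)}_j=\sum_i x^{(\ell)}_{ij}$. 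What actually makes $\bar p'$ expressible in the $z$-variables $z^{(\ell)}_j=|f_\ell^{-1}(j)|$ is the \emph{domain}-symmetrization: averaging a multilinear monomial $x^{(\ell)}_{i_1 j_1}\cdots x^{(\ell)}_{i_m j_m}$ over $\pi\in S_{F_\ell}$ yields $\prod_{t=1}^m \frac{z^{(\ell)}_{j_t}-s_t}{F_\ell-(t-1)}$ on the promised (valid) Boolean inputs, a polynomial in $z^{(\ell)}_j$ of degree at most $m$; this is the Ambainis computation (Appendix~\ref{appdx:symmetrization}) and is precisely what Lemma~\ref{lm:GeneralBisymmetryZW} encapsulates. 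Without it, the claim that $\bar p'$ is a polynomial in the $s$-vectors is unjustified; with it, the rest of your argument goes through as you wrote it.

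On your stated ``main obstacle'': you are right that the multisymmetric power sums $p_{\vec a}$ are not algebraically independent for $k\ge 2$, but freeness is not needed. All one needs is that every $\mathrm{mon}_\Omega(X)$ of degree $|\Omega|_1$ is a $\Rational$-linear combination of products $P_{\Omega'_1}\cdots P_{\Omega'_n}$ with $|\Omega'|_1\le|\Omega|_1$; this degree-preserving statement is exactly Theorem~\ref{th:multsym-polynomial}, proved in Appendix~\ref{appdx:fundamentaltheoremofmultsymmetric} by induction on the number of nonzero rows of $\Omega$. Also note that the index $\vec a=\vec 0$ must be excluded when you substitute $\tilde p_{\vec a}$ for $p_{\vec a}$ (since $p_{\vec 0}=M'$ while $\tilde p_{\vec 0}=M$), which you correctly do by restricting to $|\vec a|\ge 1$ and absorbing the constant.
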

With the promise that the union of the images of input functions, i.e., $\bigcup_{\ell\in [k]}{f_{\ell}([F_\ell])}$, is upper-bounded in size 
by $\mathfrak{U}$, \autoref{th:multisymmetricproperty_intro} also holds for every $M'$ with $\mathfrak{U}\le M' <M$ with a slight modification to the proof. 
Slightly general versions of the above theorems are 
Theorems~\ref{th:bisymmetricproperty} and \ref{th:multisymmetricproperty}
proved
in \autoref{sec:DegreeOfMultisymmetricProperties}.

\subsubsection{Application to Claw Detection}

\begin{table}[t]
    \centering
    \begin{tabular}{c||c|c|c|c}\hline
        & $2\le M < \max\{2,(G/F)^2\}$ & $\max\{2,(G/F)^2\}\le M < F+G$ & $F+G\le M < FG$ &   $FG\le M$\\\hline\hline
        $F\le G\le F^2$ & \multicolumn{2}{c|}{$O\left(
        \min\left\{(FG)^{1/3},G^{1/2+\epsilon}M^{1/4}\right\}
        \right)$~\cite{Tan09TCS,AmbBalIra21TQC}} & \multicolumn{2}{c}{$O((FG)^{1/3})$~\cite{Tan09TCS}}\\\cline{2-5}
         & $\Omega(\sqrt{G})$~\cite{BuhDurHeiHoyMagSanWol01CCC}& $\Omega(F^{1/3}G^{1/6}M^{1/6})$ \textcolor{red}{[Th.\ref{th:ComplexityClawWithSmallerRange_intro}]} &$\Omega ((FG)^{1/3})$ \textcolor{red}{[Th.\ref{th:QueryCompClawWithSmallRange_intro}]} & $\Omega ((FG)^{1/3})$~\cite{Zha05COCOON}\\\hline
        $F^2<G$ & \multicolumn{4}{c}{$\Theta\left(\sqrt{G}\right)$~\cite{BuhDurHeiHoyMagSanWol01CCC}}\\\hline
    \end{tabular}
    \caption{Summary of quantum query complexity for claw detection for functions $f\colon [F]\to [M]$ and $g\colon [G]\to [M]$, where $F\le G$. Note that $(G/F)^2<F+G$ holds if $G\le F^2$ (otherwise, $\Theta(\sqrt{G})$ is tight for all $M\ge 2$).}
    \label{tab:claw}
\end{table}

We use \autoref{th:bisymmetricproperty_intro}
to derive a lower bound for claw detection with a small range.
For this, we need a lower bound on the minimum degree
of a polynomial that approximates the claw property 
(for input functions with a possibly large common range).
It turns out that 
for any constant $0<\epsilon<1/2$, the minimum degree of a polynomial that $\epsilon$-approximates the claw property for function pairs 
$f\colon [F]\to [FG]$ and $g\colon [G]\to [FG]$
is $\Omega\left((FG)^{1/3}\right)$.
By combining this with \autoref{th:bisymmetricproperty_intro},
we can obtain a lower bound on the minimum approximate degree of the claw property for the range size at least $F+G$
of input functions. 
\begin{theorem}\label{th:DegreeOfClawForSmallRange_intro}
Let $F,G\in \Natural$ be such that $F\le G$.
For any constant $0<\epsilon<1/2$ and any $M\ge F+G$, the minimum degree of a polynomial that $\epsilon$-approximates the claw property for a function pair in $[M]^{[F]}\times [M]^{[G]}$ is $\Theta(\sqrt{G}+(FG)^{1/3})$,
where the lower bound $\Omega(\sqrt{G})$ holds for all $M\ge 2$.
\end{theorem}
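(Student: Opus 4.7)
The plan is to combine three ingredients: Tani's quantum upper bound $O(\sqrt{G}+(FG)^{1/3})$ translated to approximate degree via the polynomial method, a degree-preserving reduction from $\mathrm{OR}_G$ that gives $\Omega(\sqrt{G})$ for every $M\ge 2$, and the approximate-degree lower bound $\Omega((FG)^{1/3})$ at the large-range point $M=FG$ recalled in the paragraph immediately before the theorem statement. The last ingredient is then pushed to every $M\ge F+G$ by the bisymmetric range-transfer theorem \autoref{th:bisymmetricproperty_intro}.

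First I verify that the claw property $\Phi$ is bisymmetric: the predicate ``$\exists\,(i,j):\, f(i)=g(j)$'' is unchanged after replacing $(f,g)$ by $(\sigma f\pi_F,\sigma g\pi_G)$, since reparametrising $i,j$ by $\pi_F,\pi_G$ is cosmetic and $\sigma$ cancels on both sides of the equality. Hence \autoref{th:bisymmetricproperty_intro} applies. Using the recalled fact $\adeg_\epsilon(\Phi)=\Omega((FG)^{1/3})$ at $M=FG$, invoking \autoref{th:bisymmetricproperty_intro} with the pair $(M=FG,\,M'\in[F+G,FG))$ transfers the bound downward, and with the pair $(M,\,M'=FG)$ for $M>FG$ transfers it upward. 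Thus $\adeg_\epsilon(\text{claw})=\Omega((FG)^{1/3})$ for every $M\ge F+G$.

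For the remaining $\Omega(\sqrt{G})$ piece, which the theorem asserts for every $M\ge 2$, I reduce from $\mathrm{OR}_G$, whose approximate degree is $\Omega(\sqrt{G})$ by Nisan--Szegedy. Given bits $b_1,\dots,b_G$, fix $f\equiv 1$ on $[F]$ and set $g(j)=1$ if $b_j=1$ and $g(j)=2$ otherwise; a claw exists iff $\bigvee_j b_j=1$. Under this encoding every $x_{ij}$ becomes a constant and every $y_{kj}$ becomes $0$, $b_k$, or $1-b_k$, a degree-preserving substitution, so $\adeg_\epsilon(\text{claw})\ge \adeg_\epsilon(\mathrm{OR}_G)=\Omega(\sqrt{G})$ for all $M\ge 2$. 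Finally, Tani's quantum-walk algorithm solves claw with $O(\sqrt{G}+(FG)^{1/3})$ queries, and the polynomial method upper-bounds the approximate degree by the same quantity, producing the matching $\Theta(\sqrt{G}+(FG)^{1/3})$. The only potentially delicate step is the recalled $\Omega((FG)^{1/3})$ approximate-degree lower bound at $M=FG$; once that is in hand (e.g., via Zhang's domain-splitting reduction from the two-to-one collision on $[FG]$ combined with the Aaronson--Shi approximate-degree bound for collision, which is itself a degree-preserving substitution of the claw $x,y$-variables by disjoint subsets of the collision Boolean variables), the rest of the argument is a clean composition.
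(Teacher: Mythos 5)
Your proposal matches the paper's route essentially step for step: the $\Omega((FG)^{1/3})$ piece comes from the $M=FG$ case (Zhang's reduction from two-to-one collision and the Aaronson--Shi bound) transferred via the bisymmetric range theorem, the $\Omega(\sqrt G)$ piece comes from the OR$_G$ reduction with $f\equiv 1$ and $g\in\{1,2\}$, and the upper bound comes from Tani's algorithm via the polynomial method. One small clarification worth noting: the $M=FG$ ingredient is not literally a single degree-preserving substitution — it is an \emph{average} of such substitutions $P'_{ST}$ over all random disjoint pairs $S,T$, and one needs the birthday-paradox calculation ($|S|=F\ll\sqrt{FG}\ll G=|T|$) to show the averaged polynomial preserves a constant gap between the one-to-one and two-to-one cases — but you flag this as the delicate step, and your explicit handling of $M>FG$ by applying \autoref{th:bisymmetricproperty_intro} with $M'=FG$ is a nice touch that the paper's proof text leaves implicit.
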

By the polynomial method, \autoref{th:DegreeOfClawForSmallRange_intro} implies
the optimal lower bound on the bounded-error quantum query complexity of both claw detection and claw finding, where the optimality follows from the quantum algorithm~\cite{Tan09TCS}
that \emph{finds} a claw.

\begin{theorem}\label{th:QueryCompClawWithSmallRange_intro}
Let $F,G\in \Natural$ be such that $F\le G$.
For every $M\ge F+G$,
the (query-)optimal quantum algorithm that computes a claw (or detects the existence of a claw) with a constant error probability
requires $\Theta(\sqrt{G}+(FG)^{1/3})$ queries for a given pair of functions in $[M]^{[F]}\times [M]^{[G]}$.
More precisely, the optimal quantum query complexity is $\Theta ((FG)^{1/3})$ 
for every $M\ge F+G$ if $F\le G\le F^2$ 
and $\Theta(\sqrt{G})$ for every $M\ge 2$ if $G>F^2$.
\end{theorem}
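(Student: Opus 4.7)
The theorem is essentially a corollary of the approximate-degree statement in Theorem~\ref{th:DegreeOfClawForSmallRange_intro}, the polynomial method of Beals et al.~\cite{BeaBuhCleMosWol01JACM}, the quantum-walk algorithm of Tani~\cite{Tan09TCS}, and a short case split.

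\emph{Lower bound.} First I would apply the polynomial method in the Boolean-indicator encoding of \autoref{subsec:polynomialmethod}: represent $f$ by the variables $\{x_{ij}\}_{i\in[F],j\in[M]}$ with $x_{ij}=1$ iff $f(i)=j$ and $g$ by $\{y_{kj}\}_{k\in[G],j\in[M]}$ in the same way, so that the claw property becomes a Boolean function on $FM+GM$ Boolean variables (restricted to valid inputs, i.e.\ those with exactly one $1$ per row). Any $T$-query bounded-error quantum algorithm for this decision problem induces, in the standard way, a polynomial of degree at most $2T$ in these variables that $\epsilon$-approximates the claw property. Combining this with Theorem~\ref{th:DegreeOfClawForSmallRange_intro}, which provides the lower bound $\Omega\!\left(\sqrt{G}+(FG)^{1/3}\right)$ on that approximate degree for every $M\ge F+G$, gives the matching lower bound for claw detection. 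Since any claw-\emph{finding} algorithm can be converted into a detection algorithm by one additional verification query, the same lower bound transfers to claw finding.

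\emph{Upper bound and case split.} Tani's quantum-walk-based algorithm~\cite{Tan09TCS} finds a claw whenever one exists using $O\!\left(\sqrt{G}+(FG)^{1/3}\right)$ queries, independently of the range size $M$; detection inherits the same upper bound for free. For $F\le G\le F^2$ we have $\sqrt{G}\le (FG)^{1/3}$, so the bound collapses to $\Theta\!\left((FG)^{1/3}\right)$ for every $M\ge F+G$, matching the lower bound above. For $G>F^2$, on the other hand, $(FG)^{1/3}<\sqrt{G}$, and in this regime the $\Omega(\sqrt{G})$ lower bound of Buhrman et al.~\cite{BuhDurHeiHoyMagSanWol01CCC}, which proceeds by reducing unstructured search to claw detection and already holds for every $M\ge 2$, matches Tani's upper bound; this is precisely what justifies the ``every $M\ge 2$'' clause in the last sentence of the theorem.

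\emph{Main obstacle.} There is no deep obstacle once Theorem~\ref{th:DegreeOfClawForSmallRange_intro} is in hand. The only point that requires care is the interface between the polynomial method and the indicator-variable encoding: a quantum query to $f$ (resp.\ $g$) on the standard oracle is simulated at constant-factor cost by queries to the $x_{ij}$'s (resp.\ $y_{kj}$'s), and after $T$ queries every acceptance amplitude is a polynomial of degree at most $T$ in these Boolean variables, so the acceptance probability has degree at most $2T$. Since Theorem~\ref{th:DegreeOfClawForSmallRange_intro} is already stated in terms of the approximate degree in exactly this Boolean encoding, the reduction is immediate and the asymptotics claimed in the statement follow.
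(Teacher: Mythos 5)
Your proposal is correct and follows essentially the same route as the paper: the paper's proof of this theorem is exactly the combination of Lemma~\ref{lm:polynomialmethod_main} (polynomial method for two oracles in the $x_{ij},y_{kj}$ encoding) with Theorem~\ref{th:DegreeOfClawForSmallRange_intro} for the lower bound, Tani's $O(\sqrt{G}+(FG)^{1/3})$-query claw-finding algorithm for the upper bound, and the observation that $\sqrt{G}\le (FG)^{1/3}$ iff $G\le F^2$ to give the case split, with the $\Omega(\sqrt{G})$ lower bound for all $M\ge 2$ coming from the unstructured-search/OR reduction. The only cosmetic imprecision is that verifying a candidate claw takes two queries (one to $f$, one to $g$) rather than one, but this has no effect on the asymptotics.
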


Ambainis, Balodis and Iraids~\cite{AmbBalIra21TQC} investigated upper and lower bounds for
the claw problem for input function pairs with much smaller ranges.
They proved the lower bound $\Omega\left(N^{1/2}M^{1/6}\right)$
when the two input functions has the same domain size $N$ and the range size $M$ with $2\le M< N$. Their proof is based on the lower bound
$\Omega(N^{2/3})$ on the number of quantum queries required to detect a claw for a given pair of functions in $[N]^{[N]}\times [N]^{[N]}$, which follows by reducing from element distinctness. However, if the domains of two input functions are different in size, their proof breaks down. 
By generalizing their proof using \autoref{th:QueryCompClawWithSmallRange_intro}, we obtain a lower bound for functions with different-size domains and very small common ranges.
\begin{theorem}\label{th:ComplexityClawWithSmallerRange_intro}
Let $F,G,M\in \Natural$ be such that $F\le G\le F^2$ and $M< F+G$.
Then, for every $M\in [2,F+G-1]$, the quantum query complexity of detecting the existence of a claw for a given function pair $(f,g)$ in $[M]^{[F]}\times [M]^{[G]}$ is lower-bounded by
\[
\Omega\left(\sqrt{G}+F^{1/3}G^{1/6}M^{1/6}\right)
=
\left\{
\begin{array}{lcl}
  \Omega(\sqrt{G})  & \text{if}& 2\le M< (G/F)^2,\\
  \Omega(F^{1/3}G^{1/6}M^{1/6})  & \text{if}& (G/F)^2\le M< F+G.
\end{array}
\right.
\]
\end{theorem}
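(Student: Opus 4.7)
The statement has two parts. The $\Omega(\sqrt G)$ half is valid for every $M \ge 2$ and is inherited from~\cite{BuhDurHeiHoyMagSanWol01CCC}: hard-coding $f$ to a constant function reduces unstructured search in $g$ to claw detection, and Grover's lower bound forces $\Omega(\sqrt G)$ queries. So the work is entirely in proving $\Omega(F^{1/3} G^{1/6} M^{1/6})$, which dominates $\sqrt G$ exactly when $(G/F)^2 \le M < F+G$.

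For that bound, the plan is to imitate the reduction of Ambainis, Balodis and Iraids~\cite{AmbBalIra21TQC}. Their proof handles the symmetric-domain case $F=G=N$ and derives $\Omega(N^{1/2} M^{1/6})$ by reducing to it the $\Omega(N^{2/3})$ bound for claw detection on range-$N$ inputs (which itself comes from element distinctness). The only structural obstacle to running their argument for $F \ne G$ was the absence of a matching base lower bound for unequal-domain, full-range claw; \autoref{th:QueryCompClawWithSmallRange_intro} now supplies exactly this, namely $\Omega((FG)^{1/3})$ queries for inputs in $[F+G]^{[F]} \times [F+G]^{[G]}$.

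Concretely, let $A$ be any $Q$-query bounded-error quantum claw detector on $[M]^{[F]} \times [M]^{[G]}$. Following~\cite{AmbBalIra21TQC}, I would use $A$ to build a claw detector for the range-$(F+G)$ problem as follows: partition $[F+G]$ into $r = \lceil (F+G)/M \rceil$ buckets of size at most $M$; for each bucket $B_i$, transform $(f',g') \in [F+G]^{[F]} \times [F+G]^{[G]}$ into $(f_i,g_i) \in [M]^{[F]} \times [M]^{[G]}$ by re-indexing $B_i$ into $[M-2]$ and sending every value outside $B_i$ to one of two disjoint sentinel symbols (one reserved for $f_i$, one for $g_i$), so that $(f_i,g_i)$ has a claw iff $(f',g')$ has one whose value lies in $B_i$; then combine the $r$ bucket subproblems using AmbBalIra's amplification scheme, whose total query cost is controlled as in their $F=G$ analysis. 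Invoking \autoref{th:QueryCompClawWithSmallRange_intro} on the resulting detector yields
\[
Q \;=\; \Omega\bigl((FG)^{1/3}\,(M/(F+G))^{1/6}\bigr),
\]
which under $F \le G$ simplifies to $\Omega(F^{1/3} G^{1/6} M^{1/6})$, as claimed. The case split in the statement is then the straightforward comparison of $F^{1/3} G^{1/6} M^{1/6}$ against $\sqrt G$, with equality at $M=(G/F)^2$.

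The main technical obstacle I expect is the faithful adaptation of AmbBalIra's amplification/composition step to unequal domain sizes: their construction is written for a common domain $[N]$, and the sentinel construction, the simulation of queries to each bucket-restricted function, and the propagation of success probabilities through the amplification layers all must be re-verified separately for $f'$ (domain $[F]$) and for $g'$ (domain $[G]$). Because the claw property is bi-symmetric and \autoref{th:QueryCompClawWithSmallRange_intro} already accommodates unequal domains, I expect this to reduce to careful bookkeeping rather than a conceptually new idea.
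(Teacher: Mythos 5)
Your framing---$\Omega(\sqrt G)$ from Grover, then a lower bound from \autoref{th:QueryCompClawWithSmallRange_intro} via an AmbBalIra-style reduction---is the right high-level plan, but the concrete reduction you describe is not the one that works, and it does not produce the exponent you write down. You propose partitioning the \emph{range} $[F+G]$ into $r \approx (F+G)/M$ buckets of size at most $M$ and running amplitude amplification over the buckets. Carried out faithfully, that gives $\sqrt{r}\cdot Q = \Omega\bigl((FG)^{1/3}\bigr)$ and hence $Q = \Omega\bigl((FG)^{1/3}\sqrt{M/(F+G)}\bigr)$, which for $F\le G$ is roughly $\Omega(F^{1/3}G^{-1/6}M^{1/2})$: the factor is $M^{1/2}$, not $M^{1/6}$, and it is strictly weaker than the theorem's $\Omega(F^{1/3}G^{1/6}M^{1/6})$ whenever $M$ is asymptotically smaller than $G$. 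The jump in your write-up to $\Omega\bigl((FG)^{1/3}(M/(F+G))^{1/6}\bigr)$ is therefore unsupported; a bucketing-plus-amplification scheme does not generate a one-sixth power.

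The paper (and, by the paper's account, AmbBalIra's original argument) works in the other direction by partitioning the \emph{domains}, not the range. For $k\mid F$ and $(F/k)\mid G$, the composed problem $\claw_{(k,\,kG/F)\to M}\circ\pSearch_{F/k\to M}$ embeds as a subproblem of $\claw_{(F,G)\to M+2}$: split $[F]$ into $k$ blocks and $[G]$ into $kG/F$ blocks, each of size $F/k$, so that a $\pSearch$ on each block recovers a shrunken claw instance with domain sizes $(k,kG/F)$; the $\ast$-symbols are absorbed by enlarging the range to $M+2$. The whole point of shrinking the domains is that, taking $M=k+kG/F$, the range becomes at least the sum of the new domain sizes, so \autoref{th:QueryCompClawWithSmallRange_intro} applies and yields $\Omega(\sqrt{kG/F}+(k^2G/F)^{1/3})$ for the outer claw. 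Multiplying by the $\Omega(\sqrt{F/k})$ bound for $\pSearch$ via the general-adversary composition theorem (\autoref{th:GeneralAdversaryBound} with \autoref{lm:CompAdvPsearch}) and then substituting $k=\Theta(MF/G)$ is exactly what produces the $M^{1/6}$. Your sketch never invokes $\pSearch$ or the adversary composition theorem, so the mechanism that actually generates the exponent is missing; the bookkeeping for dropping the divisibility constraints and interpolating over all $M\in[2,F+G-1]$ (handled in the paper by \autoref{cl:M_k}) is also absent, but that is secondary to the missing composition step.
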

Note that the condition of $F\le G\le F^2$ and
$M<F+G$ imposed in \autoref{th:ComplexityClawWithSmallerRange_intro}
handles the parameter range that \autoref{th:QueryCompClawWithSmallRange_intro} does not cover.
Setting $F=G=N$ in the theorem recovers the lower bound $\Omega\left(N^{1/2}M^{1/6}\right)$~\cite{AmbBalIra21TQC}.

To evaluate our bound in \autoref{th:ComplexityClawWithSmallerRange_intro},
we will compare it with the lower bound obtained by 
straightforwardly reducing claw detection for two functions with domains of the \emph{same} size and applying the lower bound given in 
~\cite{AmbBalIra21TQC}.
Let us write $\claw_{(a,b)\to c}$ to mean the decision problem of detecting the existence of a claw for given two functions 
in $[c]^{[a]}\times [c]^{[b]}$.
The reduction from $\claw_{(G,G)\to M}$ to $\claw_{(F,G)\to M}$
is as follows, where we assume for simplicity that $G$ is divisible by $F$, i.e., $F|G$ (it is easy to extend the reduction to the general case by padding dummy elements to one of the domains).
Let $(f,g)$ be a pair of input functions to $\claw_{(G,G)\to M}$.
We first partition the domain $[G]$ of $f$ into $G/F$ blocks of size $F$: $B_1,\dots, B_{G/F}$.
For randomly chosen $i\in [G/F]$, we run
a $Q$-query quantum algorithm for $\claw_{(F,G)\to M}$
on $f$ restricted to the domain $B_i$ and $g$,
and apply the quantum amplitude amplification~\cite{BraHoyMosTap02AMS,HoyMosWol03ICALP} to the whole algorithm.
The resulting quantum algorithm makes $O\left(\sqrt{G/F}\cdot Q\right)$ quantum queries.
Since the quantum query complexity of $\claw_{(G,G)\to M}$ for $M< G$ is $\Omega(G^{1/2}M^{1/6})$~\cite{AmbBalIra21TQC}, we have $Q=\Omega(F^{1/2}M^{1/6})$.
Together with trivial lower bound $\Omega(\sqrt{G})$,
we obtain the lower bound $\Omega(\sqrt{G}+F^{1/2}M^{1/6})$. This bound is weaker than our $\Omega(\sqrt{G}+F^{1/3}G^{1/6}M^{1/6})$. 
To see this, assume $G=F^{1+\epsilon}$ for $0<\epsilon \le 1$, satisfying $F<G\le F^2$. 
In this case, we have
\[
F^{\frac{1}{3}}G^{\frac{1}{6}}M^{\frac{1}{6}}=F^{\frac{1}{3}}F^{\frac{1}{6}(1+\epsilon)}M^{\frac{1}{6}}=F^{\frac{1}{2}+\frac{1}{6}\epsilon}M^{\frac{1}{6}}\in \omega\left(F^{\frac{1}{2}}M^{\frac{1}{6}}\right).
\]
For instance, if $F=G^{3/4}$ and $M=G^{3/4}$, then our bound
is $\Omega(\sqrt{G}+F^{1/3}G^{1/6}M^{1/6})=\Omega(G^{13/24})$
while the straightforwardly obtained bound is
$\Omega(\sqrt{G}+F^{1/2}M^{1/6})=\Omega(\sqrt{G})$.

\autoref{tab:claw} summarizes the quantum query complexity for claw detection
for functions $f\colon [F]\to [M]$ and $g\colon [G]\to [M]$, assuming $F\le G$ without loss of generality.
The upper bound $O(G^{1/2+\epsilon}M^{1/4})$ in the case of $M<F+G$ for $F\le G\le F^2$ follows 
from the upper bound~\cite{AmbBalIra21TQC} 
for $\claw_{(G,G)\to M}$
(i.e., claw detection for the same domain sizes) via a simple embedding the domain $[F]$ into $[G]$.

The most interesting open problem is finding the tight bound in the case of $M\le F+G$ for $F\le G\le F^2$.
More precisely, it follows from \autoref{tab:claw}  that the tight bound $\Theta((FG)^{1/3})$ for $M\ge F+G$ also holds for all $M=\Omega(G)$. On the other hand, if $M=\Theta(1)$, then $\Omega(\sqrt{G})$ is tight, since Grover's search algorithm achieves $O(\sqrt{G})$ queries. Thus, the tight bound is open when $M$ is asymptotically smaller than $G$ and larger than $2$.

\subsection{Technical Outline}
\paragraph{Approximate Degree of Multisymmetric Properties.}
The starting point of the proofs of Theorems~\ref{th:bisymmetricproperty_intro} and \ref{th:multisymmetricproperty_intro} is the proof idea for the single-symmetric properties introduced by Ambainis~\cite{Amb05ToC} (we hereafter refer to "single-symmetric" just as "symmetric").
Let $\Phi$ and $\Phi'$ be the symmetric Boolean functions defined in \autoref{subsec:polynomialmethod}.
Recall that $\Phi$ and $\Phi'$ are functions in $\set{x_{ij}\colon (i,j)\in [n]\times [m]}$ and
$\set{x_{ij}\colon (i,j)\in [n]\times [n]}$, respectively.
In addition, 
define $z_j$ as the number of preimages of $j$ via $f$, i.e., 
$z_j=\left|f^{-1}\left(j\right)\right|$.
Then, Ambainis's proof consists of two claims.
The first claim is that the following are equivalent:
\begin{enumerate}
    \item There exists a polynomial $Q$ of degree at most $d$ in $z_1,\ldots,z_m$ that $\epsilon$-approximates $\Phi$.
    \item There exists a polynomial $P$ of degree at most $d$ in $x_{11},\ldots,x_{nm}$ that $\epsilon$-approximates $\Phi$.
\end{enumerate}
The second claim is that the following are equivalent:
\begin{enumerate}
    \item [i] There exists a polynomial $Q\left(z_1,\ldots,z_m\right)$ of degree at most $d$ that $\epsilon$-approximates $\Phi$.
    \item [ii] There exists a polynomial $Q^\prime\left(z_1,\ldots,z_n\right)$ of degree at most  $d$ that $\epsilon$-approximates $\Phi'$. 
\end{enumerate}
These claims together imply that 
the minimum degree of a polynomial that approximates $\Phi$ is at most $d$, if and only if the minimum degree of a polynomial approximating $\Phi^\prime$ is at most $d$.

We want to generalize these two claims to the bisymmetric case (the $k$-symmetric case is a straightforward extension of the bisymmetric case).
Recall in the bisymmetric case that 
$f\colon [F]\to [M]$ is identified with $FM$ Boolean variables $\{x_{ij}:i\in\left[F\right],j\in\left[M\right]\}$, and
$g\colon [G]\to [M]$ is identified with $GM$ Boolean variables 
$\{y_{kj}:k\in\left[G\right],j\in\left[M\right]\}$. 
In addition, we
define $z_j$ as the number of preimages of $j$ via $f$, i.e., $z_j=\left|f^{-1}\left(j\right)\right|$, and
define $w_j$ as the number of preimages of $j$ via $g$, i.e., $w_j=\left|g^{-1}\left(j\right)\right|$.
Let $\Phi$ and $\Phi'$ be the bisymmetric Boolean function 
in $\set{x_{ij}}$ and $\set{y_{kj}}$ as defined in 
\autoref{sec:outcontribution}.
The first claim in the bisymmetric case is that the following are equivalent:
\begin{enumerate}
    \item There exists a polynomial $Q$ of degree at most $d$ in $z_1,\ldots,z_M,w_1,\dots, w_M$ that $\epsilon$-approximates $\Phi$.
    \item There exists a polynomial $P$ of degree at most $d$ in $x_{11},\ldots,x_{FM},y_{11},\dots,y_{GM}$ that $\epsilon$-approximates $\Phi$.
\end{enumerate}
The second claim in the symmetric case is that the following are equivalent
for every $M'\in [F+G,M]$:
\begin{enumerate}
    \item [i] There exists a polynomial $Q\left(z_1,\ldots,z_M,w_1,\dots, w_M\right)$ of degree at most $d$ that $\epsilon$-approximates $\Phi$.
    \item [ii] There exists a polynomial $Q^\prime\left(z_1,\ldots,z_{M'},w_1,\dots,w_{M'}\right)$ of degree at most  $d$ that $\epsilon$-approximates $\Phi'$. 
\end{enumerate}

For the first claim in the symmetric case, it is easy to show that item $1$ implies item $2$ by using the trivial relation $z_j=\sum_{i}x_{ij}$. This also works in the bisymmetric case.
To show the converse in the symmetric case, we consider the average $\Ex_{\pi}[P]$ of $P$ over 
all permutations $\pi$ over the domains of $f$, since $\Ex_{\pi}[P]$ also approximates $\Phi$
due to the symmetry of $\Phi$ over the domains. 
Since we are interested in the degree of the resulting polynomial, it suffices to examine the average of each product term $x_{i_1j_1}\cdots x_{i_kj_k}$ in $P$, and it turns out that
$\Ex_{\pi}\left[x_{i_1j_1}\cdots x_{i_kj_k}\right]$ is a polynomial in $z_j$'s of degree at most $d$. In the bisymmetric case, we take the average 
$\Ex_{\pi_F\times\pi_G}[P]$ of $P$ over 
all permutations $\pi_F$ over the domains of $f$ and
all permutations $\pi_G$ over the domains of $g$.
Note that each product term in $P$ is of the form of 
$x_{i_1 j_1}\cdots x_{i_mj_m}y_{k_1j_1}\cdots y_{k_lj_l}$.
To obtain the average of each term, we apply permutations $\pi_F$ and $\pi_G$ independently, and thus it holds that
\[
\Ex_{\pi_F\times \pi_G}\left[ x_{i_1 j_1}\cdots x_{i_mj_m}y_{k_1j_1}\cdots y_{k_lj_l}\right]
=\Ex_{\pi_F}\left[ x_{i_1 j_1}\cdots x_{i_mj_m}\right]\cdot
\Ex_{\pi_G}\left[ y_{k_1 j_1}\cdots y_{k_lj_l}\right].
\]
It turns out 
$\Ex_{\pi_F}\left[ x_{i_1 j_1}\cdots x_{i_mj_m}\right]$ and 
$\Ex_{\pi_G}\left[ y_{k_1 j_1}\cdots y_{k_lj_l}\right]$
are polynomials in $z_j's$ and $w_j$'s, respectively, of degree $d$
by a similar analysis to the symmetric case.
Consequently, we can obtain the bisymmetric version of the first claim.

For the second claim, it is more involved to
generalize the symmetric case to the bisymmetric case since the permutation over the range is common to the two functions.
It is easy to see that item (i) implies (ii) in the symmetric case since we can obtain $Q'$
by setting $z_{n+1}= \cdots =z_{m}=0$ in $Q$. 
This also works in the bisymmetric case.
For the converse, let us first review the symmetric case.
Since every $z\deq (z_1,\dots ,z_m)$ corresponding to a function $f:\left[n\right]\rightarrow\left[m\right]$ has at most $n$ nonzero coordinates, 
there exists a permutation $\sigma$ over $[m]$ that moves all non-zero coordinates to the first $n$ coordinates, that is,  
$\sigma z=\left(z_{\sigma^{-1}\left(1\right)},\ldots,z_{\sigma^{-1}\left(m\right)}\right)$ satisfies that 
$z_{\sigma^{-1}\left(j\right)}=0$ for every $\sigma^{-1}\left(j\right)\geq n+1$.
This action of $\sigma$ does not change the value of $\Phi$: $\Phi\left(f\right)=\Phi\left(\sigma f\right)$.  
Since $\sigma f$ can be regarded as a function in $\left[n\right]^{\left[n\right]}$, 
it must hold that $\Phi\left(\sigma f\right)=\Phi^\prime\left(\sigma f\right)$.
By the assumption, $\Phi^\prime\left(\sigma f\right)$ 
is approximated by $Q^\prime\left(\sigma z\mid_n\right)$, 
where $\sigma z\mid_n$
is the first $n$ variables of $\sigma z$, i.e., $z_{\sigma^{-1}\left(1\right)},\ldots,z_{\sigma^{-1}\left(n\right)}$. 
We then define from $Q'$ a symmetric polynomial $Q$ in $m$ variables, 
$z_1,\ldots,z_m$, 
such that, for $z$ corresponding to an arbitrary $f$,
\[
Q\left(z\right)=Q\left(\sigma z\right)=Q\left(z_{\sigma^{-1}\left(1\right)},\ldots,z_{\sigma^{-1}\left(n\right)},0,\ldots,0\right)=Q^\prime\left(z_{\sigma^{-1}\left(1\right)},\ldots,z_{\sigma^{-1}\left(n\right)}\right)=Q^\prime\left(\sigma z|_n\right),
\]
where $\sigma$ 
is a particular permutation defined above depending on $f$. 
This implies that 
$Q\left(z\right)$ approximates 
$\Phi\left(f\right)$, 
since $Q^\prime\left(\sigma z\mid_n\right)$ approximates 
$\Phi^\prime\left(\sigma f\right)=\Phi\left(\sigma f\right)=\Phi\left(f\right)$. 
Since we can assume without loss of generality that $Q'$ is a symmetric polynomial,
the fundamental theorem of symmetric polynomials implies that $Q'$ is represented as
a polynomial in the elementary polynomials over $z_1,\dots, z_n$. We then replace
these elementary polynomials in $Q'$ with the corresponding polynomials over
$z_1,\dots, z_m$ to obtain $Q$. 

In the bisymmetric case, $Q$ is a polynomial in two sets of variables:
$z_1,\dots,z_M$ and $w_1,\dots, w_M$;
$Q'$ is a polynomial in two sets of variables
$z_1,\dots,z_{M'}$ and $w_1,\dots, w_{M'}$ such that $F+G\le M'<M$,
since 
there exists a permutation $\sigma$ over $[M]$ that moves all non-zero coordinates 
in $z_1,\dots,z_M$ and $w_1,\dots, w_M$ 
to the first $M'$ coordinates, respectively.
To obtain $Q$ from $Q'$, we utilize a certain symmetry of $Q'$.
However, $Q'$ is \emph{not} invariant under the action of permutations over $[2M']$
to $2M'$ variables $z_1,\dots,z_{M'},w_1,\dots, w_{M'}$,
but it is invariant under the action of $\tau\times \tau$
for permutation $\tau$ over $[M']$ such that 
$(z_1,\dots,z_M)\mapsto
\left(z_{\tau^{-1}\left(1\right)},\ldots,z_{\tau^{-1}\left(M'\right)}\right)$ 
and 
$(w_1,\dots, w_M)
\mapsto
\left(w_{\tau^{-1}\left(1\right)},\ldots,w_{\tau^{-1}\left(M'\right)}\right)$.
Here, the notion of \emph{multisymmetric} polynomials comes in.
Roughly speaking, multisymmetric polynomials are those invariant under the action of
permutations
to the set of vectors of variables, instead of variables. In our case, let us consider
the set of vectors $(z_1,w_1),\dots, (z_{M'},w_{M'})$, each of which consists of two variables.
Then, $Q'$ is invariant under the action of the permutation $\tau$ on this set of vectors.
Now that we have entered the world of multisymmetric polynomials, we can use the fundamental theorem of those polynomials. More concretely, $Q'$
is represented as a polynomial in power-sum polynomials (defined in \autoref{subsec:Multisymmetric Polynomials}) in  $(z_1,w_1),\dots, (z_{M'},w_{M'})$.
Then, we can obtain $Q$ by replacing each occurrence of power-sum polynomials in $Q'$ with 
the corresponding polynomials in  $(z_1,w_1),\dots, (z_{M},w_{M})$.
This approach is easily generalized to the $k$-symmetric case. We just consider 
the symmetry under the action of permutations to the set of vectors $\left(z_1^{(1)},\dots, z_1^{(k)}\right),\dots,\left(z_{M'}^{(1)},\dots, z_{M'}^{(k)}\right)$, 
where $z_i^{(\ell)}=\Abs{f_\ell(j)}$.

\paragraph{Application to Claw Problem}
To provide a lower bound for the claw problem with a small range
using \autoref{th:bisymmetricproperty_intro},
we need a lower bound on the minimum degree
of a polynomial that approximates the claw property 
for some common range size of input functions.
For this, we use Zhang's reduction~\cite{Zha05COCOON}
from the collision problem to the claw problem.
Suppose that there exists a polynomial $P'$ of degree $d$ that 0.1-approximates the claw property for functions 
 $f\in [FG]^{[F]}$ and $g\in [FG]^{[G]}$ such that $F\le G$. Based on $P'$,
we will then construct a polynomial $P$ of degree $d$ that (41/86)-approximates the collision property for functions 
$h\in [M]^{[M]}$
for $M=FG$
with the promise that $h$
 is either one-to-one or two-to-one.
Here we mean by the collision property the Boolean function $\Phi$ which is $\true$ if and only if the input function $h$ is two-to-one.
Since the minimum degree of the latter polynomial is $\Omega(M^{1/3})$~\cite{AarShi04JACM,Amb05ToC}, we obtain $d=\Omega((FG)^{1/3})$. 
\autoref{th:bisymmetricproperty_intro} with this bound implies \autoref{th:DegreeOfClawForSmallRange_intro},
from which we obtain \autoref{th:QueryCompClawWithSmallRange_intro} with the polynomial method.

To obtain polynomial $P$ from $P'$,
let $P'_{ST}$ 
for disjoint subsets $S,T$ of domain $[M]$ of $h$
be a polynomial of degree $d$
(in variables $x_{ij}$ for $i\in S\sqcup T$ and $j\in [M]$)
that $\epsilon$-approximates the claw property on function pairs $[M]^S\times [M]^T$ to $\set{0,1}$.
Then, define a polynomial $P$ as the average of $P'_{ST}$ over uniformly random disjoint subsets $S,T$ of size $F$ and $G$, respectively:
\begin{equation*}
    P\deq \Ex_{S,T\subset [M]\colon S\cap T=\emptyset}[P'_{ST}].
\end{equation*}
Thus, $P$ is a polynomial of degree at most $d$ in variables $x_{ij}$ for $i,j\in [M]$. 

If $h$ is one-to-one,
then $h|_S$ and $h|_T$ obviously have no collision for every $S$ and $T$,
where $h|_S$ and $h|_T$ are the restriction of $h$ to the the subsets $S$ and $T$, respectively, of the domain. Thus, $P$ is close to $0$.
If $h$ is two-to-one, $h|_S$ is injective with high probability
and, in this case, $T$ intersects the preimage of $h|_S(S)$ with high probability.
This follows from the birthday paradox, since $|S|=F<<\sqrt{M}=\sqrt{FG}$ and $|T|=G>>\sqrt{M}$.
Therefore, $P$ is far from $0$. By scaling and shifting $P$, we have a polynomial that $O(1)$-approximates the collision property.

As an application of \autoref{th:QueryCompClawWithSmallRange_intro}, we obtain \autoref{th:ComplexityClawWithSmallerRange_intro} by generalizing the reduction 
from the claw problem for two functions with the \emph{same} domain size~\cite{AmbBalIra21TQC}
to the reduction from the problem for two functions with the \emph{different} domain size:
$f\colon [F]\to [F+G]$ and $g\colon [G]\to [F+G]$ with $F\le G\le F^2$
(the case of $G>F^2$ is already covered by \autoref{th:QueryCompClawWithSmallRange_intro}).
More precisely, we reduce the composition of 
a certain problem (called $\pSearch$~\cite{BraHoyKalKapLapSal19JC}) and the claw problem 
with the function range stated in \autoref{th:QueryCompClawWithSmallRange_intro}
to the claw problem in question, 
and then use the composition theorem of the general adversary bound~\cite{HoyLeeSpa07STOC}.
That is, for every $k,F,G\in \Natural$ with $k|F$ and $G|(F/k)$,
we show the composition $\claw_{(F,G)\to M+2}$ is reducible from the problem
$\claw_{(k,kG/F)\to M}\circ \pSearch_{F/k\to M}$,
where
$\claw_{(a,b)\to c}$ denotes the decision problem of detecting the existence of a claw for given functions $(\phi,\psi)\in  [c]^{[a]}\times [c]^{[b]}$. 
Since $\pSearch_{N\to M}$ is an variant of the unstructured search 
over $N$ elements, the quantum query complexity of $\pSearch_{F/k\to M}$ is $\Omega (\sqrt{F/k})$~\cite{BraHoyKalKapLapSal19JC}. 
By setting $M=k+kG/F$, 
\autoref{th:QueryCompClawWithSmallRange_intro} provides a lower bound on 
the quantum query complexity of $\claw_{(k,kG/F)\to M}$ as
\begin{equation*}
Q_{1/3}\left(\claw_{(k,kG/F)\to M}\right)=\Omega\left(\sqrt{k\frac{G}{F}}+\left(k^2\frac{G}{F}\right)^{1/3}\right).
\end{equation*}
By applying the composition theorem of the general adversary bound,
we have 
\begin{align*}
Q_{1/3}\left(\claw_{(F,G)\to M+2}\right)
&\ge Q_{1/3}\left(\claw_{(k,kG/F)\to M}\circ \pSearch_{F/k\to M}\right)\\
&=\Omega\left(\left[\sqrt{k\frac{G}{F}}+\left(k^2\frac{G}{F}\right)^{1/3}\right]\sqrt{\frac{F}{k}} \right)=\Omega\left( \sqrt{G}+k^{1/6}F^{1/6}G^{1/3} \right).
\end{align*}
We now interpret this in terms of $M$ by substituting $k=M/(G/F+1)=\Theta(MF/G)$ as
\begin{equation*}
Q_{1/3}\left(\claw_{(F,G)\to M+2}\right)
=\Omega\left(\sqrt{G}+\left( \frac{MF}{G}\right)^{1/6} F^{1/6}G^{1/3}\right)
=\Omega\left(\sqrt{G}+F^{1/3}G^{1/6}M^{1/6}
\right)
\end{equation*}
Notice that this holds for very special combinations of $k,F,G$ such that $k|F$ and $G|(F/k)$
and $M=k+kG/F$. However, we can remove this constraint and obtain \autoref{th:ComplexityClawWithSmallerRange_intro}.
\subsection{Organization}
\autoref{sec:Preliminaries} provides notations and necessary notions. \autoref{sec:DegreeOfMultisymmetricProperties} proves \autoref{th:bisymmetricproperty_intro}
and \autoref{th:multisymmetricproperty_intro}.
\autoref{sec:DegreeOfClawProperty} proves 
\autoref{th:DegreeOfClawForSmallRange_intro} 
and
\autoref{th:QueryCompClawWithSmallRange_intro}.
\autoref{sec:ClawProblemWithMuchSmallerRange} proves \autoref{th:ComplexityClawWithSmallerRange_intro}.

\section{Preliminaries}\label{sec:Preliminaries}
\subsection{Notations}\label{subsec:notations}
Let $\Natural$ be the set of natural numbers, $\Integer_+$ the set of non-negative integers, $\Rational$ the set of rational numbers, $\Complex$ the set of complex numbers, and $[n]$  the set $\{1,\dots, n\}$ for every $n\in \Natural$.
If two sets $S$ and $T$ are disjoint, we may write their union as $S\sqcup T$ instead of $S\cup T$ to emphasize their disjointness.

For $n\in \Natural$, let $S_n$ be the group of permutations over $[n]$.
For a commutative ring $R$ and $n\in \Natural$, $R[x_1,\dots ,x_n]$ denotes the ring of polynomials in $x_1,\dots ,x_n$ over $R$. 
For a polynomial $p\in R[x_1,\dots ,x_n]$, 
we write the degree of $p$ 
(i.e., the maximum over all product terms in $p$
of the number of variables in a single product term)
as $\deg(p)$.
We say a polynomial is \emph{multilinear} if the polynomial is linear in any one of the variables
when regarding the other variables as constants.
For any $n\times m$ matrix $\Omega=(\omega_{ij})$ over $\Integer$, $\abs{\Omega}_1$ denotes the sum of the absolute values of all elements in $\Omega$, i.e., $\abs{\Omega}_1\equiv\sum_{i=1}^n\sum_{j=1}^m \abs{\omega_{ij}}$.

For any ordered sets $\calD$ and $\calR$,
we identify a function $f\colon \calD\to \calR$ as the sequence of 
$(f(d_1),\dots, f(d_{|\calD|}))$, where $d_k$ is the $k$th element in $\calD$ for each $k\in [|\calD|]$.
We thus represent by $\calR^{\calD}$ the set of functions from $\calD$ to $\calR$.
We may abbreviate $\calR^{[D]}$ as $\calR^{D}$, when $\calD=[D]$.
For instance, 
$[M]^{[F]}$ (or $[M]^F$) for $F,M\in\Natural$ denotes the set of functions from $\left[F\right]$ to $\left[M\right]$.

For two functions $(f,g)\in [M]^{[F]}\times [M]^{[G]}$, where $F,G,M\in \Natural$, we say that a pair $(x,y)\in [F]\times [G]$ is a \emph{claw} if $f(x)=g(y)$.
These definitions are naturally generalized to the case of more than two functions:
Let $M\in \Natural$ and $F_\ell\in \Natural$ for each $\ell\in [k]$, where $k\in \Natural$ is fixed. 
Then, for functions $f_\ell\in [M]^{[F_\ell]}$ for $\ell\in [k]$, we say that $k$-tuple
$(x_1,\dots, x_k)\in [F_1]\times \dots \times [F_k]$ is a $k$-\emph{claw} if $f_i(x_i)=f_j(x_j)$ for all $i,j\in [k]$.

For a function $f$, 
$\domain(f)$ denotes the domain of $f$,
and $f(S)$ denotes the image of a subset $S\subseteq \domain(f)$ via $f$. 
We can thus write the image of $f$ as $f(\domain(f))$. 

Let $\calD\subseteq \set{0,1}^n$.
For a partial Boolean function $\Phi\colon \calD\to \set{0,1}$ and a non-negative real $\epsilon\in [0,1/2)$, we say that a polynomial $p\colon \Real^n\to \Real$ 
$\epsilon$-approximates $\Phi$ if $p$ satisfies 
$|\Phi(x)-p(x)|\le \epsilon$ and $p(x)\in [0,1]$
for every $x\in \calD$. 
We say the $\epsilon$-approximate degree of $\Phi$, denotedy by
$\adeg_{\epsilon}(\Phi)$,
to mean the minimum degree of a polynomial that 
$\epsilon$-approximates $\Phi$.

The quantum query complexity of a quantum algorithm $\calA$ for $\Phi$ with error $\epsilon$, denoted by $Q^{\calA}_{\epsilon}(\Phi)$ is the number 
of quantum queries
required by $\calA$ to compute $\Phi(x)$ with error probability at most $\epsilon$, when the worst-case input $x\in \calD$ is given as oracle.
The $\epsilon$-error quantum query complexity of $\Phi$, 
$Q_{\epsilon}(\Phi)$,
is the minimum of $Q^{\calA}_{\epsilon}(\Phi)$
over all quantum algorithms $\calA$.
We assume that readers have basic knowledge of quantum computation.

\subsection{Polynomial Method for Quantum Lower Bound}

The polynomial method was introduced by Beals et al.~\cite{BeaBuhCleMosWol01JACM} 
to obtain lower bounds on the number of quantum queries 
required to compute a Boolean function 
on an input bit-string given as an oracle.
It is not difficult to generalize the method to the case where the input is a string of alphabets~\cite{Aar02STOC}.
In the following, we consider the case where the input
consists of two alphabet strings corresponding to two functions.
However, it is not difficult to see similar facts hold for three or more functions.

The following lemma says that, after $q$ quantum queries, any quantum algorithm is in a state where the amplitude of each computational basis state is represented by a polynomial of degree at most $q$ in $x_{ij}$ and $y_{kj}$.
\begin{lemma}\label{lm:oracles}
Let $F,G,M$ be in $\Natural$.
For given oracles $O_f$ and $O_g$ for functions 
$f\in [M]^{[F]}$ and $g\in [M]^{[G]}$,
assume, without loss of generality, that any quantum algorithm with $q$ queries applies 
\[ U_qO_{f/g}U_{q-1}\cdots U_1O_{f/g}U_0\]
to a fixed initial state, say, the all-zero state,
where $U_q,\dots,U_0$ are unitary operators independent of the oracles, and $O_{f/g}$ acts as
\[
\ket{0,i}\ket{b}\ket{z}\rightarrow\ket{0,i}\ket{b+f(i)}\ket{z},\mbox{ and } 
\ket{1,i}\ket{b}\ket{z}\rightarrow\ket{1,i}\ket{b+g(i)}\ket{z},
\]
where $+$ is the addition modulo $M$, or alternatively,
\[
\ket{0,i}\ket{b}\ket{z}\rightarrow\ket{0,i}\ket{b\oplus f(i)}\ket{z},\mbox{ and } 
\ket{1,i}\ket{b}\ket{z}\rightarrow\ket{1,i}\ket{b\oplus g(i)}\ket{z},
\]
where $\oplus$ is bitwise XOR in binary expression. 
At any step, the state of the algorithm is of the form
\[ \sum_{s,i,b,z}{\alpha_{s,i,b,z}\left|s,i\right\rangle\left|b\right\rangle\left|z\right\rangle},\]
where the first and second registers 
consist of $1+\max\set{\ceil{\log F},\ceil{\log G}}$ qubits and $\ceil{\log{M}}$ qubits, respectively
(the last register is a working register that $U_i$ acts on but $O_{f/g}$ does not). 
Define $FM$ variables $x_{ij}\in\{0,1\}$ such that $x_{ij}=1$ if and only if $f\left(i\right)=j$, and $GM$ variables $y_{kj}\in\{0,1\}$ such that $y_{kj}=1$ if and only if $g(k)=j$.  
Then, in the final state, each $\alpha_{s,i,b,z}$ can be represented as a polynomial over $FM+GM$ variables $x_{ij},y_{kj}$ of degree at most $q$.
\end{lemma}
The consequence of the polynomial method is as follows.
\begin{lemma}[~\cite{BeaBuhCleMosWol01JACM,Aar02STOC}]
\label{lm:polynomialmethod_main}
Let $F,G,M$ be in $\Natural$, and
let $\Phi$
be a partial Boolean function over the set of pairs of functions in $\calD\subseteq [M]^{[F]} \times [M]^{[G]}$. 
    For given oracles $O_f$ and $O_g$ for functions 
$(f,g)\in \calD$,
suppose that there exists a quantum algorithm that computes $\Phi(f,g)$ with error $\epsilon$ using $q$ queries. 
Then, there is a polynomial $P$ of degree at most $2q$ in 
variables $x_{11},\dots, x_{FM},y_{11},\dots,y_{GM}$ such that $P$ $\epsilon$-approximates $\Phi$,
where $x_{ij}$ and $y_{kj}$ are the variables defined for $f$ and $g$, respectively, in \autoref{lm:oracles}.
\end{lemma}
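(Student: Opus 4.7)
The plan is to apply \autoref{lm:oracles} and read off the acceptance probability of the algorithm as a polynomial of degree at most $2q$ in the variables $x_{ij}, y_{kj}$. First, I would note that on inputs $(f,g)\in \calD$ the Boolean variables satisfy the column-sum constraints $\sum_j x_{ij}=1$ for every $i\in [F]$ and $\sum_j y_{kj}=1$ for every $k\in [G]$, so the target polynomial $P$ only has to approximate $\Phi$ on the subset of $\set{0,1}^{FM+GM}$ that encodes valid function pairs in $\calD$; this frees me from worrying about its behavior off the ``function slice''.

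Next, I would invoke \autoref{lm:oracles} to write the final state as $\sum_{s,i,b,z}\alpha_{s,i,b,z}\ket{s,i}\ket{b}\ket{z}$, with each amplitude $\alpha_{s,i,b,z}$ a polynomial of degree at most $q$ in the $x_{ij}, y_{kj}$ (with complex coefficients). Without loss of generality, I absorb the final measurement into $U_q$, so that the algorithm outputs $1$ exactly when the computational-basis outcome lies in some fixed set $A$ of accepting basis states. The acceptance probability is then
\[
P(x,y) \;=\; \sum_{(s,i,b,z)\in A} |\alpha_{s,i,b,z}|^2 \;=\; \sum_{(s,i,b,z)\in A} \alpha_{s,i,b,z}\,\overline{\alpha_{s,i,b,z}}.
\]
Because the variables $x_{ij}, y_{kj}$ are real, conjugation only affects coefficients and preserves the polynomial degree, so $P$ is a real polynomial of total degree at most $2q$ in the $FM+GM$ variables. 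For every valid input $(f,g)\in \calD$, the value $P(x,y)$ is a probability and thus lies in $[0,1]$, and by the correctness hypothesis $|P(x,y)-\Phi(f,g)|\le \epsilon$. This is exactly the condition that $P$ $\epsilon$-approximates $\Phi$.

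The main concern is bookkeeping rather than a genuine obstacle: I must verify that taking complex conjugates of amplitude polynomials does not inflate their degrees, and that folding the final measurement into $U_q$ does not introduce an extra oracle call. Both points are standard, but they are precisely what keeps the degree bound at $2q$ instead of $2q+O(1)$. The argument also transparently extends beyond two functions: for $k$ input functions one enlarges the first register of the oracle in \autoref{lm:oracles} to select among $f_1,\dots,f_k$, and the same squared-amplitude computation yields an approximating polynomial in the union of the variable sets $\{x^{(\ell)}_{ij}\}$ of total degree at most $2q$.
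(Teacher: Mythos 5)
Your proposal is correct and follows essentially the same approach as the paper's appendix proof: invoke \autoref{lm:oracles} to express each final-state amplitude as a polynomial of degree at most $q$ in the $x_{ij},y_{kj}$, then read off the acceptance probability as a sum of squared moduli of amplitudes over a fixed set of accepting basis states, yielding a real polynomial of degree at most $2q$ that $\epsilon$-approximates $\Phi$. Your formulation, in which the accepting set $A$ is fixed (basis states whose output bit is $1$) and $P$ is the probability of outputting $1$, is in fact slightly cleaner than the paper's, which defines $A$ in a way that implicitly depends on $\Phi(f,g)$ and thus would not literally give a single polynomial satisfying $|P-\Phi|\le\epsilon$ in the $\Phi(f,g)=0$ case.
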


The following lemma says that if $\delta$ and $\epsilon$ are constants, then $\adeg_\epsilon(\Phi)$ and $\adeg_\delta (\Phi)$ are linearly related.
\begin{lemma}[\mbox{\cite{BunTha21SIGACT}}]\label{lm:deg_linearly_related}
    For a partial Boolean function $\Phi$ and $0<\epsilon<\delta<1/2 $, it holds that
    \begin{equation*}
    \adeg_\epsilon(\Phi)\le \adeg_\delta (\Phi)\cdot O\left( \frac{\log(1/\epsilon)}{(1/2-\delta)^2} \right).
    \end{equation*}
    In particular, if $\delta$ and $\epsilon$ are constants, then $\adeg_\epsilon(\Phi)$ and $\adeg_\delta (\Phi)$ are linearly related.
\end{lemma}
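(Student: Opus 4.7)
The plan is to prove the inequality by the standard polynomial amplification (error reduction) technique. Let $d=\adeg_\delta(\Phi)$ and let $p$ be a polynomial of degree $d$ that $\delta$-approximates $\Phi$: on every $x$ in the domain of $\Phi$ we have $p(x)\in[0,1]$ and $|p(x)-\Phi(x)|\le\delta$. I would construct a univariate polynomial $A\colon[0,1]\to[0,1]$ of degree $D=O\!\left(\frac{\log(1/\epsilon)}{(1/2-\delta)^2}\right)$ with the property that $A(t)\in[0,\epsilon]$ for $t\in[0,\delta]$ and $A(t)\in[1-\epsilon,1]$ for $t\in[1-\delta,1]$, while $A(t)\in[0,1]$ on all of $[0,1]$. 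Then $q:=A\circ p$ is a polynomial of degree $Dd$ in the original variables, and a case analysis on $\Phi(x)\in\{0,1\}$ shows $|q(x)-\Phi(x)|\le\epsilon$ for every $x$ in the domain, which gives $\adeg_\epsilon(\Phi)\le Dd$, exactly the claimed bound.

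The core technical step is the construction of the amplifier $A$, which I would obtain in two stages via suitably shifted and scaled Chebyshev polynomials of the first kind. The first stage $A_1$ shrinks the error from $\delta$ to a fixed constant (say $1/4$): using the fact that $T_n$ is bounded by $1$ on $[-1,1]$ and grows like $\cosh(n\sqrt{2\eta})$ just outside, together with an affine renormalisation that places $[\delta,1-\delta]$ inside $[-1,1]$ and $\{0,1\}$ just outside at distance proportional to $(1/2-\delta)^2$ after symmetrisation, one obtains an $A_1$ of degree $O(1/(1/2-\delta)^2)$ mapping $[0,\delta]\to[0,1/4]$, $[1-\delta,1]\to[3/4,1]$, and $[0,1]\to[0,1]$. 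The second stage $A_2$ is a standard constant-to-$\epsilon$ Chebyshev amplifier of degree $O(\log(1/\epsilon))$ that maps $[0,1/4]$ into $[0,\epsilon]$ and $[3/4,1]$ into $[1-\epsilon,1]$ while preserving $[0,1]$. Composing $A:=A_2\circ A_1$ gives total degree $O(\log(1/\epsilon)/(1/2-\delta)^2)$, and the range constraint is preserved at every stage.

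The hard part will be pinning down the precise exponent of $(1/2-\delta)$: a naive single-shot Chebyshev argument separates $\delta$ from $1-\delta$ only with degree $O(\log(1/\epsilon)/\sqrt{1/2-\delta})$ or $O(\log(1/\epsilon)/(1/2-\delta))$ depending on how the Chebyshev growth estimate is invoked, and the quadratic dependence in the statement reflects the particular two-stage decomposition above (where the $(1/2-\delta)^{-2}$ comes from the constant-gap amplification step and $\log(1/\epsilon)$ from the subsequent constant-to-$\epsilon$ boost, yielding a product rather than a sum). Since the statement is cited directly from Bun–Thaler, it is enough to invoke their amplifier verbatim; the conceptual content of the proof is then just the one-line composition $A\circ p$ and the multiplicativity of degree under univariate composition. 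The ``in particular'' assertion follows by fixing $\epsilon,\delta\in(0,1/2)$ to be constants so that the multiplicative factor $O(\log(1/\epsilon)/(1/2-\delta)^2)$ is itself $O(1)$, yielding the linear relation between $\adeg_\epsilon(\Phi)$ and $\adeg_\delta(\Phi)$.
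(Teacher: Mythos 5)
Your high-level plan is correct and matches the paper's at the structural level: take a degree-$d$ polynomial $p$ that $\delta$-approximates $\Phi$, compose it with a univariate amplifier $A\colon[0,1]\to[0,1]$ of degree $D$ mapping $[0,\delta]$ into $[0,\epsilon]$ and $[1-\delta,1]$ into $[1-\epsilon,1]$, and conclude $\adeg_\epsilon(\Phi)\le Dd$. Where you diverge from the paper is in the choice of amplifier. The paper constructs $A$ probabilistically, as the ``majority of $\ell$ coin flips'' polynomial
\[
A(p)=\sum_{k\ge \ell/2}\binom{\ell}{k}p^{k}(1-p)^{\ell-k},
\]
which is manifestly a polynomial in $p$ of degree $\ell$, automatically lands in $[0,1]$ for $p\in[0,1]$ (it is a sum of nonnegative terms bounded by the full binomial sum $1$), and whose approximation error is read off directly from the Chernoff bound as $\exp(-2\ell(1/2-\delta)^{2})$, yielding $\ell=O\bigl(\log(1/\epsilon)/(1/2-\delta)^{2}\bigr)$ with no further calculation. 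You instead propose a two-stage Chebyshev construction. That route is also valid, but two caveats are worth flagging. First, you never actually verify the range constraint $A([0,1])\subseteq[0,1]$ for the Chebyshev amplifier; Chebyshev polynomials are bounded only on the interval mapped to $[-1,1]$ and grow rapidly outside it, so keeping the composed map inside $[0,1]$ on all of $[0,1]$ (rather than just on the promise intervals) requires an explicit clipping or symmetrisation step that you gesture at but do not carry out. Second, your stated degree accounting for the first Chebyshev stage appears miscalibrated: placing $\{0,1\}$ at affine distance $\eta\propto(1/2-\delta)$ outside the interval $[\delta,1-\delta]$ and using $T_n(1+\eta)\approx\cosh(n\sqrt{2\eta})$ actually gives degree $O\bigl(1/\sqrt{1/2-\delta}\bigr)$ for a constant-gap first stage, which is \emph{stronger} than $O\bigl(1/(1/2-\delta)^{2}\bigr)$, not weaker; the Chebyshev route, done carefully, overshoots the stated bound, and the $(1/2-\delta)^{-2}$ in the lemma is the artifact of the paper's simpler Chernoff analysis, not an inherent bottleneck. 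So your approach is correct in spirit and would in fact prove something slightly tighter, but the paper's binomial-majority amplifier gets the stated bound with much less bookkeeping and an immediate range check, which is why it is preferred here.
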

For completeness, the proofs of the above lemmas are given in \autoref{appdx:polynomialmethod}.

\subsection{Multisymmetric Properties}
For $F,G,M\in \Natural$, let $\Phi$
be a partial Boolean function over the set of pairs of functions in $\calD\subseteq [M]^{[F]}\times [M]^{[G]}$. 
Intuitively, we can think of $\Phi$ as an indicator function deciding whether 
a pair of functions, 
$(f,g)\in \calD$, 
have a certain property.
For instance, we say that $\Phi$ is the \emph{claw} property, 
if $\Phi\left(f,g\right)=1$ if and only if there exists a claw for $(f,g)\in [M]^{[F]}\times [M]^{[G]}$.
We say that $\Phi$ is bisymmetric, if $\Phi$ is invariant 
and $\calD$ is closed
under the permutations over $\domain(f)=[F]$ and $\domain(g)=[G]$, respectively, and the permutation over the common range $[M]$ of $f$ and $g$, i.e., $\left(\sigma f\pi_F,\sigma g\pi_G \right) \in \calD$ if $(f,g)\in \calD$, and
\begin{equation*}
\Phi\left(f,g\right)=\Phi\left(\sigma f\pi_F,\sigma g\pi_G\right),  
\end{equation*}
for all permutations $\pi_F,\pi_G,\sigma$ over $[F]$, $[G]$, and $[M]$, respectively. Here $(\sigma f\pi_F)(i)=\sigma(f(\pi_F(i)))$
and $(\sigma g\pi_G)(j)=\sigma(g(\pi_G(j)))$.
For instance, if $\Phi$ is the claw property, then $\Phi$ is bisymmetric. 

The definition of bisymmetric properties can be generalized 
to the three or more function case
in a straightforward way.
Let $\Phi\colon \calD \to \set{0,1}$
be a partial Boolean function over the set of $k$-tuples of functions in $\calD\subseteq\calF_1\times\cdots \times \calF_k$, where $\calF_{\ell}\ (\ell\in [k])$ denotes $[M]^{[F_{\ell}]}$ for $F_{\ell},M\in \Integer_+$.
We say that $\Phi$ is $k$-symmetric, if $\Phi(f_1,\dots,f_k)$ is invariant 
and $\calD$ is closed 
under the permutations over $\domain(f_{\ell})=[F_\ell]$ for each $\ell\in [k]$, respectively, and the permutation over the common range $[M]$ of $f_{\ell}$'s, i.e., 
$\left(\sigma f_1\pi_{F_1},\dots,\sigma f_k\pi_{F_k}\right)\in \calD$ if $(f_1,\dots,f_k)\in \calD$, and
\begin{equation*}
\Phi(f_1,\dots,f_k)=\Phi\left(\sigma f_1\pi_{F_1},\dots,\sigma f_k\pi_{F_k}\right),  
\end{equation*}
for every permutation $\pi_{F_\ell}$ 
over $[F_\ell]$ for $\ell\in [k]$ and every permutation $\sigma$ over $[M]$, respectively. 

\subsection{Multisymmetric Polynomials}\label{subsec:Multisymmetric Polynomials}
This subsection extends the properties of symmetric polynomials to the case of multisymmetric polynomials.

Recall that a symmetric polynomial over a commutative ring $R$
is a polynomial in $R[x_1,\dots,x_n]$ invariant 
under the action of any permutation $\sigma\in S_n$ on the variables, 
i.e.,  the action of ${x}_i\mapsto {x}_{\sigma(i)}$.
To define multisymmetric polynomials as a generalization of symmetric polynomials, 
we consider a vector consisting of $m$ variables, $X_i\deq (x_{i1},\dots,x_{im})$, 
instead of variable $x_i$,
for each $i\in [n]$. 
Then, a multisymmetric polynomial is defined as a polynomial $p\in R[x_{11},\dots,x_{nm}]$ invariant under the action of any permutation $\sigma\in S_n$ on vectors $X_i$, i.e., the action of $x_{ij}\mapsto x_{\sigma(i)j}$.
Let $R[x_{11},\dots,x_{nm}]^{S_n}$ be the ring of multisymmetric polynomials in 
$R[x_{11},\dots,x_{nm}]$. 
For convenience, let us think of $nm$ variables $x_{11},\dots,x_{nm}$ as 
the $n\times m$ matrix 
$X\deq (X_1^T,\dots,X_n^T)^T$. 
Then, a mutisymmetric polynomial in $R[x_{11},\dots,x_{nm}]^{S_n}$ is invariant
under the action of the permutation on rows of $X$.

For ${\Omega}\deq ({\Omega}_1^T,\dots,{\Omega}_n^T)^T\in \Integer_+^{nm}$ consiting of ${\Omega}_i\deq (\omega_{i1},\dots,\omega_{im}) \in \Integer_+^m$ for each $i\in [n]$, define
\begin{equation*}
    X_i^{{\Omega}_i}\deq \prod_{j\in [m]}x_{ij}^{\omega_{ij}},\ \ \ 
    {X}^{{\Omega}}\deq \prod_{i\in [n]}X_i^{{\Omega}_i}.
\end{equation*}
By the definition, ${X}^{{\Omega}}$ is a monomial of degree $\abs{\Omega}_1=\sum_{i=1}^n\sum_{j=1}^m \omega_{ij}$ (note $\abs{\omega_{ij}}=\omega_{ij}$ for $\omega_{ij}\in \Integer_+$).
We also define  
\begin{equation*}
    \mathrm{mon}_{\Omega}(X)\deq \sum_{\Lambda \in S_n\Omega} {X}^{\Lambda},
\end{equation*}
where $S_n\Omega$ is the orbit of $\Omega$ under the action of $\Omega_i\mapsto \Omega_{\sigma(i)}$ over all permutations $\sigma\in S_n$. 
For instance, if $X^\Omega=x_{12}x_{21}$, then $\mathrm{mon}_{\Omega}(X)=\sum_{i,j\in [n]\colon i\neq j}x_{i2}x_{j1}$.

Observe that the ring $R[x_{11},\dots,x_{nm}]^{S_n}$ is a module over $R$ that has a basis
consisting of $\mathrm{mon}_{\Omega}(X)$ for ${\Omega}\in \Integer_+^{nm}$.
More concretely, any mutisymmetric polynomial of total degree $d$ in $R[x_{11},\dots,x_{nm}]^{S_n}$
is a linear combination of $\mathrm{mon}_{\Omega}(X)$ for $\Omega$'s with $\abs{\Omega}_1\le d$.

Let us define the power-sum $P_\lambda(X)$ over $X$ for  row vector $\lambda\in \Integer_+^m$ as 
\begin{equation*}
    P_\lambda(X)=X_1^{\lambda}+\dots+X_n^{\lambda}.
\end{equation*}
The fundamental theorem of multisymmetric polynomials is as follows.
The proof is given in \autoref{appdx:fundamentaltheoremofmultsymmetric}.
\begin{theorem}[e.g.~\cite{GelKapZel94Book}]\label{th:multsym-polynomial}
    Let $R$ be a commutative ring that contains $\Rational$. For each $\Omega\in \Integer_+^{nm}$,
    $\mathrm{mon}_{\Omega}(X)$ can be expressed as a linear combination over $\Rational$ of products
    $P_{\Omega'_1}(X)\cdots P_{\Omega'_n}(X)$
    of the power-sums over $X=(x_{ij})_{i\in [n],j\in [m]}$ for $\Omega'\deq ((\Omega'_1)^T, \cdots ,(\Omega'_n)^T)^T\in \Integer_+^{nm}$ 
    such that $\abs{\Omega'}_1\le \abs{\Omega}_1$. Moreover,
    the power-sums $P_\lambda(X)$ for $\lambda \in \Integer_+^m$ generate the ring $R[x_{11},\dots,x_{nm}]^{S_n}$ of multisymmetric polynomials.
\end{theorem}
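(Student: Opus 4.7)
The plan is to establish the first claim by induction on the number $k$ of nonzero rows of $\Omega$ (counted with multiplicity), and then to deduce the ``moreover'' assertion immediately from the first claim, because the preceding remark provides an $R$-module basis $\{\mathrm{mon}_\Omega\}$ of $R[x_{11},\dots,x_{nm}]^{S_n}$. Any multisymmetric polynomial is then an $R$-linear combination of $\mathrm{mon}_\Omega$'s, and substituting the first-claim expressions writes it as an $R$-polynomial in the power-sums.

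Assume $\Omega$ has distinct nonzero rows $\mu_1,\dots,\mu_r\in\Integer_+^m$ with multiplicities $m_1,\dots,m_r$, so that $k=m_1+\cdots+m_r\le n$ (since $\Omega$ has at most $n$ rows). The central computation is to expand
\[
\prod_{s=1}^{r} P_{\mu_s}(X)^{m_s}
= \prod_{s=1}^{r}\Bigl(\sum_{i=1}^{n} X_i^{\mu_s}\Bigr)^{m_s}
= \sum_{\phi\colon I\to[n]}\ \prod_{(s,t)\in I} X_{\phi(s,t)}^{\mu_s},
\]
where $I\deq\bigsqcup_{s\in[r]}\{s\}\times[m_s]$, and to group the sum according to the set partition $\pi$ of $I$ induced by the fibers of $\phi$. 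This yields $\prod_s P_{\mu_s}(X)^{m_s}=\sum_\pi c_\pi\,\mathrm{mon}_{\Omega(\pi)}(X)$, where $\Omega(\pi)$ is the matrix whose nonzero rows are $\lambda_B\deq\sum_{(s,t)\in B}\mu_s$ for each block $B$ of $\pi$, and $c_\pi\in\Integer_+$ is an explicit combinatorial constant. The finest partition (all singletons) gives $\Omega(\pi)=\Omega$ and $c_\pi=m_1!\cdots m_r!$, while every coarser partition has strictly fewer blocks and so $\Omega(\pi)$ has strictly fewer nonzero rows than $\Omega$.

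Rearranging gives
\[
\mathrm{mon}_\Omega(X) = \frac{1}{m_1!\cdots m_r!}\prod_{s=1}^{r} P_{\mu_s}(X)^{m_s} - \sum_{|\pi|<k} \frac{c_\pi}{m_1!\cdots m_r!}\,\mathrm{mon}_{\Omega(\pi)}(X).
\]
Because $R\supseteq\Rational$, the factor $1/(m_1!\cdots m_r!)$ is a valid scalar, and the induction hypothesis applies to each $\mathrm{mon}_{\Omega(\pi)}$ on the right (strictly fewer nonzero rows). To match the literal form in the statement, a product of exactly $n$ power-sums $P_{\Omega'_1}(X)\cdots P_{\Omega'_n}(X)$, one pads each product with $n-k$ copies of $P_{\mathbf{0}}(X)=n$, a constant. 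The degree bound propagates automatically: $\deg\bigl(\prod_s P_{\mu_s}^{m_s}\bigr)=\sum_s m_s\abs{\mu_s}_1=\abs{\Omega}_1$, and merging rows by addition preserves $\abs{\cdot}_1$, so $\abs{\Omega(\pi)}_1=\abs{\Omega}_1$ throughout the induction and hence $\abs{\Omega'}_1\le\abs{\Omega}_1$ for every power-sum product that appears.

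The main obstacle I expect is the combinatorial bookkeeping behind the expansion $\prod_s P_{\mu_s}^{m_s}=\sum_\pi c_\pi\,\mathrm{mon}_{\Omega(\pi)}$: identifying precisely which $\mathrm{mon}_{\Omega(\pi)}$ arises from each $\pi$, verifying that the coefficient from the finest partition is exactly $m_1!\cdots m_r!$ (accounting for overcounting of elements of the orbit $S_n\Omega$ when some $m_s>1$ or when distinct blocks produce equal $\lambda_B$), and confirming that each coarser partition strictly reduces the number of nonzero rows so that the induction is well-founded. Once these identifications are made explicit, the induction and the degree tracking are routine.
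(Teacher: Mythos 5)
Your proof is correct and follows essentially the same strategy as the paper's: induction on the number of nonzero rows of $\Omega$, driven by expanding a product of power-sums so that the injective (bijective) assignments contribute a positive-integer multiple of $\mathrm{mon}_\Omega$ and the remaining assignments contribute only $\mathrm{mon}_{\Omega'}$'s with strictly fewer nonzero rows and the same $\ell_1$-norm, then inverting using $\Rational\subseteq R$.

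The one noteworthy variation is that you expand $\prod_s P_{\mu_s}^{m_s}$ over the \emph{nonzero} rows only (padding afterwards with copies of $P_{\mathbf 0}=n$), whereas the paper expands $\prod_{i=1}^n P_{\Omega_i}$ over all $n$ rows including zero rows. Your choice is actually the cleaner of the two: since every $\mu_s$ is nonzero, any non-injective $\phi$ produces blocks whose sums are all nonzero, so the number of nonzero rows of $\Omega(\pi)$ is exactly $|\pi|<k$ and the induction is well-founded with no caveats. In the paper's version, a non-bijective $\phi$ that merely folds a zero row into a nonzero one yields an $\Omega'$ in the \emph{same} $S_n$-orbit as $\Omega$, so the claim that ``the number of non-zero rows in $\Omega'$ is strictly less'' does not hold for those terms; such terms must instead be absorbed into the coefficient $c_\Omega$ (which is then larger than the stated $\abs{(S_n)_\Omega}$). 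This is a cosmetic imprecision in the paper rather than a fatal gap, since the argument only needs $c_\Omega$ to be a positive integer, but your formulation avoids the issue entirely. The coefficient $m_1!\cdots m_r!$ you extract for the finest partition is correct (the orbit $S_n\Omega$ has size $n!/(m_1!\cdots m_r!(n-k)!)$, and there are $n!/(n-k)!$ injective $\phi$'s, so each orbit element is hit $m_1!\cdots m_r!$ times), and the degree bookkeeping is right as you state.

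Before finalizing, make sure to state the base case explicitly: when $\Omega$ has no nonzero rows, $\mathrm{mon}_\Omega=1$ and the (empty or padded) product is a nonzero constant, so the claim holds trivially. Also, your concern about ``distinct blocks producing equal $\lambda_B$'' affects only the exact value of the coefficients $c_\pi$ for coarser partitions, which the argument never needs; all that matters is that they are nonnegative integers, so you can safely leave them implicit.
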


\section{Polynomials Approximating Multisymmetric Properties}
\label{sec:DegreeOfMultisymmetricProperties}
This section proves \autoref{th:bisymmetricproperty_intro}
by generalizing Ambainis's proof idea with \autoref{th:multsym-polynomial}.
\autoref{th:multisymmetricproperty_intro} is a straightfoward extension of \autoref{th:bisymmetricproperty_intro}.
\subsection{Bisymmetric Properties}
We first show that there are two polynomials 
in different sets of variables
of the same degree such that both polynomials approximate $\Phi$.

\begin{lemma}\label{lm:bisymmetryZW}
    For $F,G,M\in \Natural$ and $\calD\subseteq [M]^{[F]}\times [M]^{[G]}$, 
    let $\Phi:\calD\to \set{0,1}$
    be a bisymmetric partial Boolean function over the set of functions 
    $(f,g)\in \calD$.
Define $FM$ variables $x_{ij}$ and $GM$ variables $y_{kj}$ for functions $f$ and $g$
as in \autoref{lm:oracles}. In addition, define $z_j$ as the number of preimages of $j$ via $f$, i.e., $z_j=\left|f^{-1}\left(j\right)\right|$, and
define $w_j$ as the number of preimages of $j$ via $g$, i.e., $w_j=\left|g^{-1}\left(j\right)\right|$.
    Then, the following are equivalent.
    \begin{enumerate}
        \item There exists a polynomial $Q$ of degree at most $d$ in $z_1,\dots ,z_M,w_1,\dots,w_M$ that $\epsilon$-approximates $\Phi$.
        \item There exists a polynomial $P$ of degree at most $d$ in $x_{11},\dots, x_{FM},y_{11},\dots,y_{GM}$ that $\epsilon$-approximates $\Phi$.
    \end{enumerate}
    Moreover, the minimum degree of a polynomial in $z_1,\dots ,z_M,w_1,\dots,w_M$ that $\epsilon$-approximates $\Phi$ is equal to the minimum degree of a polynomial in $x_{11},\dots, x_{FM},y_{11},\dots,y_{GM}$ that $\epsilon$-approximates $\Phi$.
\end{lemma}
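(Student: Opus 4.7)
The plan is to establish the two implications separately, after which the equality of minimum degrees in the ``moreover'' clause is automatic since the equivalence holds at every fixed $d$. For $(1)\Rightarrow(2)$, I would substitute $z_j=\sum_{i\in[F]}x_{ij}$ and $w_j=\sum_{k\in[G]}y_{kj}$ into $Q$. Each $z_j$ and $w_j$ is linear in the $x,y$ variables, so the resulting polynomial $P$ has degree at most $\deg Q\le d$. For every pair $(f,g)$ these substitutions hold identically in the Boolean variables encoding $(f,g)$, hence $P(x,y)=Q(z,w)$ on all admissible inputs, and $P$ is an $\epsilon$-approximation of $\Phi$.

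For $(2)\Rightarrow(1)$, I would exploit bisymmetry to average $P$ over independent permutations of the two domains. Set
\[
  \bar P(x,y)\deq \Ex_{\pi_F\in S_F,\,\pi_G\in S_G}\!\big[P\big((x_{\pi_F(i),j})_{i,j},\,(y_{\pi_G(k),j})_{k,j}\big)\big].
\]
Specializing the bisymmetry condition to $\sigma=\mathrm{id}$ gives $\Phi(f,g)=\Phi(f\pi_F,g\pi_G)$, so on every admissible input $\bar P$ is an average of values each within $\epsilon$ of the constant $\Phi(f,g)$; hence $\bar P$ also $\epsilon$-approximates $\Phi$, and $\deg\bar P\le \deg P\le d$. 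It remains to rewrite $\bar P$ on admissible inputs as a polynomial of degree at most $d$ in the $z$'s and $w$'s. I would expand $P$ as a sum of multilinear monomials (WLOG, since $x_{ij}^2=x_{ij}$ and $y_{kj}^2=y_{kj}$ on Boolean inputs); a typical monomial has the form $x_{i_1j_1}\cdots x_{i_aj_a}\,y_{k_1\ell_1}\cdots y_{k_b\ell_b}$ with the $i_r$'s pairwise distinct, the $k_r$'s pairwise distinct, and $a+b\le d$. Because $\pi_F$ and $\pi_G$ are independent and act on disjoint blocks of variables, the averaged monomial factorizes as $\Ex_{\pi_F}[\cdots]\cdot\Ex_{\pi_G}[\cdots]$.

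A direct count---identical to the one underlying Ambainis's single-function argument---shows that on inputs coming from a valid $f$,
\[
  \Ex_{\pi_F}\!\big[x_{\pi_F(i_1),j_1}\cdots x_{\pi_F(i_a),j_a}\big]=\frac{1}{F^{\underline{a}}}\prod_{j}z_j^{\underline{m_j}},
\]
where $m_j$ is the multiplicity of $j$ in $(j_1,\dots,j_a)$; this is a polynomial in the $z_j$'s of degree $\sum_j m_j=a$. The analogous identity yields a polynomial in the $w_j$'s of degree $b$ for the $y$-factor. Summing these products over all monomials of $P$ produces a polynomial $Q(z,w)$ of degree at most $a+b\le d$ that agrees with $\bar P(x,y)$ on all admissible inputs, so $Q$ $\epsilon$-approximates $\Phi$, establishing $(1)$. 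The main obstacle is this final counting identity, but it is exactly Ambainis's original argument applied separately to each variable block; the only bisymmetric-specific ingredient is the factorization of the expectation, which is immediate from the independence of $\pi_F$ and $\pi_G$.
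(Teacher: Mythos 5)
Your proof is correct and follows essentially the same route as the paper: $(1)\Rightarrow(2)$ by the linear substitutions $z_j=\sum_i x_{ij}$, $w_j=\sum_k y_{kj}$; and $(2)\Rightarrow(1)$ by averaging over independent domain permutations $\pi_F,\pi_G$, factorizing the expectation of each multilinear monomial into an $x$-factor and a $y$-factor, and invoking the single-function counting argument of Ambainis for each factor. Your explicit formula $\Ex_{\pi_F}[\,\cdot\,]=\frac{1}{F^{\underline a}}\prod_j z_j^{\underline{m_j}}$ is a correct closed form of what the paper develops in its appendix, and your observation that the ``moreover'' clause follows immediately from the pointwise-in-$d$ equivalence is also fine.
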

\begin{proof}
The first item implies the second, since $z_j=\sum_{i}x_{ij}$ and $w_j=\sum_ky_{kj}$ by the definitions,
and replacing $z_j$ with $\sum_{i}x_{ij}$ and $w_j$ with $\sum_ky_{kj}$ does not change the value and degree of the polynomial.

To show that the second item implies the first, we take the average of $P$, denoted by $\Ex_{\pi_F\times \pi_G}[P]$, over the 
permutations $\pi_F$ of first index $i$ of $x_{ij}$ and the permutations $\pi_G$ of the first index $k$ of $y_{kj}$. 
Since $\Phi$ is invariant under the permutations over the domains of $f,g$, the average $\Ex_{\pi_F\times \pi_G}[P]$ still $\epsilon$-approximates $\Phi$. 
We will show that the expectation is a polynomial over $z_j,w_j$ of degree at most $d$, assuming that the second item holds.

Since $P$ is a linear combination of terms 
$x_{i_1 j_1}\dots x_{i_mj_m}y_{k_1j_1}\dots y_{k_lj_l}$
for positive integers $m$ and $l$ with $m+l\le d$,
it suffices to show that the expectation of any single term is a polynomial over $z_j,w_j$ of degree at most $d$.
Then, the expectation is 
\begin{equation*}
\Ex_{\pi_F\times \pi_G}\left[ x_{i_1 j_1}\cdots x_{i_mj_m}y_{k_1j_1}\cdots y_{k_lj_l}\right]
=\Ex_{\pi_F}\left[ x_{i_1 j_1}\cdots x_{i_mj_m}\right]\cdot
\Ex_{\pi_G}\left[ y_{k_1 j_1}\cdots y_{k_lj_l}\right].
\end{equation*}

Since we define $\set{x_{ij}}$ as the representations of the function $f$, we do not need to consider 
\emph{invalid} assignments to $x_{ij}$'s for which no function can be defined, such as  $x_{ij}=x_{ij'}=1$ for distinct $j,j'$. In addition, since $x_{ij}$ are Boolean, we can assume $ x_{i_1 j_1}\cdots x_{i_mj_m}$ is multilinear.
Therefore, $i_1,\dots,i_m$ are all distinct.
Similarly, $k_1,\dots,k_\ell$ are all distinct.

From the analysis in the single function case \cite{Amb05ToC} assuming \emph{valid} $x_{ij}$ and $y_{kj}$,
it holds that 
$\Ex_{\pi_F}\left[ x_{i_1 j_1}\cdots x_{i_mj_m}\right]$
is a polynomial in $z_j$ of degree at most $m$ and
$\Ex_{\pi_G}\left[ y_{i_1 j_1}\cdots y_{i_lj_l}\right]$
is a polynomial in $w_{j}$ of degree at most $l$ (see Appendix~\ref{appdx:symmetrization} for the completeness). 
Therefore, 
$\Ex_{\pi_F\times \pi_G}\left[ x_{i_1 j_1}\cdots x_{i_mj_m}y_{k_1j_1}\cdots y_{k_lj_l}\right]$
is a polynomial in $z_{j},w_{j}$ of degree $m+l\le d$.
By defining $\Ex_{\pi_F\times \pi_G}[P]$ as $Q$, the second iterm implies the first. This completes the former part of the proof. 

For the latter part of the statement, let $d'$ be the minimum degree of $P$. 
By the former part, the degree of $Q$ is at most $d'$. If the degree of $Q$ were at most $d'-1$, then there would exist a polynomial $P$ of degree at most $d'-1$ that $\epsilon$-approximates $\Phi$, which is a contradiction. Thus, the minimum degree of $Q$ is also $d'$.
\end{proof}
Next, we show that the degree of a polynomial $Q$ shown in \autoref{lm:bisymmetryZW}
does not change, even if we restrict the common range of input functions of $\Phi$ to a small subset.

\begin{lemma}\label{lm:bisymmetryZWandZ'W'}
    For $F,G,M\in \Natural$ such that $M\ge F+G$
    and $\calD\subseteq  [M]^{[F]}\times [M]^{[G]}$,
    let $\Phi\colon \calD\to \set{0,1}$ 
    be a bisymmetric partial Boolean function over the set of function pairs $(f,g)\in\calD$.
    In addition, for every $M'\in \Natural$ such that $F+G\le M'<M$,
    let $\Phi'\colon \calD\cap \left([M']^{[F]}\times [M']^{[G]}\right)\to \set{0,1}$ 
    be the restriction of $\Phi$ to the domain $[M']^{[F]}\times [M']^{[G]}$, i.e., 
    the bisymmetric Boolean function over the set of function pairs
    $(f',g')\in \calD\cap \left([M']^{[F]}\times [M']^{[G]}\right)$
    such that $\Phi'(f',g')=\Phi(f,g)$ for all functions $f',g',f,g$
    such that $f'(x)=f(x)$ and $g'(y)=g(y)$ for every $x\in [F]$ and $y\in [G]$.
    Define $z_j=\left|f^{-1}\left(j\right)\right|$ and $w_j=\left|g^{-1}\left(j\right)\right|$ for $f$ and $g$ and define $z'_j=\left|(f')^{-1}\left(j\right)\right|$ and $w'_j=\left|(g')^{-1}\left(j\right)\right|$ for $f'$ and $g'$. Then, the following are equivalent:
    \begin{enumerate}
        \item There exists a polynomial $Q$ of degree at most $d$ in $z_1,\dots ,z_M,w_1,\dots,w_M$ that $\epsilon$-approximates $\Phi$.
        \item There exists a polynomial $Q'$ of degree at most $d$ in $z'_1,\dots ,z'_{M'},w'_1,\dots,w'_{M'}$ that $\epsilon$-approximates $\Phi'$.
    \end{enumerate}
    Moreover, the minimum degree of a polynomial in $z_1,\dots ,z_M,w_1,\dots,w_M$ that $\epsilon$-approximates $\Phi$ is equal to the minimum degree of a polynomial in $z_{1},\dots, z'_{M'},w'_{1},\dots,w'_{M'}$ that $\epsilon$-approximates $\Phi'$.
\end{lemma}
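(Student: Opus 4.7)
The plan is to prove each direction separately, with the easy direction $(1)\Rightarrow (2)$ via restriction and the harder direction $(2)\Rightarrow (1)$ via the fundamental theorem of multisymmetric polynomials (\autoref{th:multsym-polynomial}) applied to the pairs $(z_j,w_j)$. For $(1)\Rightarrow (2)$, I substitute $z_{M'+1}=\cdots=z_M=0$ and $w_{M'+1}=\cdots=w_M=0$ into $Q$: for any $(f',g')\in \calD\cap ([M']^{[F]}\times [M']^{[G]})$, embedding $(f',g')$ into $[M]^{[F]}\times [M]^{[G]}$ gives frequency vectors that vanish on indices $>M'$, so the resulting polynomial has degree at most $d$ and $\epsilon$-approximates $\Phi(f',g')=\Phi'(f',g')$.

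For $(2)\Rightarrow (1)$, I first replace $Q'$ with its average over the action of $S_{M'}$ on the variables given by $\tau\cdot (z'_j,w'_j)=(z'_{\tau^{-1}(j)},w'_{\tau^{-1}(j)})$, i.e., simultaneous permutations of the pairs. The averaged polynomial is multisymmetric in the pairs $(z'_j,w'_j)$, has degree at most $d$, takes values in $[0,1]$, and still $\epsilon$-approximates $\Phi'$: for each $\tau$, the pair $(\tau f',\tau g')$ lies in $\calD$ by closure under range permutations and has frequency vectors $(\tau z',\tau w')$, while $\Phi'(\tau f',\tau g')=\Phi'(f',g')$ by bisymmetry. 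By \autoref{th:multsym-polynomial}, this averaged $Q'$ can be written as a linear combination of products of power-sums $P_\lambda(Z',W')=\sum_{j=1}^{M'}(z'_j)^{\lambda_1}(w'_j)^{\lambda_2}$ for $\lambda\in \Integer_+^2\setminus\{(0,0)\}$ (any trivial $P_{(0,0)}=M'$ factors being absorbed into the scalar coefficients), with total degree at most $d$. I then define $Q$ by replacing each $P_\lambda(Z',W')$ with its $M$-variable extension $P_\lambda(Z,W)=\sum_{j=1}^{M}z_j^{\lambda_1}w_j^{\lambda_2}$ and keeping all coefficients. Since $\deg P_\lambda(Z,W)=\lambda_1+\lambda_2=\deg P_\lambda(Z',W')$, the polynomial $Q$ has degree at most $d$.

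To finish, I verify that $Q$ $\epsilon$-approximates $\Phi$. For any $(f,g)\in \calD$, the set $\{j:z_j>0\}\cup \{j:w_j>0\}$ has size at most $F+G\le M'$, so there exists $\sigma\in S_M$ with $z_{\sigma^{-1}(j)}=w_{\sigma^{-1}(j)}=0$ for every $j>M'$. Setting $f^*=\sigma f$ and $g^*=\sigma g$, the pair $(f^*,g^*)$ can be identified with an element of $\calD\cap ([M']^{[F]}\times [M']^{[G]})$ whose frequency vectors satisfy $z'_j=z_{\sigma^{-1}(j)}$ and $w'_j=w_{\sigma^{-1}(j)}$ for $j\in [M']$. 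For each nonzero $\lambda$, reindexing the sum by $\sigma^{-1}$ and using that terms with $j>M'$ vanish yields $P_\lambda(Z,W)|_{(f,g)}=P_\lambda(Z',W')|_{(f^*,g^*)}$. Therefore $Q(z,w)=Q'(z',w')$, which is $\epsilon$-close to $\Phi'(f^*,g^*)=\Phi(f^*,g^*)=\Phi(f,g)$, the last equality by bisymmetry of $\Phi$. Combined with $(1)\Rightarrow (2)$, this proves the equality of minimum degrees. The main obstacle is carefully tracking the degree through \autoref{th:multsym-polynomial} and confirming that the multisymmetrization step preserves the approximation property; the hypothesis $F+G\le M'$ is used exactly once, to guarantee the existence of $\sigma$.
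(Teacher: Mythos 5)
Your proof is correct and follows essentially the same route as the paper's: restriction for $(1)\Rightarrow(2)$, and for $(2)\Rightarrow(1)$ symmetrizing $Q'$ over $S_{M'}$, writing the result in power-sums via the fundamental theorem of multisymmetric polynomials (\autoref{th:multsym-polynomial}), extending each power-sum from $M'$ to $M$ variables, and verifying agreement on the frequency vectors of an arbitrary $(f,g)$ via the range permutation $\sigma$. Your explicit remark that the constant power-sum $P_{(0,0)}$ must be absorbed into the scalar coefficients rather than ``extended'' (since its value changes from $M'$ to $M$) is a small but genuine refinement of a point the paper's proof leaves implicit.
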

\begin{proof}
    To show that the first item implies the second, let us think of $f',g'$ as special cases of 
    $(f,g)\in \calD$
    such that both $f([F])$ and $g([G])$ are contained in $[M']$ (note that such $(f',g')$ exists in $\calD$ since $\calD$ is closed under the permutations over the range).
    In this case, we have $z_j=z'_j$ and $w_j=w'_j$ for all $j\in [M']$, and $z_i=w_j=0$ for all $j>M'$. Thus, it holds that 
    \begin{equation*}
    \Phi'(z'_1,\dots,z'_{M'},w'_1,\dots,w'_{M'})=\Phi(z'_1,\dots,z'_{M'},0,\dots ,0,w'_1,\dots,w'_{M'},0,\dots,0).
    \end{equation*}
    Suppose that the first item holds.
    If we define 
    \[
    Q'(z'_1,\dots ,z'_{M'},w'_1,\dots,w'_{M'})\deq Q(z'_1,\dots,z'_{M'},0,\dots ,0,w'_1,\dots,w'_{M'},0,\dots,0),
    \]
    then $Q'$ is a polynomial degree at most $d$ that $\epsilon$-approximates $\Phi'$.
    The second item thus holds.

    We next show that the second item implies the first.
    Fix $(f,g)\in\calD$
    arbitrarily.
    Observe then that there exists a permutation $\sigma$ over $[M]$ such that the images of $\sigma f$
    and $\sigma g$ are contained in $[M']$, since $\domain(f)$ and $\domain(g)$ have cardinality $F$ and $G$, respectively.  
    Let us define $f',g'$ by identifying $(f',g')$ with $(\sigma f,\sigma g)$ as an element in $[M']^{[F]}\times [M']^{[G]}$.
    Then, it holds that $z'_j=z_{\sigma^{-1}(j)}$ and $w'_j=w_{\sigma^{-1}(j)}$ for every $j\in [M']$.
    By bisymmetry, we have $\Phi(f,g)=\Phi(\sigma f, \sigma g)$, which is equal to $\Phi'(f', g')$ by the definition. 
    
    Suppose that the second item holds. Then, $\Phi'(f', g')$ is $\epsilon$-approximated by some polynomial $Q'$ of degree at most $d$ over $z'_1,\dots ,z'_{M'},w'_1,\dots,w'_{M'}$. Since $\Phi'$ is a bisymmetric property, the average of $Q'$ over all permutations over $[M']$, i.e.,
    \begin{equation*}
        \widetilde{Q'}\deq \Ex_{\tau\in S_{M'}}
        \Cset{Q'(z'_{\tau(1)},\dots ,z'_{\tau(M')},w'_{\tau(1)},\dots,w'_{\tau(M')})}
    \end{equation*}
    still $\epsilon$-approximates $\Phi'$. Since $\widetilde{Q'}$ is a bisymmetric polynomial, \autoref{th:multsym-polynomial} implies that $\widetilde{Q'}$ is represented as a polynomial,
    in power-sum polynomials $P'_\lambda$, of degree at most $d$ with respect to variables
    $z'_1,\dots ,z'_{M'},w'_1,\dots,w'_{M'}$,
    where
    \begin{equation*}
    P'_\lambda\deq P'_{\lambda}(z'_1,\dots,z'_{M'},w'_1,\dots, w'_{M'})=\sum_{i=1}^{M'}(z'_i)^{\lambda_z}(w'_i)^{\lambda_w},
    \end{equation*}
    for $\lambda=(\lambda_z,\lambda_w)\in \Integer_+\times \Integer_+$ such that $\lambda_z+\lambda_w\le d$.
    Define $P_\lambda$ by extending $P'_\lambda$ to the case $2M$ variables as
    \begin{equation*}
    P_\lambda\deq P_{\lambda}(z_1,\dots,z_{M},w_1,\dots, w_{M})=\sum_{i=1}^{M}z_i^{\lambda_z}w_i^{\lambda_w}.
    \end{equation*}
    In addition, define a bisymmetric polynomial $Q^{\prime\prime}$ of degree at most $d$ by replacing every occurrence of $P'_\lambda$ in $\widetilde{Q'}$ with $P_\lambda$.
    Then, we have
    \begin{align*}
        Q''(z_1,\dots,z_{M},w_1,\dots,w_{M})&=Q''(z_{\sigma^{-1}(1)},\dots,z_{\sigma^{-1}(M)},w_{\sigma^{-1}(1)},\dots,w_{\sigma^{-1}(M)})\\
        &=Q''(z'_1,\dots,z'_{M'},0,\dots,0,w'_1,\dots,w'_{M'},0,\dots,0)\\
        &=\widetilde{Q'}(z'_1,\dots,z'_{M'},w'_1,\dots,w'_{M'}),
    \end{align*}
    where the first equality follows from the bisymmetry of $Q''$, the second follows from the definition of permutation $\sigma$, and the last follows from
    \begin{equation*}
        P_{\lambda}(z'_1,\dots,z'_{M'},0,\dots,0,w'_1,\dots,w'_{M'},0,\dots,0)=P'_{\lambda}(z'_1,\dots,z'_{M'},w'_1,\dots, w'_{M'}).
    \end{equation*}
    This implies that
    $Q''(z_1,\dots,z_{M},w_1,\dots,w_{M})$ $\epsilon$-approximates $\Phi(z_1,\dots,z_{M},w_1,\dots, w_{M})$, since $\widetilde{Q'}$ $\epsilon$-approximates $\Phi'(z'_1,\dots,z'_{M'},w'_1,\dots, w'_{M'})$, which is equal to $\Phi(z_1,\dots,z_{M},w_1,\dots, w_{M})$ by the definition.
    By setting $Q\deq Q''$, the first item in the statement holds.

    For the latter part of the statement, let $d'$ be the minimum degree of $Q$. 
    By the former part, the degree of $Q'$ is at most $d'$. If the degree of $Q'$ were at most $d'-1$, then there would exist a polynomial $Q$ of degree at most $d'-1$ that $\epsilon$-approximates $\Phi$, which is a contradiction. Thus, the minimum degree of $Q'$ is also $d'$.
\end{proof}

By combining \autoref{lm:bisymmetryZW} with \autoref{lm:bisymmetryZWandZ'W'}, we obtain 
a slightly general version of \autoref{th:bisymmetricproperty_intro}.
\begin{theorem}[general version of \autoref{th:bisymmetricproperty_intro}]
\label{th:bisymmetricproperty}
    For $F,G,M\in \Natural$ such that $M\ge F+G$
    and $\calD\subseteq [M]^{[F]}\times [M]^{[G]}$,
    let 
       $\Phi\colon \calD\to \set{0,1}$ 
    be a bisymmetric Boolean function over the set of function pairs
    $(f,g)\in \calD$.
    In addition, for every $M'\in \Natural$ such that $F+G\le M'<M$, let $\Phi'\colon \calD\cap 
    \left([M']^{[F]}\times [M']^{[G]}\right)\to \set{0,1}$ be
    the restriction of $\Phi$ to the domain $[M']^{[F]}\times [M']^{[G]}$.
        Then, the minimum degree of a polynomial that $\epsilon$-approximates $\Phi$ is equal to the minimum degree of a polynomial that $\epsilon$-approximates $\Phi'$,
    where
    \begin{itemize}
        \item 
        $\Phi$ depends on variables $x_{ij}\in\{0,1\}$ for
        $(i,j)\in [F]\times [M]$ such that $x_{ij}=1$ if and only if $f\left(i\right)=j$, 
        and variables $y_{kj}\in\{0,1\}$ for $(k,j)\in [G]\times [M]$ such that $y_{kj}=1$ if and only if $g(k)=j$, and
        \item 
        $\Phi'$ depends on variables $x'_{ij}\in\{0,1\}$ 
        for $(i,j)\in [F]\times [M']$ such that $x'_{ij}=1$ if and only if $f'\left(i\right)=j$, and variables $y'_{kj}\in\{0,1\}$ for
        $(k,j)\in [G]\times [M']$ such that $y'_{kj}=1$ if and only if $g'(k)=j$.
    \end{itemize}
\end{theorem}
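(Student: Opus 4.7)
The plan is to obtain \autoref{th:bisymmetricproperty} as a direct consequence of the two lemmas \autoref{lm:bisymmetryZW} and \autoref{lm:bisymmetryZWandZ'W'}, by chaining three minimum-degree equivalences. Letting $d_{xy}, d_{zw}, d_{z'w'}, d_{x'y'}$ denote, respectively, the minimum degrees of a polynomial that $\epsilon$-approximates $\Phi$ in variables $\{x_{ij}, y_{kj}\}$, of a polynomial that $\epsilon$-approximates $\Phi$ in variables $\{z_j, w_j\}_{j\in [M]}$, of a polynomial that $\epsilon$-approximates $\Phi'$ in variables $\{z'_j, w'_j\}_{j\in [M']}$, and of a polynomial that $\epsilon$-approximates $\Phi'$ in variables $\{x'_{ij}, y'_{kj}\}$, the goal is to show $d_{xy} = d_{x'y'}$ via $d_{xy} = d_{zw} = d_{z'w'} = d_{x'y'}$.

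First I would apply \autoref{lm:bisymmetryZW} to the bisymmetric property $\Phi$ on $\calD$, yielding $d_{xy}=d_{zw}$. Next, I would apply \autoref{lm:bisymmetryZWandZ'W'}, whose hypotheses are met (the parameters $F,G,M,M'$ satisfy $F+G\le M'<M$ and $\Phi'$ is defined as the restriction of $\Phi$ exactly as in that lemma), to obtain $d_{zw}=d_{z'w'}$. Finally, I would apply \autoref{lm:bisymmetryZW} again, this time to $\Phi'$ on the restricted domain $\calD \cap ([M']^{[F]}\times [M']^{[G]})$, to conclude $d_{z'w'}=d_{x'y'}$. Composing these three equalities completes the proof.

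The one point that requires explicit verification is that $\Phi'$ really is bisymmetric (in the sense of the preliminaries) on its restricted domain so that \autoref{lm:bisymmetryZW} can be invoked in the third step. This follows immediately from the bisymmetry of $\Phi$: any permutation $\sigma$ over $[M']$ can be extended to a permutation $\tilde\sigma$ over $[M]$ by acting as the identity on $[M]\setminus [M']$. Under this extension, if $(f',g')\in\calD\cap([M']^{[F]}\times [M']^{[G]})$, then $(\tilde\sigma f'\pi_F, \tilde\sigma g'\pi_G)=(\sigma f'\pi_F,\sigma g'\pi_G)$ still lies in the same restricted domain (because the images remain in $[M']$ and $\calD$ is closed under the $\tilde\sigma,\pi_F,\pi_G$ action), and the values of $\Phi$ and $\Phi'$ agree on it. Thus $\Phi'(\sigma f'\pi_F,\sigma g'\pi_G)=\Phi(\tilde\sigma f'\pi_F,\tilde\sigma g'\pi_G)=\Phi(f',g')=\Phi'(f',g')$, as required.

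The main obstacle is conceptually hidden inside \autoref{lm:bisymmetryZWandZ'W'}, whose nontrivial direction uses the fundamental theorem of multisymmetric polynomials (\autoref{th:multsym-polynomial}) to lift a polynomial $Q'$ in $2M'$ variables to a bisymmetric polynomial $Q$ in $2M$ variables via power-sum substitution; at the wrapper-theorem level, however, the only bookkeeping left is the bisymmetry check for $\Phi'$ just described, after which the three lemma invocations give the claim with no further computation.
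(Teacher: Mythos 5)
Your proposal is correct and takes essentially the same route as the paper: the paper derives \autoref{th:bisymmetricproperty} directly by combining \autoref{lm:bisymmetryZW} and \autoref{lm:bisymmetryZWandZ'W'}, which is exactly your three-step chain $d_{xy}=d_{zw}=d_{z'w'}=d_{x'y'}$. Your explicit check that $\Phi'$ is bisymmetric on the restricted domain (so that \autoref{lm:bisymmetryZW} can be applied to it) is a small detail the paper leaves implicit, but it is a valid and useful verification.
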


With the promise that the union of the images of input functions $f$ and $g$, i.e., ${f([F])\cup g([G])}$, is upper-bounded in size 
by $\mathfrak{U}$ for every $(f,g)\in\calD$, 
\autoref{lm:bisymmetryZWandZ'W'} holds for every 
$M'$ with $\mathfrak{U}\le M' <M$
with a slight modification to the proof.
Thus, with that promise, \autoref{th:bisymmetricproperty} also holds for such $M'$.
For instance, if
$\calD$ is the set of pairs of two-to-one functions,
then $\abs{f([F])\cup g([G])}\le (F+G)/2$, and thus
\autoref{th:bisymmetricproperty} holds for every 
$M'$ with $(F+G)/2\le M' <M$.
Since $\abs{f([F])\cup g([G])}\le F+G$ always holds, we obtain \autoref{th:bisymmetricproperty} for all $\calD$.

\subsection{\boldmath{$k$}-Symmetric Properties}
This section states a straightforward generalization of
Lemmas~\ref{lm:bisymmetryZW} and \ref{lm:bisymmetryZWandZ'W'}
to the $k$-symmetric properties.
We omit their proofs, since they are almost the same as those in the bisymmetric case.
\autoref{th:multisymmetricproperty} (a slightly general version of \autoref{th:multisymmetricproperty_intro}) follows immediately from the following lemmas.

\begin{lemma}\label{lm:GeneralBisymmetryZW}
    Let $k\ge 2$.
    For $F_{\ell}\in \Natural\ (\ell\in [k])$, $M\in \Natural$,
     let
    $\Phi:\calD\to \set{0,1}$
    be a $k$-symmetric partial Boolean function over the set of $k$-tuples of functions 
    $(f_1,\dots,f_k)\in \calD$,
    where $\calD\subseteq \calF_1\times\dots \times \calF_k$
    and $\calF_{\ell}$ denotes $[M]^{[F_\ell]}$ for each $\ell\in [k]$.
    For each $\ell\in [k]$, $i\in [F_{\ell}]$, and $j\in [M]$, define variable $x^{(\ell)}_{ij}$ such that
$x^{(\ell)}_{ij}=1$ if and only if $f_{\ell}\left(i\right)=j$,
and define $z^{(\ell)}_j$ as the number of preimages of $j\in [M]$ via $f_{\ell}$, i.e., $z^{(\ell)}_j=\left|f_{\ell}^{-1}\left(j\right)\right|$.
    Then, the following are equivalent.
    \begin{enumerate}
        \item There exists a polynomial $Q$ of degree at most $d$ in $kM$ variables $z^{(\ell)}_1,\dots ,z^{(\ell)}_M\ (\ell\in[k])$ that $\epsilon$-approximates $\Phi$.
        \item There exists a polynomial $P$ of degree at most $d$ in $\Oset{\sum_{\ell=1}^kF_{\ell} M}$ variables $x^{(\ell)}_{11},\dots, x^{(\ell)}_{F_\ell M}\ (\ell\in[k])$ that $\epsilon$-approximates $\Phi$.
    \end{enumerate}
Moreover, the minimum degree of a polynomial in $z^{(\ell)}_1,\dots ,z^{(\ell)}_M\ (\ell\in[k])$ that $\epsilon$-approximates $\Phi$ is equal to the minimum degree of a polynomial in $x^{(\ell)}_{11},\dots, x^{(\ell)}_{F_\ell M}\ (\ell\in[k])$ that $\epsilon$-approximates $\Phi$.
\end{lemma}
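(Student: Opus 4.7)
The plan is to mirror the proof of \autoref{lm:bisymmetryZW} using $k$ independent permutations acting on the $k$ domains. The implication that item (1) implies item (2) is immediate: substituting $z^{(\ell)}_j = \sum_{i\in [F_\ell]} x^{(\ell)}_{ij}$ into $Q$ yields a polynomial $P$ in the $x$-variables of the same total degree whose value agrees with $Q$ on every valid assignment, so $P$ also $\epsilon$-approximates $\Phi$.

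For the converse direction, I would form the symmetrization
\[
\widetilde{P} \deq \Ex_{\pi_{F_1}\times\cdots\times \pi_{F_k}}\bigl[P\bigr],
\]
where $\pi_{F_\ell}\in S_{F_\ell}$ is drawn uniformly and independently for each $\ell\in [k]$ and acts by relabeling the first index of $x^{(\ell)}_{ij}$. Because $\Phi$ is $k$-symmetric with respect to the domain permutations, $\widetilde{P}$ still $\epsilon$-approximates $\Phi$, and its degree is at most that of $P$. The remaining task is to express $\widetilde{P}$ as a polynomial in the $z^{(\ell)}_j$'s of the same degree.

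By Boolean multilinearization ($x^2=x$) and the fact that only valid assignments are relevant, every monomial of $P$ may be written as
\[
\prod_{\ell=1}^k x^{(\ell)}_{i^{(\ell)}_1 j^{(\ell)}_1}\cdots x^{(\ell)}_{i^{(\ell)}_{m_\ell} j^{(\ell)}_{m_\ell}},
\]
with $i^{(\ell)}_1,\dots,i^{(\ell)}_{m_\ell}$ pairwise distinct for every $\ell$ and $\sum_{\ell}m_\ell\le \deg(P)$. Since the $k$ permutations are drawn independently, the expectation over $\pi_{F_1}\times\cdots\times\pi_{F_k}$ factors across $\ell$ as
\[
\prod_{\ell=1}^k \Ex_{\pi_{F_\ell}}\Cset{x^{(\ell)}_{i^{(\ell)}_1 j^{(\ell)}_1}\cdots x^{(\ell)}_{i^{(\ell)}_{m_\ell} j^{(\ell)}_{m_\ell}}}.
\]
Invoking the single-function symmetrization analysis used in the bisymmetric proof (Appendix~\ref{appdx:symmetrization}, originally due to Ambainis~\cite{Amb05ToC}), each of the $k$ factors is a polynomial in $z^{(\ell)}_1,\dots,z^{(\ell)}_M$ of degree at most $m_\ell$. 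The product is therefore a polynomial in the $z$-variables of total degree at most $\sum_\ell m_\ell\le \deg(P)$, and summing over the monomials of $P$ produces the required $Q$. The minimum-degree claim then follows by the standard contradiction: any strictly smaller-degree witness on one side converts, by the two directions above, to a witness of the same smaller degree on the other side.

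The only obstacle I anticipate is bookkeeping: one must verify that the independence of the $\pi_{F_\ell}$ cleanly factorizes the cross-monomial expectation and that the degrees add correctly across the $k$ factors. The fundamental theorem of multisymmetric polynomials plays no role here; it enters only for the range-reduction step (the $k$-symmetric analogue of \autoref{lm:bisymmetryZWandZ'W'}), where the single common permutation $\sigma$ over $[M]$ couples the $k$ variable families and genuine multisymmetry must be exploited.
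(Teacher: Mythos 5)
Your proposal is correct and takes essentially the same route the paper intends: the paper explicitly omits this proof as ``almost the same'' as that of \autoref{lm:bisymmetryZW}, and your argument is precisely the $k$-fold generalization of that proof via independent symmetrization over the $k$ domain permutation groups, with the expectation of each monomial factoring across $\ell$ and each factor handled by the single-function analysis. Your observation that \autoref{th:multsym-polynomial} plays no role here and enters only in the range-reduction step (\autoref{lm:GeneralbisymmetryZWandZ'W'}) is also accurate.
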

\begin{lemma}\label{lm:GeneralbisymmetryZWandZ'W'}.
    Let $k\ge 2$.
    For $F_\ell\ (\ell\in [k])$ and $M\in \Natural$ such that $M\ge F_1+\dots +F_k$, 
    let
    $\Phi:\calD\to \set{0,1}$
    be a $k$-symmetric partial Boolean function over the set of $k$-tuples of functions 
    $(f_1,\dots,f_k)\in \calD$, 
    where 
    $\calD\subseteq \calF_1\times\dots \times \calF_k$ and
    $\calF_{\ell}$ denotes $[M]^{[F_\ell]}$ for each $\ell\in [k]$.
    In addition, 
    for every ${M'}\in \Natural$ such that $F_1+\dots+F_k\le {M'}< M$,
    let 
    $\Phi':\calD\cap 
    \left(\calF'_1\times\dots \times \calF'_k\right)\to \set{0,1}$
    be 
    the restriction of $\Phi$ to the domain $[{M'}]^{[F_1]}\times\dots \times [{M'}]^{[F_k]}$, i.e., 
    the $k$-symmetric Boolean function over the set of $k$-tuples of functions 
    $(f'_1,\dots, f'_k)\in \calD\cap 
    \left(\calF'_1\times\dots \times \calF'_k\right)$,
    where $\calF'_\ell$ denotes $[{M'}]^{[F_\ell]}$ for each $\ell\in [k]$,
    such that $\Phi'(f'_1,\dots,f'_k)=\Phi(f_1,\dots,f_k)$ for all functions $f'_{1},\dots,f'_{k},f_1,\dots, f_{k}$
    satisfying $f'_{\ell}(x)=f_{\ell}(x)$ for all $x\in [F_\ell]$ and all $\ell\in [k]$.
    Finally, for each $\ell\in [k]$, define $z^{(\ell)}_j=\left|f_{\ell}^{-1}\left(j\right)\right|$ for each $j\in [M]$
    and ${z'}^{(\ell)}_j=\left|(f'_{\ell})^{-1}\left(j\right)\right|$ for each $j\in [{M'}]$. Then, the following are equivalent:
    \begin{enumerate}
        \item There exists a polynomial $Q$ of degree at most $d$ in $kM$ variables $z^{(\ell)}_1,\dots ,z^{(\ell)}_M\ (\ell\in[k])$ that $\epsilon$-approximates $\Phi$.
        \item There exists a polynomial $Q'$ of degree at most $d$ in ${z'}^{(\ell)}_1,\dots ,{z'}^{(\ell)}_{{M'}}\ (\ell\in[k])$ that $\epsilon$-approximates $\Phi'$.
    \end{enumerate}
\end{lemma}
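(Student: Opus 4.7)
The plan is to mimic the proof of Lemma \ref{lm:bisymmetryZWandZ'W'} almost verbatim, replacing the pair of variable sequences $(z_j)$ and $(w_j)$ by the $k$ sequences $(z^{(\ell)}_j)_{\ell\in[k]}$. For the forward direction ($1\Rightarrow 2$), I would view every $(f'_1,\dots,f'_k)\in\calD\cap(\calF'_1\times\cdots\times\calF'_k)$ as a special case of $(f_1,\dots,f_k)\in\calD$ whose images all lie in $[M']$ (such tuples exist in $\calD$ because $\calD$ is closed under range permutations), so that ${z'}^{(\ell)}_j=z^{(\ell)}_j$ for $j\in[M']$ and $z^{(\ell)}_j=0$ for $j>M'$. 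Then define $Q'$ by substituting $z^{(\ell)}_j=0$ for $j>M'$ in $Q$; the degree does not grow and $Q'$ $\epsilon$-approximates $\Phi'$ on its domain.

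For the nontrivial direction ($2\Rightarrow 1$), fix an arbitrary $(f_1,\dots,f_k)\in\calD$. Since $\bigl|\bigcup_\ell f_\ell([F_\ell])\bigr|\le F_1+\cdots+F_k\le M'$, there exists a permutation $\sigma\in S_M$ such that every $\sigma f_\ell$ takes values only in $[M']$, and I would identify $(\sigma f_1,\dots,\sigma f_k)$ with a tuple $(f'_1,\dots,f'_k)$ in the smaller range. By $k$-symmetry, $\Phi(f_1,\dots,f_k)=\Phi(\sigma f_1,\dots,\sigma f_k)=\Phi'(f'_1,\dots,f'_k)$, and the variables satisfy ${z'}^{(\ell)}_j = z^{(\ell)}_{\sigma^{-1}(j)}$ for $j\in[M']$, while $z^{(\ell)}_{\sigma^{-1}(j)}=0$ for $j>M'$.

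Starting from the assumed $Q'$, I would symmetrize by averaging over all $\tau\in S_{M'}$ acting diagonally via ${z'}^{(\ell)}_j\mapsto {z'}^{(\ell)}_{\tau(j)}$ for every $\ell$. The averaged polynomial $\widetilde{Q'}$ still $\epsilon$-approximates $\Phi'$ because $\Phi'$ is invariant under this diagonal action, and $\widetilde{Q'}$ is multisymmetric in the $M'$ vectors $\bigl({z'}^{(1)}_j,\dots,{z'}^{(k)}_j\bigr)$, $j\in[M']$. By the fundamental theorem of multisymmetric polynomials (Theorem \ref{th:multsym-polynomial}), $\widetilde{Q'}$ is a polynomial of degree at most $d$ in the power-sums
\[
P'_\lambda \;=\; \sum_{j=1}^{M'}\prod_{\ell=1}^{k}\bigl({z'}^{(\ell)}_j\bigr)^{\lambda_\ell},\qquad \lambda=(\lambda_1,\dots,\lambda_k)\in\Integer_+^k,\ |\lambda|_1\le d.
\]
I then lift to the big range by replacing each $P'_\lambda$ with
\[
P_\lambda \;=\; \sum_{j=1}^{M}\prod_{\ell=1}^{k}\bigl(z^{(\ell)}_j\bigr)^{\lambda_\ell},
\]
and call the resulting polynomial $Q$. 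Its degree is at most $d$.

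It remains to check $Q$ $\epsilon$-approximates $\Phi$. By $k$-symmetry of $Q$ (evident from its construction in power-sums over all $M$ coordinates) I can apply the $\sigma$ from step two to rewrite $Q(z^{(1)},\dots,z^{(k)})=Q(\sigma z^{(1)},\dots,\sigma z^{(k)})$; the zero-padded entries kill the terms with $j>M'$ in each $P_\lambda$, giving $P_\lambda$ evaluated at the padded input equal to $P'_\lambda$ evaluated at ${z'}^{(\ell)}$, so $Q$ equals $\widetilde{Q'}({z'}^{(1)},\dots,{z'}^{(k)})$, which is within $\epsilon$ of $\Phi'(f'_1,\dots,f'_k)=\Phi(f_1,\dots,f_k)$. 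The ``moreover'' clause on equal minimum degrees follows by the usual contradiction argument: either direction produces a polynomial of no larger degree. The main obstacle is step three, namely arguing that the $k$-fold generalization of the averaging-plus-power-sum substitution actually produces a polynomial whose value at the original input coincides with $\widetilde{Q'}$ at the image-restricted input; the multisymmetric fundamental theorem is exactly the tool that makes this substitution well defined, which is why replacing monomial symmetric sums would fail but replacing power-sums succeeds.
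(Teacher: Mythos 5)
Your proposal is correct and follows essentially the same route the paper intends: the paper explicitly states that the proof of this $k$-symmetric lemma is a routine generalization of the bisymmetric Lemma~\ref{lm:bisymmetryZWandZ'W'}, and your steps (zero-substitution for the forward direction; range-rearranging $\sigma$, diagonal $S_{M'}$-averaging, power-sum representation via Theorem~\ref{th:multsym-polynomial} with vectors $(z^{(1)}_j,\dots,z^{(k)}_j)$, and lifting $P'_\lambda\mapsto P_\lambda$ for the converse) reproduce exactly that generalization. Your closing observation — that the argument hinges on power-sums being \emph{stable under zero-padding} ($P_\lambda$ at the padded tuple equals $P'_\lambda$ at the original) whereas monomial symmetric functions are not — correctly identifies why the fundamental theorem of multisymmetric polynomials is the right tool here.
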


\begin{theorem}[general version of \autoref{th:multisymmetricproperty_intro}]
\label{th:multisymmetricproperty}
Define $\Phi$ and $\Phi'$ as in~\autoref{lm:GeneralbisymmetryZWandZ'W'},
where
    \begin{itemize}
        \item 
        $\Phi$ depends on $x^{(\ell)}_{ij}\in\{0,1\}$ for 
        $(\ell,i,j)\in [k]\times [F_{\ell}]\times [M]$ such that $x^{(\ell)}_{ij}=1$ if and only if $f_{\ell}\left(i\right)=j$, 
        \item 
        $\Phi'$ depends on ${x'}^{(\ell)}_{ij}\in\{0,1\}$ 
        for $(\ell, i,j)\in [k]\times [F_{\ell}]\times [{M'}]$  such that ${x'}^{(\ell)}_{ij}=1$ if and only if ${f}'_{\ell}\left(i\right)=j$.
      \end{itemize}
      Then, the minimum degree of a polynomial that $\epsilon$-approximates $\Phi$ equals the minimum degree of a polynomial that $\epsilon$-approximates $\Phi'$.
\end{theorem}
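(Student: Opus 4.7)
The plan is to derive Theorem~\ref{th:multisymmetricproperty} as an immediate corollary by chaining Lemmas~\ref{lm:GeneralBisymmetryZW} and \ref{lm:GeneralbisymmetryZWandZ'W'}, in exact analogy with how Theorem~\ref{th:bisymmetricproperty} is obtained from Lemmas~\ref{lm:bisymmetryZW} and \ref{lm:bisymmetryZWandZ'W'}. Specifically, I would first invoke Lemma~\ref{lm:GeneralBisymmetryZW} on $\Phi$ to identify the minimum degree of an $\epsilon$-approximating polynomial in the Boolean variables $x^{(\ell)}_{ij}$ with the minimum degree of an $\epsilon$-approximating polynomial in the preimage-count variables $z^{(\ell)}_j=|f_\ell^{-1}(j)|$. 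I would then apply the same lemma to $\Phi'$ (with range $[M']$ and the corresponding restricted domain) to identify the minimum degree in ${x'}^{(\ell)}_{ij}$ with the minimum degree in ${z'}^{(\ell)}_j$.

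Next I would apply Lemma~\ref{lm:GeneralbisymmetryZWandZ'W'}, which equates the minimum degree of an $\epsilon$-approximating polynomial in $z^{(\ell)}_j$ ($\ell\in[k]$, $j\in[M]$) for $\Phi$ with that of an $\epsilon$-approximating polynomial in ${z'}^{(\ell)}_j$ ($\ell\in[k]$, $j\in[M']$) for $\Phi'$. Composing these three degree equalities yields precisely the claim of the theorem, so at the level of the theorem itself there is essentially no additional work: all content has been pushed into the two lemmas.

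The genuine obstacles therefore live in the lemmas. For Lemma~\ref{lm:GeneralBisymmetryZW}, the key point is that the domain permutation groups $S_{F_1},\dots,S_{F_k}$ act on disjoint sets of variables, so the expectation of a monomial $\prod_\ell \bigl(x^{(\ell)}_{i^\ell_1 j^\ell_1}\cdots x^{(\ell)}_{i^\ell_{m_\ell} j^\ell_{m_\ell}}\bigr)$ under $\pi_{F_1}\times\cdots\times\pi_{F_k}$ factors as the product of $k$ independent single-function symmetrizations; each factor is, by the analysis recalled from~\cite{Amb05ToC}, a polynomial in $z^{(\ell)}_1,\dots,z^{(\ell)}_M$ of degree at most $m_\ell$, so the whole expectation is a polynomial of total degree at most $m_1+\cdots+m_k\le d$ in the $z^{(\ell)}_j$'s. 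For Lemma~\ref{lm:GeneralbisymmetryZWandZ'W'}, the key step is an appeal to Theorem~\ref{th:multsym-polynomial} with $n\leftarrow M'$ and $m\leftarrow k$, applied to the length-$k$ vectors $(z^{(1)}_j,\dots,z^{(k)}_j)$ for $j\in[M']$: after averaging $Q'$ over $\tau\in S_{M'}$ one obtains a multisymmetric polynomial $\widetilde{Q'}$ expressible as a polynomial of total degree at most $d$ in the power-sums $P'_\lambda=\sum_{j=1}^{M'}\prod_{\ell=1}^{k}({z'}^{(\ell)}_j)^{\lambda_\ell}$ with $\lambda\in\Integer_+^k$. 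Each $P'_\lambda$ has the obvious extension $P_\lambda=\sum_{j=1}^{M}\prod_{\ell=1}^{k}(z^{(\ell)}_j)^{\lambda_\ell}$ of the same degree, and substitution $P'_\lambda\mapsto P_\lambda$ inside $\widetilde{Q'}$ produces the candidate polynomial $Q$.

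The only remaining thing to check is that this $Q$ really $\epsilon$-approximates $\Phi$. Here the hypothesis $M\ge F_1+\cdots+F_k$ is essential: for any $(f_1,\dots,f_k)\in\calD$, the union $\bigcup_\ell f_\ell([F_\ell])$ has size at most $F_1+\cdots+F_k\le M'$, so there is a permutation $\sigma\in S_M$ mapping that union into $[M']$. By $k$-symmetry of $\Phi$ and closedness of $\calD$ under range permutations, $\Phi(f_1,\dots,f_k)=\Phi(\sigma f_1,\dots,\sigma f_k)=\Phi'(\sigma f_1,\dots,\sigma f_k)$, and by construction $Q$ evaluated at $(z^{(\ell)}_j)$ coincides with $\widetilde{Q'}$ evaluated at the nonzero coordinates of $(z^{(\ell)}_{\sigma^{-1}(j)})$, so the approximation carries over. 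The main conceptual obstacle, then, is exactly the switch from ordinary symmetric polynomials (as in the single-function case of~\cite{Amb05ToC}) to multisymmetric polynomials in length-$k$ vectors; once Theorem~\ref{th:multsym-polynomial} is available the generalization from $k=2$ to arbitrary $k\ge 2$ is a formal matter of notation.
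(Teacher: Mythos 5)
Your proposal matches the paper's approach exactly: the theorem is stated as an immediate consequence of Lemmas~\ref{lm:GeneralBisymmetryZW} and~\ref{lm:GeneralbisymmetryZWandZ'W'}, and your sketches of those lemmas (factorization of the domain-symmetrization over the $k$ independent permutation groups, and the multisymmetric power-sum substitution via Theorem~\ref{th:multsym-polynomial} applied to the length-$k$ vectors $(z^{(1)}_j,\dots,z^{(k)}_j)$) are precisely the direct generalizations of Lemmas~\ref{lm:bisymmetryZW} and~\ref{lm:bisymmetryZWandZ'W'} that the paper invokes.
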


\section{Degree of Claw Property}
\label{sec:DegreeOfClawProperty}
This section 
provides a lower bound for the claw problem with a small range
using \autoref{th:bisymmetricproperty_intro}.
For this, we need a lower bound on the minimum degree
of a polynomial that approximates the claw property 
for some common range size of input functions.
To prove the following theorem, 
we use Zhang's reduction~\cite{Zha05COCOON}
from the collision problem 
for an input function that is either one-to-one or two-to-one.

\begin{theorem}\label{th:DegreeOfClawForLargeRange}
For any constant $0<\epsilon<1/2$, the minimum degree of a polynomial that $\epsilon$-approximates the claw property for function pairs in $[FG]^F\times [FG]^G$ is $\Omega\left((FG)^{1/3}\right)$.
\end{theorem}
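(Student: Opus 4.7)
The plan is to adapt Zhang's reduction~\cite{Zha05COCOON} from the collision problem under the one-to-one-vs-two-to-one promise to the claw problem. Fix a small constant $\epsilon$ (e.g.\ $\epsilon=0.1$). Given a polynomial $P'$ of degree $d$ that $\epsilon$-approximates the claw property on $[M]^{[F]}\times [M]^{[G]}$ with $M=FG$, I will construct a polynomial $\tilde{P}$ of degree $O(d)$ that approximates, with some constant error $\delta<1/2$, the promised collision property on $h\in [M]^{[M]}$. The Aaronson--Shi lower bound $\Omega(M^{1/3})$ on the approximate degree of that promise collision problem~\cite{AarShi04JACM,Amb05ToC} then forces $d=\Omega((FG)^{1/3})$, and \autoref{lm:deg_linearly_related} bridges the two constant error parameters to yield the claimed lower bound for every constant $\epsilon<1/2$.

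The construction of $\tilde{P}$ proceeds as follows. For each pair of disjoint subsets $S,T\subseteq [M]$ with $\abs{S}=F$ and $\abs{T}=G$, let $P'_{ST}$ denote the polynomial obtained from $P'$ by relabeling variables so that it acts on the pair of restrictions $(h|_S,h|_T)$: if $S=\set{s_1,\dots,s_F}$ and $T=\set{t_1,\dots,t_G}$, identify the $f$-variable $x_{ij}$ with the indicator variable $z_{s_i,j}$ of $h$, and similarly for the $g$-variables. Set
\[
P \deq \Ex_{(S,T)}\bigl[P'_{ST}\bigr],
\]
the expectation being over uniformly random disjoint $(S,T)$ of the prescribed sizes. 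Then $\deg(P)\le d$ and $P\in[0,1]$. If $h$ is one-to-one, then $h(S)\cap h(T)=\emptyset$ for every disjoint $(S,T)$, so $(h|_S,h|_T)$ has no claw, $P'_{ST}\in[0,\epsilon]$, and hence $P\le\epsilon$. If $h$ is two-to-one, a claw of $(h|_S,h|_T)$ exists iff some of the $M/2$ collision pairs of $h$ has exactly one endpoint in $S$ and the other in $T$ (since $S\cap T=\emptyset$, the alternative that both preimages lie in one set and an additional preimage lies in the other cannot arise). A direct counting shows each such pair is ``crossed'' with probability $2FG/(M(M-1))=2/(FG-1)$, so the expected number of crossed pairs is $FG/(FG-1)\approx 1$. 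A second-moment estimate (Paley--Zygmund or Janson-style) then yields $\Pr_{(S,T)}[(h|_S,h|_T)\text{ has a claw}]\ge c_0$ for some absolute constant $c_0>0$, and hence $P(h)\ge (1-\epsilon)c_0$ on two-to-one $h$.

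With $\epsilon$ chosen small enough that $(1-\epsilon)c_0>\epsilon$, the values of $P$ on one-to-one versus two-to-one inputs are separated by a positive constant gap. Composing $P$ with a univariate polynomial of constant degree mapping $[0,\epsilon]$ into $[0,\delta]$ and $[(1-\epsilon)c_0,1]$ into $[1-\delta,1]$ (while keeping its image inside $[0,1]$ on $[0,1]$) produces $\tilde{P}$ of degree $O(d)$ that $\delta$-approximates the promised collision property for some constant $\delta<1/2$; the existence of such a univariate polynomial is standard given the constant gap. Combined with the Aaronson--Shi bound, this proves the theorem for $\epsilon=0.1$, and \autoref{lm:deg_linearly_related} extends it to every constant $\epsilon\in(0,1/2)$. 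The main obstacle is the combinatorial claim $\Pr[\text{claw}]\ge c_0$ in the two-to-one case: because the expected number of crossed pairs is only $\Theta(1)$, direct concentration is unavailable, and one must carefully bound the pairwise correlations between the ``pair is crossed'' indicator events for distinct pairs, which is the step where the structure of the $M/2$-pair partition matters; once this is handled with any explicit positive $c_0$, the remainder of the argument is routine.
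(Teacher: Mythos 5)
Your overall reduction is the paper's: from a polynomial $P'$ $\epsilon$-approximating the claw property on $[FG]^F\times [FG]^G$, average $P'_{ST}$ over random disjoint $(S,T)$ of sizes $F,G$ to obtain a low-degree polynomial separating one-to-one from two-to-one $h\in [M]^{[M]}$ ($M=FG$), and then invoke the Aaronson--Shi $\Omega(M^{1/3})$ bound together with \autoref{lm:deg_linearly_related}. The substantive divergence is how you establish the key combinatorial lemma that, for two-to-one $h$, the probability $p_{\mathrm{cl}}(h)=\Pr_{(S,T)}[(h|_S,h|_T)\text{ has a claw}]$ is bounded below by an absolute constant. You propose to count ``crossed'' collision pairs of $h$, compute $\mathbb{E}[X]\approx 1$, and appeal to a second-moment / Paley--Zygmund bound to get $\Pr[X>0]\ge c_0$. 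This route is plausible---a short calculation gives $\mathbb{E}[X_iX_j]\approx 4/M^2$ for distinct pairs, hence $\mathbb{E}[X^2]\approx 2$ and Paley--Zygmund would give $\Pr[X>0]$ roughly $\ge 1/2$---but you explicitly flag the correlation bound as the ``main obstacle'' and leave it unproved, so the proposal as written has a genuine gap precisely at the decisive step.

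The paper's \autoref{cl:ub_on_p_cl} avoids second moments entirely by conditioning. With probability close to $1$ the restriction $h|_S$ is injective (a birthday-style estimate, since $|S|=F\ll\sqrt{FG}=\sqrt{M}$), and conditioned on injectivity the set $h^{-1}(h(S))\setminus S$ has exactly $F$ elements inside $[M]\setminus S$, which the uniformly random $G$-set $T\subset [M]\setminus S$ hits with probability roughly $1-e^{-FG/M}$, bounded away from $0$ at $M=FG$. Two explicit inequalities (with $\delta=\delta'=0.4$) give $p_{\mathrm{cl}}(h)>0.2$ with no need to control pairwise correlations among $M/2$ events. If you do carry out the second-moment calculation you would have a valid alternative proof, but the paper's conditioning argument is cleaner and already complete; it also settles the final normalization with the affine map $\tilde{P}=(25P+18)/43$, preserving the degree exactly rather than only up to the constant-factor overhead of a univariate amplification polynomial.
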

\begin{proof}
The basic idea is as follows. Suppose that there exists a polynomial $P'$ of degree $d$ that 0.1-approximates the claw property for functions 
 $f\in [M]^{[F]}$ and $g\in [M]^{[G]}$ such that $F\le G$. Based on $P'$,
we will then construct a polynomial $P$ of degree $d$ that (41/86)-approximates the collision property for functions 
$h\in [M]^{[M]}$
for $M=FG$
with the promise that $h$
 is either one-to-one or two-to-one.
Here we mean by the collision property the Boolean function $\Phi$ which is $\true$ if and only if the input function $h$ is two-to-one.
Since the minimum degree of the latter polynomial is $\Omega(M^{1/3})$~\cite{AarShi04JACM,Amb05ToC}, we obtain $d=\Omega(M^{1/3})$. Below, we make this idea more precise.

Take a pair of disjoint subsets $S\subset [M]$ and $T\subset [M]$ of size $F$ and $G$, respectively, and consider the restrictions
of $h$ to $S$ and $T$, 
$h|_S\in [M]^S$ and $h|_T\in [M]^T$,
respectively.
Here, we assume that $h$ depends on Boolean variables $x_{ij}$ for $i,j\in [M]$
such that $x_{ij}=1$ if and only if $h(i)=j$. Accordingly,
$h|_S$ depends on $x_{ij}$ for $i\in S$ and $j\in [M]$; 
$h|_T$ depends on $x_{ij}$ for $i\in T$ and $j\in [M]$.
Let  $p_{\rm cl}(h)$ be the probability 
that $h|_S$ and $h|_T$ have a claw
over a pair of disjoint subsets $S$ and $T$ (of fixed size $F$ and $G$, respectively) chosen uniformly at random.

For disjoint subsets $S,T\subset [M]$,
let $P'_{ST}$ be a polynomial of degree $d$
(in variables $x_{ij}$ for $i\in S\sqcup T$ and $j\in [M]$)
that $\epsilon$-approximates the claw property on function pairs $[M]^S\times [M]^T$.
For every function $h$, let $P'_{ST}(h)$ denote the value of $P'_{ST}$ at the assignment
to $x_{ij}$ corresponding to $h|_S$ and $h|_T$.
Then, define a polynomial $P$ as the average of $P'_{ST}$ over uniformly random disjoint subsets $S,T\subset [M]$ of size $F$ and $G$, respectively:
\begin{equation*}
    P\deq \Ex_{S,T\subset [M]\colon S\cap T=\emptyset}[P'_{ST}].
\end{equation*}
Thus, $P$ is a polynomial of degree at most $d$ in variables $x_{ij}$ for $i,j\in [M]$. Let $P(h)$ denote the average of $P'_{ST}(h)$ over $S,T$ for every function $h$.

If $h$ is one-to-one,
then $h|_S$ and $h|_T$ obviously have no claws for every disjoint $S$ and $T$,
i.e., $p_{\rm cl}(h)=0$.
Thus, it holds $0\le P'_{ST}(h)\le \epsilon$ for all disjoint $S,T\subset [M]$. This implies 
\begin{equation*}
0\le P(h)\le \epsilon.
\end{equation*}
If $h$ is two-to-one, then $P'_{ST}(h)\ge 1-\epsilon$ 
for a $p_{\rm cl}(h)$
fraction of all pairs of disjoint subsets $S$ and $T$. This implies
\begin{equation*}
    P(h)\ge p_{\rm cl}(h) (1-\epsilon).
\end{equation*}
If we set $M=FG$ for $G\ge F\ge 400$ and $\epsilon=0.1$, then 
it follows from \autoref{cl:ub_on_p_cl} that
$P(h)> 0.2\cdot (1-0.1)=0.18$. Therefore, $P$ separates the one-to-one and two-to-one cases with a constant gap. By scaling and shifting $P$, we have a polynomial that $O(1)$-approximates the collision property. For instance, $\tilde{P}\deq (25P+18)/43$ is sufficient and its approximation error is at most $41/86$. Therefore, $\deg(\tilde{P})=\Omega(M^{1/3})=\Omega((FG)^{1/3})$. 
Since $d\ge \deg(P)=\deg(\tilde{P})$,
the minimum degree $d$ of a polynomial that 0.1-approximates the claw property for function pairs $[FG]^F\times [FG]^G$ is $\Omega\left((FG)^{1/3}\right)$.
By \autoref{lm:deg_linearly_related}, the theorem holds.
\end{proof}

\begin{claim}\label{cl:ub_on_p_cl}
    Suppose that $h\in [M]^{[M]}$
    is two-to-one. 
    For $M=FG$ and $G\ge F\ge 400$, it holds that $p_{\rm cl}(h)>0.2$.
\end{claim}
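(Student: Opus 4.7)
The plan is to apply the second moment method (Paley--Zygmund inequality). Since $h$ is two-to-one, the domain $[M]$ partitions into $M/2$ preimage pairs $\{a_i,b_i\}$ with $h(a_i)=h(b_i)$. A claw between $h|_S$ and $h|_T$ exists if and only if at least one such pair is \emph{split}, meaning one of its two elements lies in $S$ and the other in $T$. Let $X_i$ be the indicator that pair $i$ is split and set $Y\deq \sum_{i=1}^{M/2} X_i$. Then $p_{\rm cl}(h)=\Pr[Y\ge 1]\ge \Ex[Y]^2/\Ex[Y^2]$, so it suffices to estimate the first two moments of $Y$ under the uniform distribution on disjoint pairs $(S,T)$ of sizes $F$ and $G$.

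For the first moment, a direct hypergeometric computation gives $\Ex[X_i]=2\Pr[a_i\in S,\,b_i\in T]=\frac{2FG}{M(M-1)}$, hence $\Ex[Y]=\frac{M}{2}\cdot\frac{2FG}{M(M-1)}=\frac{FG}{M-1}$. For the cross terms with $i\neq j$, the four elements $a_i,b_i,a_j,b_j$ are distinct, and among the $2\times 2$ choices of which element of each pair enters $S$ (the other entering $T$), each specific configuration occurs with probability $\frac{F(F-1)G(G-1)}{M(M-1)(M-2)(M-3)}$. Summing over the $\frac{M(M-2)}{4}$ ordered pairs $(i,j)$ yields $\sum_{i\neq j}\Ex[X_iX_j]=\frac{F(F-1)G(G-1)}{(M-1)(M-3)}$. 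Combining the two moments produces the closed form
\begin{equation*}
\frac{\Ex[Y]^2}{\Ex[Y^2]}=\frac{FG(M-3)}{(M-1)\bigl[(M-3)+(F-1)(G-1)\bigr]}.
\end{equation*}

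Substituting $M=FG$ collapses the bracket to $2FG-F-G-2\le 2FG$, so the ratio is bounded below by $\frac{FG(FG-3)}{2(FG)^2}=\frac{1}{2}-\frac{3}{2FG}$. Using the hypothesis $G\ge F\ge 400$, we have $FG\ge 160000$, and therefore $p_{\rm cl}(h)>\frac{1}{2}-10^{-5}$, which is comfortably larger than $0.2$.

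I do not anticipate a real obstacle; the main task is careful bookkeeping in the hypergeometric moment computations (correctly counting the disjoint pairs $(S,T)$ that contain a prescribed $2$- or $4$-element set, and counting the $2\times 2$ orientations of two distinct split pairs). The numerical hypothesis $F\ge 400$ is convenient slack that makes the lower-order corrections in $M=FG$ negligible, but the asymptotic conclusion $p_{\rm cl}(h)\to 1/2$ holds uniformly in $F\le G$ as $FG\to\infty$, so the constant $0.2$ demanded by \autoref{th:DegreeOfClawForLargeRange} is easily achieved.
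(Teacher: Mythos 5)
Your Paley--Zygmund argument is correct and takes a genuinely different route from the paper. The paper argues in two stages: first that $h|_S$ is injective with probability at least $1-\delta$ (a birthday estimate exploiting $F\ll\sqrt{M}$), then that conditionally $T$ hits the $F$-element set $h^{-1}(h(S))\setminus S$ with probability at least $1-\delta'$ (exploiting $M\lesssim FG$), and concludes $p_{\rm cl}(h)>0.2$ with $\delta=\delta'=0.4$. You instead observe that a claw between $h|_S$ and $h|_T$ exists exactly when some preimage pair of the two-to-one $h$ is split between $S$ and $T$, compute the first and second moments of the split count $Y$ by exact hypergeometric counting, and invoke $\Pr[Y\ge 1]\ge\Ex[Y]^2/\Ex[Y^2]$. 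I checked the closed form $\Ex[Y]^2/\Ex[Y^2]=FG(M-3)/\bigl((M-1)\bigl[(M-3)+(F-1)(G-1)\bigr]\bigr)$ and its specialization at $M=FG$ to $p_{\rm cl}(h)\ge\frac{1}{2}-\frac{3}{2FG}$, which is considerably stronger than the paper's $0.2$. Your route buys a tighter, essentially optimal second-moment constant and avoids the two-stage conditioning, while the paper's staged argument more visibly isolates the two constraints $F^2\ll M\lesssim FG$ that motivate the choice $M=FG$; either proof is fully adequate for the downstream use in \autoref{th:DegreeOfClawForLargeRange}.
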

\begin{proof}
    To choose random disjoint subsets $S$ and $T$, we first choose a random subset $S$ of $[M]$ and then choose
    a random subset $T$ of $[M]\setminus S$. 
Observe that, if $h|_S$ is injective and, in addition, $h^{-1}(h(S))\setminus S$ intersects $T$, then a claw exists.
Thus, we obtain
\begin{equation*}
    p_{\rm cl}(h)>
    \Pr_{S,T\subset [M]\colon S\cap T=\emptyset}\Cset{
    (h|_S\text{ is injective})\wedge (T\text{ intersects } h^{-1}(h(S))\setminus S)
    }.
\end{equation*}
We will derive a bound on $M$ that satisfies
\[
\Pr_{S\subset [M]}\Cset{ h|_S\text{ is injective}}\ge 1-\delta\ \text{and}\ 
\Pr_{T\subset [M]\setminus S}\Cset{
     (T\text{ intersects } h^{-1}(h(S))\setminus S)
     \mid
    (h|_S\text{ is injective})
    }\ge 1-\delta', 
\]
    so that
\begin{equation*}
     p_{\rm cl}(h)>(1-\delta)(1-\delta')>1-(\delta+\delta').    
\end{equation*}
    The probability $\Pr_S\Cset{ h|_S\text{ is injective}}$ is 
    \begin{align*}
        \frac{M(M-2)\cdots(M-2(F-1))}{F!\binom{M}{F}}
        &= \frac{M(M-2)\cdots(M-2(F-1))}{M(M-1)\cdots(M-F+1)}\\
        &= 1\cdot \left(1-\frac{1}{M-1}\right)\left(1-\frac{2}{M-2}\right)\cdots \left(1-\frac{F-1}{M-F+1}\right)\\
        &>\prod_{j=1}^{F-1}\exp \left(-\frac{j}{M-2j} \right)\\
        &> \exp\left( -\frac{F(F-1)}{2}\frac{1}{M-2F} \right)
        > \exp\left( -\frac{F^2}{2M-F^2/100} \right),
    \end{align*}
    where the first inequality follows from $1+x\ge \exp(\frac{x}{1+x})$, and
    the last inequality follows from $4F\le F^2/100$ for $F\ge 400$.
    For the last formula to be at least $1-\delta$, we have
\begin{align}\label{eq:lb_on_M}
    M&\ge \frac{F^2}{2}\Cset{
    \frac{1}{\ln \left(\frac{1}{1-\delta}\right)}+\frac{1}{100}
    }.
\end{align}
Next, we consider a bound on $M$ such that
\begin{equation*}
    \Pr_{T\subset [M]\setminus S}\Cset{
     (T\text{ intersects } h^{-1}(h(S))\setminus S)
     \mid
    (h|_S\text{ is injective})
    }
    =1-\frac{\binom{M-2F}{G}}{\binom{M-F}{G}}
\end{equation*}
is lower-bounded by $1-\delta'$. 
Equivalently, we will derive a bound on $M$ such that
${\binom{M-2F}{G}}/{\binom{M-F}{G}}\le \delta'$.
\begin{align*}
    \frac{\binom{M-2F}{G}}{\binom{M-F}{G}}&=\frac{(M-2F)(M-2F-1)\cdots(M-2F-(G-1))}{(M-F)(M-F-1)\cdots(M-F-(G-1))}\\
    &=  \left(1-\frac{F}{M-F}\right)\left(1-\frac{F}{M-F-1}\right)\cdots \left(1-\frac{F}{M-F-(G-1)}\right)\\
    &<\prod_{j=0}^{G-1}\exp\left(-\frac{F}{M-F-j}   \right)
    <\exp \left(-\frac{FG}{M} \right).
\end{align*}    
For this to be at most $\delta'$, we have
\begin{equation}\label{eq:ub_on_M}
    M\le \frac{FG}{\ln({1}/{\delta'})}.
\end{equation}
From Ineqs.~(\ref{eq:lb_on_M}) and (\ref{eq:ub_on_M}), if there exist $M,\delta,\delta'$ such that $\delta+\delta'<1$ and
\begin{equation}\label{eq:lbub_on_M}
\frac{F^2}{2}\Cset{
    \frac{1}{\ln \left(\frac{1}{1-\delta}\right)}+\frac{1}{100}
    }
\le
\frac{FG}{2}\Cset{
    \frac{1}{\ln \left(\frac{1}{1-\delta}\right)}+\frac{1}{100}
    }
\le
M
\le \frac{FG}{\ln({1}/{\delta'})}, 
\end{equation}
then 
$ p_{\rm cl}(h)> 1-(\delta+\delta')>0.$

For $\delta=\delta'=0.4$, 
Ineq.~(\ref{eq:lbub_on_M}) is
$0.98\ldots \times FG\le M\le 1.09...\times FG$.
Therefore, it holds $p_{\rm cl}(h)> 1-(0.4+0.4)=0.2$
for $M=FG$.
\end{proof}
We obtain the following theorem as a corollary of \autoref{th:bisymmetricproperty}.
\begin{theorem}[restatement of \autoref{th:DegreeOfClawForSmallRange_intro}]
\label{th:DegreeOfClawForSmallRange}
Let $F,G\in \Natural$ be such that $F\le G$.
For any constant $0<\epsilon<1/2$ and any $M\ge F+G$, the minimum degree of a polynomial that $\epsilon$-approximates the claw property for function pairs in $[M]^{[F]}\times [M]^{[G]}$ is $\Theta(\sqrt{G}+(FG)^{1/3})$,
where the lower bound $\Omega(\sqrt{G})$ holds for all $M\ge 2$.
\end{theorem}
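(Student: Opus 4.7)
The plan is to obtain the theorem as a direct corollary of the machinery already developed. For the upper bound $O(\sqrt{G}+(FG)^{1/3})$ on the $\epsilon$-approximate degree, I would apply \autoref{lm:polynomialmethod_main} to Tani's quantum claw-finding algorithm~\cite{Tan09TCS}, which uses $O(\sqrt{G}+(FG)^{1/3})$ queries with constant error probability. This yields an approximating polynomial of the desired degree for every range size $M$, and in particular for any $M\ge F+G$. The weak lower bound $\Omega(\sqrt{G})$ for all $M\ge 2$ is obtained by reducing unstructured search on a database of size $G$ to the claw problem: take $f$ to be the constant function with $f(\cdot)=1$, so that a claw exists if and only if $g\colon [G]\to [M]$ attains the value $1$; the standard $\Omega(\sqrt{G})$ Grover lower bound (i.e., the approximate degree of $\mathrm{OR}_G$) then transfers via \autoref{lm:polynomialmethod_main}.

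For the main lower bound $\Omega((FG)^{1/3})$ in the regime $M\ge F+G$, the key step is to combine \autoref{th:DegreeOfClawForLargeRange} with \autoref{th:bisymmetricproperty}. Write $\Phi_M$ for the claw property on $[M]^{[F]}\times[M]^{[G]}$. \autoref{th:DegreeOfClawForLargeRange} directly gives $\adeg_{\epsilon}(\Phi_{FG})=\Omega((FG)^{1/3})$. Since the claw property is manifestly bisymmetric (invariant under independent permutations of the two domains and under a common permutation of the range), \autoref{th:bisymmetricproperty} implies that whenever $M_1,M_2\ge F+G$ one has $\adeg_\epsilon(\Phi_{M_1})=\adeg_\epsilon(\Phi_{M_2})$. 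Applied with $(M_1,M_2)=(\max(M,FG),\min(M,FG))$, this transports the $\Omega((FG)^{1/3})$ lower bound from the range $FG$ to the given range $M\ge F+G$, provided $FG\ge F+G$; the only excluded case $F=G=1$ is trivially absorbed into the $\Omega(\sqrt{G})$ bound already established.

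No substantial obstacle is anticipated, since both of the required results have already been proved. The only hypotheses to verify are the bisymmetry of the claw property (immediate from its definition) and the relation $FG\ge F+G$ that lets $FG$ serve as a valid range size in \autoref{th:bisymmetricproperty}; both are easily checked. Combining the matching upper and lower bounds yields the claimed $\Theta(\sqrt{G}+(FG)^{1/3})$ behavior for every $M\ge F+G$, while the $\Omega(\sqrt{G})$ lower bound continues to hold for all $M\ge 2$.
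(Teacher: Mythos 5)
Your proposal is correct and follows essentially the same route as the paper: Theorem~\ref{th:DegreeOfClawForLargeRange} gives $\Omega((FG)^{1/3})$ at range $FG$, Theorem~\ref{th:bisymmetricproperty} transports it to any $M\ge F+G$, the OR-restriction gives $\Omega(\sqrt{G})$, and Tani's algorithm with the polynomial method gives the matching upper bound. The only small imprecision is citing Lemma~\ref{lm:polynomialmethod_main} for the $\Omega(\sqrt{G})$ transfer -- that direction is a variable-restriction argument (restrict $f\equiv 1$ and $g$ to range $\{1,2\}$, reducing the approximating polynomial to one for $\mathrm{OR}_G$), not an invocation of the algorithm-to-polynomial lemma -- but this does not affect the validity of the plan.
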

\begin{proof}
    Let $\Phi$ be the claw property for functions in $[FG]^{[F]}\times [FG]^{[G]}$. By \autoref{th:DegreeOfClawForLargeRange}, we have $\adeg_{\epsilon}(\Phi)=\Omega((FG)^{1/3})$ when $M=FG$.
    By applying \autoref{th:bisymmetricproperty}, we have $\adeg_{\epsilon}(\Phi')=\Omega((FG)^{1/3})$, where $\Phi'$ is the restriction of $\Phi$ to the domain $[M']^{[F]}\times [M']^{[G]}$ for any $M'$ with $F+G\le M' <FG$.
    This proves a $\Omega((FG)^{1/3})$ lower bound in the statement of the theorem. 

    The $\Omega(\sqrt{G})$ lower bound follows using the reduction from the unstructured search~\cite{BuhDurHeiHoyMagSanWol01CCC}: Let $p$ be a polynomial of $\deg_{\epsilon}(\Phi)$ in variables $x_{ij},y_{kj}$ for $(i,k,j)\in [F]\times[G]\times [M]$ that $\epsilon$-approximates the claw property $\Phi$ on a function pair $(f,g)\in [M]^{[F]}\times [M]^{[G]}$. Suppose $M\ge 2$.
    Then consider the special case where
    $f$ maps all inputs to $1$ and $g$ maps all inputs to either $1$ or $2$. This is equivalent to setting $x_{i1}=1$ for all $i\in [F]$, $x_{ij}=0$ for all $i,j$ with $j\ge 2$, and $y_{kj}=0$ for all $k,j$ with $j\ge 3$. 
    Then, the polynomial $p'$ resulting from $p$ with these replacements depends only on $y_{kj}$ over $(k,j)\in [G]\times [2]$. Since we can assume $y_{k2}=1-y_{k1}$ in this special case, $p'$ can be rewritten as a polynomial of the same degree in $y_{k1}$ over $k\in [G]$. 
    Observe now that, if there exists $k$ such that $y_{k1}=1$ (meaning the existence of a claw for $f,g$), 
    it holds $1-\epsilon\le p'\le 1$; otherwise, there exists no claw and thus $0\le p'\le \epsilon$. This implies that $p'$ $\epsilon$-approximates the OR function over $G$ variables $y_{k1}$'s. Because of $\adeg_{\epsilon}(OR_G)=\Theta(\sqrt{G})$~\cite{Pat92STOC}, we have $\adeg_{\epsilon}(\Phi)\ge \deg{p'}=\Omega(\sqrt{G})$.
    
    Since there is an $O\left(\sqrt{G}+(FG)^{1/3}\right)$-query quantum algorithm that detects a claw (and finds it if it exists)~\cite{Tan09TCS}, 
    the polynomial method (\autoref{lm:polynomialmethod_main}) implies that the lower bound $\Omega\left(\sqrt{G}+(FG)^{1/3}\right)$ is tight
    for $M\ge F+G$.
\end{proof}

The proof of \autoref{th:DegreeOfClawForSmallRange} implies the following optimal quantum query complexity.
\begin{theorem}[restatement of \autoref{th:QueryCompClawWithSmallRange_intro}]
\label{th:QueryCompClawWithSmallRange}
Let $F,G\in \Natural$ be such that $F\le G$.
For every $M\ge F+G$,
the (query-)optimal quantum algorithm that computes a claw (or detects the existence of a claw) with a constant error probability
requires $\Theta(\sqrt{G}+(FG)^{1/3})$ queries for a given pair of functions in $[M]^{[F]}\times [M]^{[G]}$.
More precisely, the optimal quantum query complexity is $\Theta ((FG)^{1/3})$ 
for every $M\ge F+G$ if $F\le G\le F^2$ 
and $\Theta(\sqrt{G})$ for every $M\ge 2$ if $G>F^2$.
\end{theorem}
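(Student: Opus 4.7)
The plan is to deduce the theorem directly from \autoref{th:DegreeOfClawForSmallRange} via the polynomial method, and then match the resulting lower bound with the known quantum upper bound. First, I would invoke \autoref{th:DegreeOfClawForSmallRange} to obtain that, for every $M \ge F+G$ and every constant $0<\epsilon<1/2$, any polynomial $\epsilon$-approximating the claw property on $[M]^{[F]}\times [M]^{[G]}$ has degree $\Omega(\sqrt{G}+(FG)^{1/3})$. Applying \autoref{lm:polynomialmethod_main} (the polynomial method for two-function oracles), this immediately gives the lower bound $Q_{\epsilon}(\claw_{(F,G)\to M}) = \Omega(\sqrt{G}+(FG)^{1/3})$ for the claw \emph{detection} problem. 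The same lower bound carries over to claw \emph{finding} because any algorithm that finds a claw can be used to detect one with no query overhead.

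For the matching upper bound, I would cite the quantum-walk algorithm of Tani~\cite{Tan09TCS}, which actually \emph{finds} a claw (and hence detects one) with constant error using $O(\sqrt{G}+(FG)^{1/3})$ queries for any common range. This yields the first (coarse) part of the statement, namely $\Theta(\sqrt{G}+(FG)^{1/3})$ for every $M \ge F+G$, for both detection and finding.

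For the refined statement, I would split on the parameter regime. When $F\le G\le F^2$, the inequality $G \le F^2$ is equivalent to $\sqrt{G}\le F^{1/3}G^{1/3}\cdot G^{1/6}/G^{1/6} = (FG)^{1/3}$ (equivalently $G^{3/2}\le FG$), so $\sqrt{G}+(FG)^{1/3}=\Theta((FG)^{1/3})$. When $G>F^2$, the reverse inequality $(FG)^{1/3}<\sqrt{G}$ holds, so $\sqrt{G}+(FG)^{1/3}=\Theta(\sqrt{G})$; the matching upper bound in this regime is obtained simply by Grover search on the larger domain $[G]$ after fixing one value in the image of $f$, giving $O(\sqrt{G})$ queries and valid for any $M\ge 2$. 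The $\Omega(\sqrt{G})$ lower bound for all $M\ge 2$ is already noted in \autoref{th:DegreeOfClawForSmallRange} (via the reduction from unstructured search on $G$ elements by fixing $f\equiv 1$ and letting $g$ take values in $\{1,2\}$), so combining the two halves of the case analysis yields precisely the refined bounds stated.

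There is no substantive obstacle here: the work has been done in \autoref{th:DegreeOfClawForSmallRange} and in~\cite{Tan09TCS,BuhDurHeiHoyMagSanWol01CCC}. The only care needed is verifying the two elementary algebraic comparisons between $\sqrt{G}$ and $(FG)^{1/3}$ under the hypothesis $G\lessgtr F^2$, and observing that the lower bound for the finding problem is inherited from the detection lower bound (since a finding algorithm solves detection by outputting whether it succeeded).
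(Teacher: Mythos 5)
Your proposal is correct and takes essentially the same route as the paper: \autoref{th:DegreeOfClawForSmallRange} plus the polynomial method (\autoref{lm:polynomialmethod_main}) gives the lower bound, Tani's algorithm gives the matching upper bound, and the detection-versus-finding reduction and the elementary comparison $\sqrt{G}\lessgtr(FG)^{1/3}\iff G\lessgtr F^2$ finish the case split; this is exactly what the paper does (it in fact states less, merely pointing to the previous theorem's proof). One small slip worth fixing: the description ``Grover search on $[G]$ after fixing \emph{one} value in the image of $f$'' is not a correct algorithm, since a claw may involve a different value of $f$; what works (and what the cited sources give) is to query $f$ on all $F$ inputs, which is affordable since $F<\sqrt{G}$ when $G>F^2$, and then Grover-search $[G]$ for a preimage under $g$ of any element of $f([F])$, yielding $O(F+\sqrt{G})=O(\sqrt{G})$. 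Alternatively, you may simply observe that Tani's $O(\sqrt{G}+(FG)^{1/3})$ bound already collapses to $O(\sqrt{G})$ in this regime, or cite Buhrman et al.'s $O(\sqrt{G}+F^{1/2}G^{1/4})$, which also does; either makes the intended point without introducing a flawed algorithm sketch.
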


\section{Claw Problem with Much Smaller Range}
\label{sec:ClawProblemWithMuchSmallerRange}
This section proves \autoref{th:ComplexityClawWithSmallerRange_intro}. The proof generalizes the idea used in the case of the same domain size~\cite{AmbBalIra21TQC}: We first reduce the composition of 
a certain problem (called $\pSearch$) and the claw problem for a larger size of the function range, to the claw problem in question, 
and then use the composition theorem of the adversary method.

\begin{theorem}[\cite{HoyLeeSpa07STOC}]\label{th:GeneralAdversaryBound}
    For finite sets $C$ and $E$, and $\calD\subseteq C^n$, let $f\colon \calD\to E$. An adversary matrix
    for $f$ is a $\abs{\calD}\times \abs{\calD}$ real, symmetric matrix $\Gamma$ that satisfies
    $\bra{x}\Gamma\ket{y}=0$ for all $x,y\in \calD$ with $f(x)=f(y)$. Define the general adversary bound as
    \begin{equation*}
        \ADV^{\pm}(f)\deq \max_{\Gamma}\frac{\norm{\Gamma}}{\max_{j\in [n]}\norm{\Gamma\circ \Delta_j}},
    \end{equation*}
    where the maximization is over all possible adversary matrices $\Gamma$ of $f$, and where $\Gamma\circ\Delta_j$ denotes the entry-wise matrix product between $\Gamma$ and $\Delta_j=\sum_{x,y\in D\colon x_j\neq y_j}\ket{x}\bra{y}$. Then, the two-sided $\epsilon$-bounded error quantum query complexity of $f$ is lower bounded by $\ADV^{\pm}(f)$ as follows:
    \begin{equation*}
        Q_\epsilon(f)\ge \frac{1-2\sqrt{\epsilon(1-\epsilon)}}{2}\ADV^{\pm}(f).
    \end{equation*}
\end{theorem}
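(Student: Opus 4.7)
The plan is to execute the standard hybrid/potential-function argument that underlies all adversary-type lower bounds in quantum query complexity. Fix a candidate adversary matrix $\Gamma$ and an arbitrary $T$-query quantum algorithm $\mathcal{A}$ for $f$ with error at most $\epsilon$. Model $\mathcal{A}$ as $U_T O_x U_{T-1} O_x \cdots U_1 O_x U_0 \ket{0}$, producing on input $x\in\calD$ the state $\ket{\psi_x^t}$ after $t$ queries, where the $U_t$ are input-independent unitaries and $O_x$ is the standard phase-query oracle acting on the query-index register. Take a unit vector $\delta$ with $\delta^T\Gamma\delta=\norm{\Gamma}$ (available since $\Gamma$ is symmetric; if the top eigenvalue is negative, replace $\Gamma$ by $-\Gamma$, which is an equally admissible adversary matrix with the same ratio). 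I would track the progress measure
\begin{equation*}
W^t \deq \sum_{x,y\in\calD}\Gamma[x,y]\,\delta_x\delta_y\,\langle\psi_x^t\mid\psi_y^t\rangle.
\end{equation*}

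The argument then reduces to three estimates. First, at $t=0$ every input shares the same initial state, so $\langle\psi_x^0\mid\psi_y^0\rangle=1$ and hence $W^0=\delta^T\Gamma\delta=\norm{\Gamma}$. Second, whenever $f(x)\neq f(y)$ the algorithm's output distributions have total variation distance at least $1-2\epsilon$, which via the Fuchs--van de Graaf inequality for pure states yields $\abs{\langle\psi_x^T\mid\psi_y^T\rangle}\le 2\sqrt{\epsilon(1-\epsilon)}$; combined with the admissibility condition $\Gamma[x,y]=0$ whenever $f(x)=f(y)$, this gives $\abs{W^T}\le 2\sqrt{\epsilon(1-\epsilon)}\cdot\norm{\Gamma}$. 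Third, the input-independent unitaries $U_t$ leave $W^t$ unchanged (they act identically on every $\ket{\psi_x^t}$), while each oracle call changes $W^t$ by at most $2\max_{j\in[n]}\norm{\Gamma\circ\Delta_j}$: decomposing $\ket{\psi_x^t}=\sum_j\ket{j}\otimes\ket{\phi_{x,j}^t}$ in the query-index basis, one writes $W^{t+1}-W^t$ as a sum over $j$ of quadratic forms in the vectors $\{\delta_x\ket{\phi_{x,j}^t}\}_x$ and bounds each such form by $2\norm{\Gamma\circ\Delta_j}$.

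Combining the three estimates via the triangle inequality over $T$ queries gives $(1-2\sqrt{\epsilon(1-\epsilon)})\norm{\Gamma}\le \abs{W^0-W^T}\le 2T\max_j\norm{\Gamma\circ\Delta_j}$, so $T\ge \tfrac{1-2\sqrt{\epsilon(1-\epsilon)}}{2}\cdot\tfrac{\norm{\Gamma}}{\max_j\norm{\Gamma\circ\Delta_j}}$; maximizing over admissible $\Gamma$ yields the claimed lower bound $Q_\epsilon(f)\ge \tfrac{1-2\sqrt{\epsilon(1-\epsilon)}}{2}\,\ADV^{\pm}(f)$. The genuine technical obstacle is the per-query estimate for \emph{signed} $\Gamma$: the nonnegative case (Ambainis's original adversary bound) follows from Cauchy--Schwarz in a single line, but handling negative entries requires carefully regrouping terms so that the only operator coupling the $x$- and $y$-indexed blocks is $\Gamma\circ\Delta_j$ itself, after which its operator-norm bound on the concatenated vector $\sum_x\delta_x\ket{x}\otimes\ket{\phi_{x,j}^t}$ controls the change. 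This sign-sensitive regrouping is precisely the contribution of H{\o}yer--Lee--\v{S}palek, and is the one place where I would slow down and verify each sign in detail.
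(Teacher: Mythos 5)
The paper does not prove this theorem; it is stated as a known result and attributed to H{\o}yer, Lee, and \v{S}palek~\cite{HoyLeeSpa07STOC}, so there is no in-paper proof for your sketch to be compared against.

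Judged on its own, your sketch has the right scaffolding (progress measure $W^t$, bounds on $W^0$ and $W^T$, per-query increment bound), but the final-state estimate is exactly where it breaks, and you have mislocated where negative weights cause trouble. You assert that $\abs{\langle\psi_x^T|\psi_y^T\rangle}\le 2\sqrt{\epsilon(1-\epsilon)}$ for $f(x)\neq f(y)$, combined with $\Gamma[x,y]=0$ whenever $f(x)=f(y)$, ``gives'' $\abs{W^T}\le 2\sqrt{\epsilon(1-\epsilon)}\norm{\Gamma}$. That inference only works when $\Gamma$ and $\delta$ are entrywise nonnegative: one then writes $\abs{W^T}\le\sum_{x,y}\Gamma_{xy}\delta_x\delta_y\abs{\langle\psi_x^T|\psi_y^T\rangle}\le 2\sqrt{\epsilon(1-\epsilon)}\,\delta^{\top}\Gamma\delta = 2\sqrt{\epsilon(1-\epsilon)}\norm{\Gamma}$. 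With signed entries the same step produces $\delta^{\top}\abs{\Gamma}\delta$, which is controlled by the spectral norm of the entrywise absolute value $\abs{\Gamma}$ rather than by $\norm{\Gamma}$, and that ratio can grow without bound. More generally, an entrywise bound on the Gram matrix $S_{xy}=\langle\psi_x^T|\psi_y^T\rangle$ does not control the Hadamard-product norm $\norm{\Gamma\circ S}$ for signed $\Gamma$. The H{\o}yer--Lee--\v{S}palek argument handles this by decomposing $\ket{\psi_x^T}$ into the component projected onto the output subspace for $f(x)$ (whose cross inner products vanish for $f(x)\neq f(y)$, so $\Gamma$ annihilates that Gram matrix under $\circ$) plus an error component of norm $O(\sqrt\epsilon)$, and then invoking a Hadamard-product factorization-norm lemma to bound the leftover. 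That lemma is the genuinely sign-sensitive ingredient your sketch omits. By contrast, the per-query increment you flag as ``the genuine technical obstacle'' is the benign part once written as $\sum_j\langle u_j|(\Gamma\circ\Delta_j)\otimes I|u_j\rangle$ with $\ket{u_j}=\sum_x\delta_x\ket{x}\otimes\ket{\phi^t_{x,j}}$ and $\sum_j\norm{u_j}^2=1$: the bound $\abs{\langle u_j|M\otimes I|u_j\rangle}\le\norm{M}\cdot\norm{u_j}^2$ holds for any Hermitian $M$, signed or not.
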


\begin{definition}[$\pSearch$~\cite{BraHoyKalKapLapSal19JC}] $\pSearch_{K\to M}$ denotes the problem defined as follows.
For $K,M\in \Natural$, given an oracle in $([M]\cup\set\ast)^K$ with the promise that exactly one of the $K$
values is non-$\ast$, the goal is to output the non-$\ast$ value.
\end{definition}
Since $\pSearch_{K\to M}$ can be computed with $O(\sqrt{K})$ queries by using Grover search
and it requires $\Omega(\sqrt{K})$ quantum queries (by the adversary method~\cite{Amb02JCSS}),
\begin{equation}\label{eq:BoundForPsearch}
Q_{1/3}(\pSearch_{K\to M})=\Theta\left(\sqrt{K}\right).
\end{equation}
Brassard et al.~\cite{BraHoyKalKapLapSal19JC} proved the following composition lemma.
\begin{lemma}[\mbox{\cite[Theorem~9]{BraHoyKalKapLapSal19JC}}]
\label{lm:CompAdvPsearch}
    For any function $f\colon [M]^N\to Z$ and
    $\pSearch_{K\to M}$, define 
  $h\colon D\to Z$, where 
  $D\subset ([M]\cup\set\ast)^{KN}$ such that
  $((x_{11},\dots,x_{1K}),\dots, (x_{N1},\dots,x_{NK}))$ is in $D$ if and only if
  exactly one of $x_{i1},\dots,x_{iK}$ is non-$\ast$ for every $i\in [N]$,
  as
$
  h\deq f\circ \pSearch_{K\to M},
$ that is,
  \begin{equation*}
  h\left((x_{11},\dots,x_{1K}),\dots, (x_{N1},\dots,x_{NK})\right)
\deq f\left(\pSearch_{K\to M}(x_{11},\dots,x_{1K}),\dots, \pSearch_{K\to M}(x_{N1},\dots,x_{NK})\right).
\end{equation*}
  Then,
  \[\ADV^{\pm}(h)\ge \frac{2}{\pi}\ADV^{\pm}(f)
  \cdot\ADV^{\pm}(\pSearch_{K\to M}).
  \]
\end{lemma}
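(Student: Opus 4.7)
The plan is to construct an adversary matrix $\Gamma_h$ for $h$ out of optimal adversary matrices $\Gamma_f$ for $f$ and $\Gamma_g$ for $g := \pSearch_{K\to M}$ via a filtered tensor-product construction, and then to extract the multiplicative lower bound by separately controlling the spectral norm $\|\Gamma_h\|$ and the per-coordinate norms $\|\Gamma_h \circ \Delta_{(i,j)}\|$. The $2/\pi$ factor will absorb the geometric loss incurred because $g$ is partial (its inputs satisfy the exactly-one-non-$\ast$ promise), whereas the underlying tensor composition of adversary matrices would otherwise be exactly multiplicative.

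First, I would normalise $\Gamma_f$ and $\Gamma_g$ so that $\max_i \|\Gamma_f \circ \Delta_i\| = \max_j \|\Gamma_g \circ \Delta_j\| = 1$, so that $\|\Gamma_f\| = \ADV^{\pm}(f)$ and $\|\Gamma_g\| = \ADV^{\pm}(g)$. For each output $a \in [M]$, I would pick ``output-selector'' vectors $\alpha_a, \beta_a$ supported on $W_a := g^{-1}(a) \subseteq \domain(g)$, essentially restrictions to $W_a$ of principal singular vectors of $\Gamma_g$ between different output classes. I would then define
\[
\Gamma_h[\vec{x}, \vec{y}] \;=\; \Gamma_f\!\bigl[g(\vec{x}), g(\vec{y})\bigr] \, \prod_{i=1}^{N} \alpha_{g(x^{(i)})}(x^{(i)})\,\beta_{g(y^{(i)})}(y^{(i)}),
\]
which vanishes whenever $h(\vec{x}) = h(\vec{y})$ because the $\Gamma_f$ factor does, making $\Gamma_h$ a valid adversary matrix for $h$ on its promise domain $D$.

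Next, for the lower bound on $\|\Gamma_h\|$, I would lift principal singular vectors $\xi, \eta$ of $\Gamma_f$ through the selectors by setting $\widetilde{\xi}_{\vec{x}} = \xi_{g(\vec{x})} \prod_i \alpha_{g(x^{(i)})}(x^{(i)})$ and similarly $\widetilde{\eta}$, so that $\widetilde{\xi}^{\top} \Gamma_h \widetilde{\eta}$ factors a copy of $\|\Gamma_f\|$ against $N$ independent contributions governed by $\|\Gamma_g\|$; with appropriate normalisation of the selectors the bookkeeping yields $\|\Gamma_h\| \ge \|\Gamma_f\| \cdot \|\Gamma_g\|$. For the per-coordinate upper bound, a query at position $(i,j)$ perturbs only the $i$-th block: the perturbation either changes $g(x^{(i)})$ (absorbed by the $\Gamma_f$ row/column swap) or preserves it (absorbed by the $i$-th selector kernel, whose sensitivity is controlled by $\Gamma_g \circ \Delta_j$). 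A triangle-type decomposition then bounds $\|\Gamma_h \circ \Delta_{(i,j)}\|$ by $(\pi/2)\cdot \max_{j'}\|\Gamma_g \circ \Delta_{j'}\| = \pi/2$, so the ratio $\|\Gamma_h\|/\max_{(i,j)}\|\Gamma_h \circ \Delta_{(i,j)}\|$ yields $(2/\pi)\ADV^{\pm}(f)\ADV^{\pm}(g)$.

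The main obstacle I expect is threading the exactly-one-non-$\ast$ promise through the construction without losing more than this constant. Because $\domain(g) \subsetneq ([M]\cup\{\ast\})^K$, the selectors $\alpha_a, \beta_a$ must be supported strictly on $W_a$, which restricts their cross-output angular geometry and prevents a naive tensor lift from being tight. Making the upper bound on $\|\Gamma_h \circ \Delta_{(i,j)}\|$ sharp — plausibly via a Jordan-decomposition-style reduction of the relevant blocks of $\Gamma_g$ to two-dimensional rotation submatrices where the classical inequality giving the $\pi/2$ constant can be applied directly — and then reconciling the resulting bound with the promise-restricted lift is where I expect the real technical work to live.
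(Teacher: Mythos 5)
The paper does not prove this lemma; it imports it verbatim as Theorem~9 of Brassard, H{\o}yer, Kalach, Kaplan, Laplante, and Salvail, so there is no in-paper argument to compare your proposal against. What you can be judged on is whether your sketch stands on its own, and there it falls short.

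Your outline targets the right \emph{family} of technique --- build a candidate adversary matrix $\Gamma_h$ by tensoring/filtering an optimal $\Gamma_f$ with selectors tied to $\Gamma_g$ for $g=\pSearch_{K\to M}$, bound $\|\Gamma_h\|$ from below by lifting singular vectors, and bound each $\|\Gamma_h\circ\Delta_{(i,j)}\|$ from above. But the step that actually carries the theorem --- showing $\|\Gamma_h\circ\Delta_{(i,j)}\|\le \tfrac{\pi}{2}\,\max_{j'}\|\Gamma_g\circ\Delta_{j'}\|$ --- is asserted via an unspecified ``triangle-type decomposition'' plus a ``Jordan-decomposition-style reduction,'' neither of which is carried out; you yourself flag this as ``where the real technical work lives.'' That is a genuine gap, not a deferred routine check: the fact that $g$ is a \emph{partial} function on the exactly-one-non-$\ast$ promise is precisely what blocks the usual exactly-multiplicative composition (H{\o}yer--Lee--\v{S}palek), and the constant $2/\pi$ is the content of the cited result, obtained through spectral bookkeeping tailored to $\pSearch$, not to partial inner functions generically. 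A lossy bound with a worse unspecified constant would be easy; the specific $2/\pi$ is not. There is also a structural issue you do not address: as written,
\begin{equation*}
\Gamma_h[\vec{x},\vec{y}] \;=\; \Gamma_f\bigl[g(\vec{x}),g(\vec{y})\bigr]\prod_{i}\alpha_{g(x^{(i)})}(x^{(i)})\,\beta_{g(y^{(i)})}(y^{(i)})
\end{equation*}
is not obviously symmetric, and an adversary matrix must be real symmetric (see the definition in \autoref{th:GeneralAdversaryBound}); you would need $\alpha_a=\beta_a$ or an explicit symmetrization, and then re-verify that the lower bound on $\|\Gamma_h\|$ via lifted singular vectors survives. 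In short, this is a plan for an argument rather than a proof, and the pieces you leave open are exactly the ones that carry the weight of the lemma.
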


By employing \autoref{th:QueryCompClawWithSmallRange} and \autoref{lm:CompAdvPsearch}, we obtain the following theorem.
\begin{theorem}[restatement of \autoref{th:ComplexityClawWithSmallerRange_intro}]
\label{th:ComplexityClawWithSmallerRange}
Let $F,G,M\in \Natural$ be such that $F\le G\le F^2$ and $M< F+G$.%
\footnote{The condition of $F\le G\le F^2$ and $M< F+G$
means the parameter range that \autoref{th:QueryCompClawWithSmallRange} does not cover.}
Then, for every $M\in [2,F+G-1]$, the quantum query complexity of detecting the existence of a claw for a given function pair $(f,g)$ in $[M]^{[F]}\times [M]^{[G]}$ is lower-bounded by
\[
\Omega\left(\sqrt{G}+F^{1/3}G^{1/6}M^{1/6}\right)
=
\left\{
\begin{array}{lcl}
  \Omega(\sqrt{G})  & \text{if}& 2\le M< (G/F)^2,\\
  \Omega(F^{1/3}G^{1/6}M^{1/6})  & \text{if}& (G/F)^2\le M< F+G.
\end{array}
\right.
\]
\end{theorem}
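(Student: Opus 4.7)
The plan is to combine the composition theorem for the general adversary bound with \autoref{th:QueryCompClawWithSmallRange} along the lines sketched in the technical outline. I would build the lower bound in three steps: (i) a reduction $\claw_{(k,kG/F)\to M}\circ \pSearch_{F/k\to M}\preceq \claw_{(F,G)\to M+2}$ for divisibility-friendly $k$; (ii) chaining the adversary bounds for the inner claw and inner $\pSearch$ to get a lower bound on the composition; (iii) optimizing over $k$ and removing the divisibility and $M+2$-vs-$M$ artifacts.

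For step (i), fix $k\in\Natural$ with $k\mid F$ and $(F/k)\mid G$, write $N_1\deq k$, $N_2\deq kG/F$, $K\deq F/k$, and consider an instance of $\claw_{(N_1,N_2)\to M}\circ \pSearch_{K\to M}$. It consists of an $N_1\times K$ array and an $N_2\times K$ array over $[M]\cup\set\ast$ with exactly one non-$\ast$ entry per row. I would flatten these arrays into $f\colon [F]\to [M+2]$ and $g\colon [G]\to [M+2]$, replacing each $\ast$ on the $f$-side by the fresh symbol $M+1$ and each $\ast$ on the $g$-side by the fresh symbol $M+2$. Since $M+1$ never appears in $g$ and $M+2$ never appears in $f$, a claw of $(f,g)$ exists if and only if the extracted $N_1$- and $N_2$-tuples of non-$\ast$ values contain a claw, which is exactly the composed function's value. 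Hence $Q_{1/3}(\claw_{(F,G)\to M+2}) \ge Q_{1/3}(\claw_{(N_1,N_2)\to M}\circ \pSearch_{K\to M})$.

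For step (ii), \autoref{lm:CompAdvPsearch} together with \autoref{th:GeneralAdversaryBound} and the equivalence between bounded-error quantum query complexity and $\ADV^{\pm}$ gives
\[
Q_{1/3}(\claw_{(F,G)\to M+2})\;=\;\Omega\!\left(Q_{1/3}(\claw_{(N_1,N_2)\to M})\cdot Q_{1/3}(\pSearch_{K\to M})\right).
\]
The second factor is $\Omega(\sqrt{F/k})$ by \autoref{eq:BoundForPsearch}; the first is $\Omega(\sqrt{kG/F}+(k^2G/F)^{1/3})$ by \autoref{th:QueryCompClawWithSmallRange}, applicable because for the choice of $k$ below we will have $M\ge N_1+N_2=k+kG/F$. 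Multiplying yields $\Omega(\sqrt{G}+k^{1/6}F^{1/6}G^{1/3})$, and setting $k=\Theta(MF/G)$ so that $k+kG/F=\Theta(M)$ converts this into $\Omega(\sqrt{G}+F^{1/3}G^{1/6}M^{1/6})$, the target bound.

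The main obstacle is step (iii): justifying that the argument survives despite requiring $k\mid F$, $(F/k)\mid G$, and despite the reduction delivering a bound on range $M+2$ rather than $M$. I plan to round parameters: for arbitrary $(F,G,M)$ in the regime $F\le G\le F^2$ and $2\le M\le F+G-1$, pick $(\tilde F,\tilde G,\tilde k)$ within constant factors of $(F,G,MF/G)$ satisfying the divisibility constraints, observe that shifting the range by $2$ costs only a constant factor, and use trivial domain-padding and range-embedding reductions between claw problems of parameters that differ by constant factors to transfer the bound back to the original $(F,G,M)$. The additional $\Omega(\sqrt{G})$ summand (which dominates when $M<(G/F)^2$) follows from the standard reduction from Grover search on $G$ items, identical to the argument already used in the proof of \autoref{th:DegreeOfClawForSmallRange}. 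Taking the maximum of the two lower bounds and specializing to the two subcases $2\le M<(G/F)^2$ and $(G/F)^2\le M<F+G$ yields the case split stated in the theorem.
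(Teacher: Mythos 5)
Your proposal follows essentially the same route as the paper: the same reduction of $\claw_{(k,kG/F)\to M}\circ\pSearch_{F/k\to M}$ into $\claw_{(F,G)\to M+2}$ via $\ast\mapsto M+1$ and $\ast\mapsto M+2$, the same chaining of \autoref{lm:CompAdvPsearch} with \autoref{th:QueryCompClawWithSmallRange} and \autoref{eq:BoundForPsearch}, the same choice $k=\Theta(MF/G)$, and the same bookkeeping plan for divisibility and the $M$ versus $M+2$ shift. Your step (iii) is stated at a higher level than the paper's explicit construction ($F'=k\lfloor F/k\rfloor$, $G'=(F'/k)\lfloor G/(F'/k)\rfloor$, the monotone sequence $M_k$, and \autoref{cl:M_k} showing $M_{k+1}/M_k\le 8$), but the rounding/padding/monotonicity argument you sketch is exactly what the paper carries out.
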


\begin{proof}
We will show that the following problem $\calP$ is reducible to the claw problem
and then give a lower bound on the quantum query complexity of $\calP$.

The input to $\calP$ is a pair of functions 
$f:=(f_1,\dots ,f_F)\in ([M]\cup \set{\ast})^{[F]}$ 
and  
$g:=(g_1,\dots ,g_G)\in ([M]\cup \set{\ast})^{[G]}$ 
(as the sequence representation of functions; see~\autoref{subsec:notations}) for $M\ge 2$
with the following promise, where $f_i$ denotes 
$f(i)$ for $i\in [F]$ and $g_k$ denotes $g(k)$ for $k\in [G]$:
\begin{itemize}
    \item if $f$ is partitioned into $k$ blocks of size $F/k$, 
    \[
    (f_1,\dots,f_{F/k}),\dots,(f_{F-F/k+1},\dots,f_F),
    \]
    then each block contains exactly one non-$\ast$ value;
    \item if $g$ is partitioned into $kG/F$ blocks of size $F/k$, 
    \[
    (g_1,\dots,g_{F/k}),\dots,(g_{G-F/k+1},\dots,g_G),
    \]
    then each block contains exactly one non-$\ast$ value.
\end{itemize}
Here, we assume the promise that $k$ and $F/k$ divide $F$ and $G$, respectively.
This assumption will be removed later.

The goal of $\calP$ is to accept if,
when applying $\pSearch$ on each block of $f$ and $g$,
the collection of the outputs of $\pSearch$ for $f$
intersects the collection of the outputs of $\pSearch$ for $g$,
and reject otherwise. Equivalently, the goal is to detect the existence of a claw
for two functions identified with the following two sequences, respectively:
\[
\left(
    \pSearch_{F/k\to M}(f_1,\dots,f_{F/k}),\dots,\pSearch_{F/k\to M}(f_{F-F/k+1},\dots,f_F)
\right),
\]
\[
\left(
      \pSearch_{F/k\to M}(g_1,\dots,g_{F/k}),\dots,\pSearch_{F/k\to M}(g_{G-F/k+1},\dots,g_G)
\right).
\]
Observe that the problem $\calP$ for $f,g$ with  
$k|F$ and $(k/F)|G$
is reducible to detecting
the existence of a claw for function $f'\in [M+2]^{[F]}$ and  $g'\in [M+2]^{[G]}$,
where $f'$ and $g'$ are obtained by replacing every occurrence of $\ast$ with $M+1$ and $M+2$, respectively. That is, for each $(f,g)$ with the promise and the corresponding $(f',g')$, it holds that 
\[
\claw_{(F,G)\to M+2}(f',g')=\left(\claw_{(k,kG/F)\to M}\circ \pSearch_{F/k\to M}\right)(f,g),
\]
where $\claw_{(a,b)\to c}$is the decision problem of detecting the existence of a claw for given functions $(\phi,\psi)\in  [c]^{[a]}\times [c]^{[b]}$.
Note that, since each query access to $f'\ (g')$ can be simulated by a single query access to $f\ (\text{respectively},\ g)$, the modification to input functions needs no extra queries.
Therefore, the quantum query complexity of $\claw_{(F,G)\to M+2}$ is lower-bounded
by the quantum query complexity of $\claw_{(k,kG/F)\to M}\circ \pSearch_{F/k\to M}$.

Let us set $M=k+kG/F$, for which
we will bound the quantum query complexity of $\claw_{(k,kG/F)\to M}$ by using
\autoref{th:QueryCompClawWithSmallRange}:
If $k^2<kG/F$ (i.e., $k<G/F$), then $Q(\claw_{(k,kG/F)\to M})=\Omega(\sqrt{kG/F})$.
If $kG/F\le k^2$ (i.e., $G/F\le k$), then $Q(\claw_{(k,kG/F)\to M})=\Omega\left((k\cdot kG/F)^{1/3}\right)$.
More succinctly, 
\begin{equation}\label{eq:claw_(k,kG/F)->M}
Q_{1/3}\left(\claw_{(k,kG/F)\to M}\right)=\Omega\left(\sqrt{k\frac{G}{F}}+\left(k^2\frac{G}{F}\right)^{1/3}\right).
\end{equation}
By applying \autoref{th:GeneralAdversaryBound} and \autoref{lm:CompAdvPsearch} 
with \autoref{eq:BoundForPsearch} and \autoref{eq:claw_(k,kG/F)->M},
we have 
\begin{align*}
Q_{1/3}\left(\claw_{(F,G)\to M+2}\right)
&\ge Q_{1/3}\left(\claw_{(k,kG/F)\to M}\circ \pSearch_{F/k\to M}\right)\\
&=\Omega\left(\left[\sqrt{k\frac{G}{F}}+\left(k^2\frac{G}{F}\right)^{1/3}\right]\sqrt{\frac{F}{k}} \right)=\Omega\left( \sqrt{G}+k^{1/6}F^{1/6}G^{1/3} \right).
\end{align*}
We now interpret this in terms of $M$ by substituting $k=M/(G/F+1)=\Theta(MF/G)$ as
\begin{equation}\label{eq:LowerBoundOnClawWithAssuption}
Q_{1/3}\left(\claw_{(F,G)\to M+2}\right)
=\Omega\left(\sqrt{G}+\left( \frac{MF}{G}\right)^{1/6} F^{1/6}G^{1/3}\right)
=\Omega\left(\sqrt{G}+F^{1/3}G^{1/6}M^{1/6}
\right),
\end{equation}
where $M=k+kG/F$.

To remove the assumption that $k$ and $F/k$ divides $F$ and $G$, respectively,
let us consider any $F,G,k$ such that $F\le G\le F^2$ and $1\le k\le F$. 
Define
    \begin{equation*}
    F'\deq k\Floor{\frac{F}{k}},\ \ \ \ \ \ \
    G'\deq \frac{F'}{k}\Floor{\frac{G}{F'/k}}.
    \end{equation*}
Note that
    \begin{equation*}
        G'=\frac{F'}{k}\Floor{k\frac{G}{F'}}\ge \frac{F'}{k}k=F'.
    \end{equation*}
     By the definition, $k$ divides $F'$ and $F'/k$ divides $G'$. 
    Then, for any instance $f'\colon [F']\to [M+2]$ and $g'\colon [G']\to [M+2]$ of $\claw_{(F',G')\to M+2}$, we define $f\colon [F]\to [M+2]$ and $g\colon [G]\to [M+2]$ as 
    \begin{itemize}
        \item $f(x)\deq f'(x)$ for all $x\in [F']$, and $f(x)\deq f'(F')$ for all $x\in [F]\setminus [F']$;
        \item $g(y)\deq g'(y)$ for all $y\in [G']$, and $g(y)\deq g'(G')$ for all $y\in [G]\setminus [G']$.
    \end{itemize}   
 Then, one can see that there exists a claw for $\left(f^\prime,g^\prime\right)$ if and only if there exists a claw for $\left(f,g\right)$. 
 For $M=k+kG^\prime/F^\prime$, we thus have
\begin{align*}
Q_{1/3}\left(\claw_{\left(F,G\right)\rightarrow M+2}\right)&\geq         
Q_{1/3}\left(\claw_{\left(F^\prime,G^\prime\right)\rightarrow M+2}\right)\\
&=\Omega\left(\sqrt 
G+\left(F^\prime\right)^{1/3}({G^\prime)\ }^{1/6}M^{1/6}\right)=\Omega\left(\sqrt 
G+F^{1/3}G^{1/6}M^{1/6}\right),
\end{align*}
where the first equality follows from \autoref{eq:LowerBoundOnClawWithAssuption}
and 
the last equality follows from $F^\prime\geq F/2$ and $G^\prime\geq G/2$ as shown in \autoref{cl:M_k}.

Next, we will remove the assumption that $M$ is of the form
$M=k+kG^\prime/F^\prime$ and show that the lower bound holds for 
every integer $M$ such that $2\le M < F+G$.
For this, let us arbitrarily fix $F$ and $G$ such that $F\le G\le F^2$. 
For every integer $M$ such that 
$M\in\left[1+\left\lfloor G/F\right\rfloor,F+G\right]$, 
there exists $k\in [1,F]$ such that $M_k\le M\le M_{k+1}$ by \autoref{cl:M_k},
where $M_k$ is defined in the claim.
Since $\claw_{\left(F,G\right)\rightarrow M_k+2}$ is reducible to $\claw_{\left(F,G\right)\rightarrow M+2}$ since $[M_k+2]\subseteq [M+2]$, we have 
\begin{equation*}
Q_{1/3}\left(\claw_{\left(F,G\right)\rightarrow M+2}\right)
\geq     
Q_{1/3}\left(\claw_{\left(F,G\right)\rightarrow M_k+2}\right)
=\Omega\left(\sqrt G+F^{1/3}G^{1/6}M_k^{1/6}\right)
=\Omega\left(\sqrt G+F^{1/3}G^{1/6}M^{1/6}\right),
\end{equation*}
where the last equality follows from $M/M_k<M_{k+1}/M_k\le 8$ by \autoref{cl:M_k}. 
By replacing $M+2$ with $M$, we have
\begin{equation*}
Q_{1/3}\left(\claw_{\left(F,G\right)\rightarrow M}\right)
=\Omega\left(\sqrt G+F^{1/3}G^{1/6}M^{1/6}\right)
\end{equation*}
for $3+\left\lfloor G/F\right\rfloor\le M\le F+G$.
Since it holds that $\Omega\left(\sqrt G+F^{1/3}G^{1/6}M^{1/6}\right)=\Omega\left(\sqrt G\right)$ 
for every positive $M<3+\left\lfloor G/F\right\rfloor$, and it follows from the unstructured search that
$Q_{1/3}\left(\claw_{\left(F,G\right)\rightarrow M}\right)=\Omega\left(\sqrt{ G}\right)$~\cite{BuhDurHeiHoyMagSanWol01CCC}
for every $M\geq 2$, it holds that 
$Q_{1/3}\left(\claw_{\left(F,G\right)\rightarrow M}\right)
=\Omega\left(\sqrt G+F^{1/3}G^{1/6}M^{1/6}\right)$ 
for $2\le M\le F+G$.
More precisely, 
Since $\sqrt G>F^{1/3}G^{1/6}M^{1/6}$ if and only if $\left(G/F\right)^2>M$, we have%
\footnote{The case $2\le M <(G/F)^2$ should be ignored when $G/F\le \sqrt{2}$.}
\begin{equation*}
    Q_{1/3}\left(\claw_{\left(F,G\right)\rightarrow M}\right)=
    \left\{
    \begin{array}{lcl}
       \Omega\left(\sqrt G\right)  & \text{if} & 2\le M<\left(G/F\right)^2 \\
        \Omega\left(F^{1/3}G^{1/6}M^{1/6}\right) & \text{if} & \left(G/F\right)^2\le M<F+G.
    \end{array}
    \right.
\end{equation*}
\end{proof}
\begin{claim}\label{cl:M_k}
    For every $F,G\in \Natural$ such that $F\le G\le F^2$, let us define $M_k\deq k+kG_k^\prime/F_k^\prime$ 
    for every $1\le k\le F$, where $F_k^\prime=k\left\lfloor F/k\right\rfloor$ 
    and $G_k^\prime=(F_k^\prime/k\ )\left\lfloor\frac{G}{\left(F_k^\prime/k\right)}\right\rfloor$. 
Then, it holds that 
\begin{enumerate}
    \item $F/2\le F_k^\prime\le F$, and $G/2\le G_k^\prime\le G$ for every $k\in\left[1,F\right]$,
    \item $M_k\le M_{k+1}$ for every $k\in\left[1,F-1\right]$,
    \item $M_1=1+\left\lfloor G/F\right\rfloor$ and $M_F=F+G$,
    \item $M_{k+1}/M_k\le8$ for every $k\in\left[1,F-1\right]$.
\end{enumerate}
\end{claim}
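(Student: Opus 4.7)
The plan is to first observe that the definitions telescope into the clean identity
\[
M_k \;=\; k + \frac{k G'_k}{F'_k} \;=\; k + \left\lfloor \frac{G}{\lfloor F/k \rfloor} \right\rfloor,
\]
since $F'_k/k = \lfloor F/k \rfloor$ and therefore $kG'_k/F'_k = \lfloor G/\lfloor F/k\rfloor \rfloor$. With this in hand, Part 3 is immediate: $k=1$ gives $M_1 = 1 + \lfloor G/F\rfloor$ and $k=F$ gives $M_F = F+G$.

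For Part 1 I would split on whether $F/k \ge 2$. When $k > F/2$ we have $\lfloor F/k\rfloor = 1$, so $F'_k = k > F/2$. When $k \le F/2$, writing $F = k\lfloor F/k\rfloor + r$ with $0 \le r < k$ gives $F'_k = F - r > F - k \ge F/2$. An identical dichotomy applied with $m := \lfloor F/k\rfloor$ in place of $k$ and $G$ in place of $F$ (noting $1 \le m \le F \le G$) yields $G'_k \ge G/2$; the upper bounds $F'_k \le F$ and $G'_k \le G$ are immediate from the definitions. For Part 2, the simplified formula makes monotonicity transparent: $\lfloor F/k\rfloor$ is nonincreasing in $k$, so $\lfloor G/\lfloor F/k\rfloor\rfloor$ is nondecreasing, whence $M_{k+1} \ge (k+1) + \lfloor G/\lfloor F/k\rfloor \rfloor = M_k + 1$.

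The main obstacle is Part 4, the ratio bound $M_{k+1}/M_k \le 8$. My strategy is to place $M_k$ within constant factors of $k(1+G/F)$ and compare $M_k$ and $M_{k+1}$ directly. A lower bound follows from $\lfloor F/k\rfloor \le F/k$, which gives $\lfloor G/\lfloor F/k\rfloor\rfloor \ge kG/F - 1$ and hence $M_k \ge k + kG/F - 1 \ge (k/2)(1 + G/F)$ for $k \ge 1$ (using $G \ge F$). I then split into two cases. If $k+1 \le F/2$, then $F/(k+1) \ge 2$, so $\lfloor F/(k+1)\rfloor \ge F/(2(k+1))$ via $\lfloor x\rfloor \ge x/2$ for $x \ge 2$, which produces $M_{k+1} \le (k+1)(1 + 2G/F)$; combining,
\[
\frac{M_{k+1}}{M_k} \;\le\; \frac{2(k+1)}{k}\cdot\frac{1 + 2G/F}{1 + G/F} \;\le\; 2 \cdot 2 \cdot 2 \;=\; 8.
\]
If instead $k+1 > F/2$, then $\lfloor F/(k+1)\rfloor = 1$ (since $k+1 \le F$), so $M_{k+1} = k+1+G \le F+G$, while $M_k \ge (k/2)(1+G/F) \ge ((F-1)/4)(1+G/F) = (F-1)(F+G)/(4F)$, and the ratio is bounded by $4F/(F-1) \le 8$ for $F \ge 2$. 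Both subcases give the desired constant, completing the plan.
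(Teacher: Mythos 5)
Your proposal is correct. Parts 1--3 follow essentially the same reasoning as the paper: the simplification $M_k = k + \lfloor G/\lfloor F/k\rfloor\rfloor$ is the driving observation, monotonicity of $\lfloor F/k\rfloor$ gives Part 2, the boundary evaluations give Part 3, and the division-with-remainder argument (equivalently, $F'_k \ge \max\{F-\lfloor F/k\rfloor, \lfloor F/k\rfloor\} \ge F/2$) gives Part 1. For Part 4 you take a genuinely different route. The paper uses Part 1 to obtain the uniform two-sided bound $\tfrac{1}{2}\tfrac{G}{F}\le G'_k/F'_k\le 2\tfrac{G}{F}$ and then bounds the ratio directly,
\[
\frac{M_{k+1}}{M_k}=\frac{k+1}{k}\cdot\frac{1+G'_{k+1}/F'_{k+1}}{1+G'_k/F'_k}\le 2\cdot\frac{1+2G/F}{1+G/(2F)}\le 8,
\]
with no case split. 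You instead re-derive a (slightly weaker but sufficient) lower bound $M_k\ge(k/2)(1+G/F)$ directly from $\lfloor G/\lfloor F/k\rfloor\rfloor\ge kG/F-1$, and then split on whether $k+1\le F/2$; the dichotomy is needed precisely because you do not reuse Part 1's uniform upper bound on $G'_{k+1}/F'_{k+1}$. Both arguments are valid; the paper's is a touch more economical in that it exploits Part 1 and avoids casework, whereas yours is self-contained within Part 4 at the cost of one extra branch. A minor stylistic note: in your second branch you write $k\ge(F-1)/4$ implicitly via $M_k\ge((F-1)/4)(1+G/F)$; the clean intermediate step is $k\ge(F-1)/2$, which follows from $k+1>F/2$ for integer $k$, and the factor $4$ then comes from dividing by $2$ again.
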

\begin{proof}
We first prove the second and third items.
Since 
$kG_k^\prime/F_k^\prime
=\left\lfloor\frac{G}{\left(F^\prime/k\right)}\right\rfloor$ 
is the number of blocks of size $F_k^\prime/k$ in $G^\prime$ and the block size $F_k^\prime/k=\left\lfloor F/k\right\rfloor$ 
is monotone decreasing with respect to $k$, $kG_k^\prime/F_k^\prime$ is monotone increasing. Therefore, $M_k$ is monotone increasing with respect to $k$. Therefore, 
$M_k\le M_{k+1}$ for every $k\in\left[1,F-1\right]$. 
Since 
$F_1^\prime=F$ and $G_1^\prime=\left\lfloor G/F\right\rfloor$, 
we have $M_1=1+\left\lfloor G/F\right\rfloor$. 
Since $F_F^\prime=F$ and $G_F^\prime=G$, 
we have $M_F=F+G$. 

We next prove the first and the last items. 
Since $F_k^\prime\le F$ and $G_k^\prime\le G$ 
by the definition, the first item follows from the lower bounds
on $F'_k$ and $G'_k$.
By the definition of $F_k^\prime$, we have 
$F-F_k^\prime<\left\lfloor F/k\right\rfloor$ 
and $F_k^\prime\geq\left\lfloor F/k\right\rfloor$. 
Hence,
\begin{equation*}
F_k^\prime
\geq\max{\left\{F-\left\lfloor\frac{F}{k}\right\rfloor,\left\lfloor\frac{F}{k}\right\rfloor\right\}}
\geq\frac{F}{2}.
\end{equation*}
By the definition of $G_k^\prime$, 
we have $G-G_k^\prime<F_k^\prime/k=\left\lfloor F/k\right\rfloor$ 
and $G_k^\prime\geq F_k^\prime/k =\left\lfloor F/k\right\rfloor$. Hence,
\begin{equation*}
G_k^\prime
\geq\max{\left\{G-\left\lfloor\frac{F}{k}\right\rfloor,\left\lfloor\frac{F}{k}\right\rfloor\right\}}
\geq\frac{G}{2}.
\end{equation*}
Together with $F_k^\prime\le F$ and $G_k^\prime\le G$, we have
\begin{equation*}
\frac{1}{2}\cdot\frac{G}{F}
\le\frac{G_k^\prime}{F_k^\prime}\le2\frac{G}{F}.
\end{equation*}
Then, the last item follows from
\begin{equation*}
\frac{M_{k+1}}{M_k}
=\frac{k+1}{k}\frac{1+G_{k+1}^\prime/F_{k+1}^\prime}{1+G_k^\prime/F_k^\prime}
\le 2\cdot\frac{1+2G/F}{1+G/\left(2F\right)}\le8.
\end{equation*}
\end{proof}

\section*{Acknowledgments}
This work was partially supported by JSPS KAKENHI Grant Numbers JP20H05966 and JP22H00522.

\bibliographystyle{plain}
\bibliography{ClawSmallRange}

\begin{thebibliography}{10}

\bibitem{Aar02STOC}
Scott Aaronson.
\newblock Quantum lower bound for the collision problem.
\newblock In {\em Proceedings of the Thirty-Fourth Annual ACM Symposium on
  Theory of Computing}, pages 635--642. ACM, 2002.

\bibitem{AarShi04JACM}
Scott Aaronson and Yaoyun Shi.
\newblock Quantum lower bounds for the collision and the element distinctness
  problems.
\newblock {\em Journal of the ACM}, 51(4):595--605, 2004.

\bibitem{Amb02JCSS}
Andris Ambainis.
\newblock Quantum lower bounds by quantum arguments.
\newblock {\em Journal of Computer and System Sciences}, 64(4):750--767, 2002.

\bibitem{Amb05ToC}
Andris Ambainis.
\newblock Polynomial degree and lower bounds in quantum complexity: Collision
  and element distinctness with small range.
\newblock {\em Theory of Computing}, 1(1):37--46, 2005.

\bibitem{Amb06JCSS}
Andris Ambainis.
\newblock Polynomial degree vs. quantum query complexity.
\newblock {\em Journal of Computer and System Sciences}, 72(2):220--238, 2006.

\bibitem{Amb07SICOMP}
Andris Ambainis.
\newblock Quantum walk algorithm for element distinctness.
\newblock {\em SIAM Journal on Computing}, 37(1):210--239, 2007.

\bibitem{AmbBalIra21TQC}
Andris Ambainis, Kaspars Balodis, and J\={a}nis Iraids.
\newblock {A Note About Claw Function with a Small Range}.
\newblock In {\em 16th Conference on the Theory of Quantum Computation,
  Communication and Cryptography (TQC 2021)}, volume 197 of {\em Leibniz
  International Proceedings in Informatics (LIPIcs)}, pages 6:1--6:5. Schloss
  Dagstuhl -- Leibniz-Zentrum f{\"u}r Informatik, 2021.

\bibitem{AmbGilJefKok20STOC}
Andris Ambainis, Andr{\'{a}}s Gily{\'{e}}n, Stacey Jeffery, and Martins
  Kokainis.
\newblock Quadratic speedup for finding marked vertices by quantum walks.
\newblock In {\em Proceedings of the 52nd Annual {ACM} {SIGACT} Symposium on
  Theory of Computing, {STOC} 2020, Chicago, IL, USA, June 22-26, 2020}, pages
  412--424. {ACM}, 2020.

\bibitem{BeaBuhCleMosWol01JACM}
Robert Beals, Harry Buhrman, Richard Cleve, Michele Mosca, and Ronald de~Wolf.
\newblock Quantum lower bounds by polynomials.
\newblock {\em Journal of the ACM}, 48(4):778--797, 2001.

\bibitem{BelChiJefKotMag13ICALP}
Aleksandrs Belovs, Andrew~M. Childs, Stacey Jeffery, Robin Kothari, and
  Fr{\'{e}}d{\'{e}}ric Magniez.
\newblock Time-efficient quantum walks for $3$-distinctness.
\newblock In {\em Proceedings of the 40th International Colloquium on Automata,
  Languages, and Programming, {ICALP} 2013, Part {I}}, pages 105--122, 2013.
\newblock See \url{http://arxiv.org/abs/1302.3143} and
  \url{http://arxiv.org/abs/1302.7316}.

\bibitem{BraHoyKalKapLapSal19JC}
Gilles Brassard, Peter H{\o}yer, Kassem Kalach, Marc Kaplan, Sophie Laplante,
  and Louis Salvail.
\newblock Key establishment {\`a} la merkle in a quantum world.
\newblock {\em Journal of Cryptology}, 32(3):601--634, 2019.

\bibitem{BraHoyMosTap02AMS}
Gilles Brassard, Peter H{\o}yer, Michele Mosca, and Alain Tapp.
\newblock Quantum amplitude amplification and estimation.
\newblock In {\em Quantum Computation and Information}, volume 305 of {\em
  Contemporary Mathematics}, pages 53--74. American Mathematical Society, 2002.

\bibitem{BraHOyTap98LATIN}
Gilles Brassard, Peter H{\O}yer, and Alain Tapp.
\newblock Quantum cryptanalysis of hash and claw-free functions.
\newblock In {\em Proceedings of Third Latin Americal Symposium on Theoretical
  Informatics (LATIN'98)}, pages 163--169. Springer Berlin Heidelberg, 1998.

\bibitem{BuhDurHeiHoyMagSanWol01CCC}
Harry Buhrman, Christoph D{\"u}rr, Mark Heiligman, Peter H{\o}yer,
  Fr{\'e}d{\'e}ric Magniez, Miklos Santha, and Ronald de~Wolf.
\newblock Quantum algorithms for element distinctness.
\newblock In {\em Proceedings of the 16th IEEE Conference on Computational
  Complexity}, pages 131--137, 2001.

\bibitem{BunKotTha20TOC}
Mark Bun, Robin Kothari, and Justin Thaler.
\newblock The polynomial method strikes back: Tight quantum query bounds via
  dual polynomials.
\newblock {\em Theory of Computing}, 16(10):1--71, 2020.

\bibitem{BunTha21SIGACT}
Mark Bun and Justin Thaler.
\newblock Guest column: Approximate degree in classical and quantum computing.
\newblock {\em SIGACT News}, 51(4):48--72, Jan. 2021.

\bibitem{GelKapZel94Book}
Israel~M. Gelfand, Mikhail~M. Kapranov, and Andrei~V. Zelevinsky.
\newblock {\em Discriminants, Resultants, and Multidimensional Determinants}.
\newblock Mathematics: Theory \& Applications. Birkh{\"a}user Boston, 1994.

\bibitem{HosSasTanXag20TCS}
Akinori Hosoyamada, Yu~Sasaki, Seiichiro Tani, and Keita Xagawa.
\newblock Quantum algorithm for the multicollision problem.
\newblock {\em Theor. Comput. Sci.}, 842:100--117, 2020.

\bibitem{HoyLeeSpa07STOC}
Peter H{\o}yer, Troy Lee, and Robert Spalek.
\newblock Negative weights make adversaries stronger.
\newblock In {\em Proceedings of the Thirty-Ninth Annual ACM Symposium on
  Theory of Computing}, pages 526--535. ACM, 2007.

\bibitem{HoyMosWol03ICALP}
Peter H{\o}yer, Michele Mosca, and Ronald de~Wolf.
\newblock Quantum search on bounded-error inputs.
\newblock In {\em Proceedings of Thirtieth International Colloquium on
  Automata, Languages and Programming (ICALP 2003)}, volume 2719 of {\em
  Lecture Notes in Computer Science}, pages 291--299, 2003.

\bibitem{IwaKaw03NG}
Kazuo Iwama and Akinori Kawachi.
\newblock A new quantum claw-finding algorithm for three functions.
\newblock {\em New Generation Computing}, 21(4):319--327, 2003.

\bibitem{JaqSch19CRYPTO}
Samuel Jaques and John~M. Schanck.
\newblock Quantum cryptanalysis in the ram model: Claw-finding attacks on sike.
\newblock In {\em Proceedings of Advances in Cryptology -- CRYPTO 2019}, pages
  32--61. Springer International Publishing, 2019.

\bibitem{JefZur23STOC}
Stacey Jeffery and Sebastian Zur.
\newblock Multidimensional quantum walks.
\newblock In {\em Proceedings of the 55th Annual {ACM} Symposium on Theory of
  Computing, {STOC} 2023, Orlando, FL, USA, June 20-23, 2023}, pages
  1125--1130. {ACM}, 2023.

\bibitem{JouLuc09ASIACRYPT}
Antoine Joux and Stefan Lucks.
\newblock Improved generic algorithms for $3$-collisions.
\newblock In {\em Proceedings of the 15th International Conference on the
  Theory and Application of Cryptology and Information Security,
  ASIACRYPT~2009}, pages 347--363, 2009.
\newblock See \url{https://eprint.iacr.org/2009/305}.

\bibitem{Kut05ToC}
Samuel Kutin.
\newblock Quantum lower bound for the collision problem with small range.
\newblock {\em Theory of Computing}, 1(1):29--36, 2005.

\bibitem{LiuZha19EUROCRYPTO}
Qipeng Liu and Mark Zhandry.
\newblock On finding quantum multi-collisions.
\newblock In {\em Proceedings of the 38th Annual International Conference on
  the Theory and Applications of Cryptographic Techniques, {EUROCRYPT} 2019,
  Part {III}}, pages 189--218, 2019.
\newblock See \url{https://eprint.iacr.org/2018/1096}.

\bibitem{MagNayRolSan11SICOMP}
Fr{\'e}d{\'e}ric Magniez, Ashwin Nayak, J{\'e}r{\'e}mie Roland, and Miklos
  Santha.
\newblock Search via quantum walk.
\newblock {\em SIAM J. Comput.}, 40(1):142--164, 2011.

\bibitem{NikSas16ASIACRYPT}
Ivica Nikoli\'{c} and Yu~Sasaki.
\newblock A new algorithm for the unbalanced meet-in-the-middle problem.
\newblock In {\em Proceedings of the 22nd International Conference on the
  Theory and Application of Cryptology and Information Security, {ASIACRYPT}
  2016, Part {I}}, pages 627--647, 2016.
\newblock See \url{https://eprint.iacr.org/2016/851}.

\bibitem{Pat92STOC}
Ramamohan Paturi.
\newblock On the degree of polynomials that approximate symmetric boolean
  functions (preliminary version).
\newblock In {\em Proceedings of the Thirty-Fourth Annual ACM Symposium on
  Theory of Computing}, pages 468--474. ACM, 1992.

\bibitem{Sze04FOCS}
Mario Szegedy.
\newblock Quantum speed-up of markov chain based algorithms.
\newblock In {\em Proceedings of the Forty-Fifth Annual IEEE Symposium on
  Foundations of Computer Science}, pages 32--41. IEEE Computer Society, 2004.

\bibitem{Tan09TCS}
Seiichiro Tani.
\newblock Claw finding algorithms using quantum walk.
\newblock {\em Theoretical Computer Science}, 410(50):5285--5297, 2009.

\bibitem{Zha15QIC}
Mark Zhandry.
\newblock A note on the quantum collision and set equality problems.
\newblock {\em Quantum Info. Comput.}, 15(7-8):557--567, 2015.

\bibitem{Zha05COCOON}
Shengyu Zhang.
\newblock Promised and distributed quantum search.
\newblock In {\em Proceedings of International Computing and Combinatorics
  Conference (COCOON 2005)}, pages 430--439. Springer Berlin Heidelberg, 2005.

\end{thebibliography}
\appendix
\section*{Appendix}
\section{Claim of Polynomial Method}\label{appdx:polynomialmethod}
\begin{lemma}[restatement of \autoref{lm:oracles}]
Let $F,G,M$ be in $\Natural$.
For given oracles $O_f$ and $O_g$ for functions 
$f\in [M]^{[F]}$ and $g\in [M]^{[G]}$,
assume without loss of generality that any quantum algorithm with $q$ queries applies 
\[ U_qO_{f/g}U_{q-1}\cdots U_1O_{f/g}U_0\]
to the initial state, say, the all-zero state,
where $U_q,\dots,U_0$ are unitary operators independent of the oracles, and $O_{f/g}$ acts as
\[
\ket{0,i}\ket{b}\ket{z}\rightarrow\ket{0,i}\ket{b+f(i)}\ket{z},\mbox{ and } 
\ket{1,i}\ket{b}\ket{z}\rightarrow\ket{1,i}\ket{b+g(i)}\ket{z},
\]
where $+$ is the addition modulo $M$, or alternatively,
\[
\ket{0,i}\ket{b}\ket{z}\rightarrow\ket{0,i}\ket{b\oplus f(i)}\ket{z},\mbox{ and } 
\ket{1,i}\ket{b}\ket{z}\rightarrow\ket{1,i}\ket{b\oplus g(i)}\ket{z},
\]
where $\oplus$ is bitwise XOR in binary expression. 
At any step, the state of the algorithm is of the form
\[ \sum_{s,i,b,z}{\alpha_{s,i,b,z}\left|s,i\right\rangle\left|b\right\rangle\left|z\right\rangle},\]
where the first and second registers 
consist of $1+\max\set{\ceil{\log F},\ceil{\log G}}$ qubits and $\ceil{\log{M}}$ qubits, respectively
(The last register is a working register that $U$ acts on but $O_{f/g}$ does not). 
Define $FM$ variables $x_{ij}\in\{0,1\}$ such that $x_{ij}=1$ if and only if $f\left(i\right)=j$, and $GM$ variables $y_{kj}\in\{0,1\}$ such that $y_{kj}=1$ if and only if $g(k)=j$.  
Then, in the final state, each $\alpha_{s,i,b,z}$ can be represented as a polynomial over $FM+GM$ variables $x_{ij},y_{kj}$ of degree at most $q$, where $q$ is the number of quantum queries.
\end{lemma}
\begin{proof}
For $q=0$, since $\alpha_{s,i,b,z}$ is independent of the oracles, the statement holds.
Suppose that the claim holds for $q-1$ and let
$\sum_{s,i,b,z}{\alpha_{s,i,b,z}\left|s,i\right\rangle\left|b\right\rangle\left|z\right\rangle}$
be the state after applying $U_{q-1}$, where each $\alpha_{s,i,b,z}$ is a polynomial over $x_{ij},y_{ij}$ of degree at most $q-1$.
Assume that the $q$th $O_{f/g}$ maps this state to
$\sum_{s,i,b,z}{\beta_{s,i,b,z}\left|s,i\right\rangle\left|b\right\rangle\left|z\right\rangle}$.
Observe that $O_{f/g}$ maps each basis state 
$\ket{0,i}\ket{b}\ket{z}$ to $\ket{0,i}\ket{b+f(i)}\ket{z}$ 
(likewise $\ket{1,k}\ket{b}\ket{z}$) to $\ket{0,k}\ket{b+g(k)}\ket{z}$),
and that $\ket{0,i}\ket{b-j+f(i)}\ket{z}$ is exactly $\ket{0,i}\ket{b}\ket{z}$ if $x_{ij}=1$ (and likewise $\ket{1,k}\ket{b-j+g(k)}\ket{z}$ is exactly $\ket{1,k}\ket{b}\ket{z}$ if $y_{kj}=1$).
Thus, it holds that
\begin{align*}
    \beta_{0,i,b,z}&=x_{i1}\alpha_{0,i,b-1,z}+x_{i2}\alpha_{0,i,b-2,z}+\dots +x_{iM} \alpha_{0,i,b-M,z},\\
\beta_{1,k,b,z}&=y_{k1} \alpha_{1,k,b-1,z}+y_{k2} \alpha_{1,k,b-2,z}+\dots +y_{kM}\alpha_{1,k,b-M,z}.
\end{align*}
Therefore, $\beta_{0,i,b,z}$ and $\beta_{1,k,b,z}$ are polynomials over $x_{ij}$ and $y_{ij}$ of degree at most $q$. After applying unitary $U_q$, the amplitude of each basis state is a linear combination of $\beta_{0,i,b,z}$ and $\beta_{1,k,b,z}$. Therefore, its degree is still at most $q$.  
In the case of the XOR-type oracle, the same argument holds.
\end{proof}
\begin{lemma}[restatement of \autoref{lm:polynomialmethod_main}]
Let $F,G,M$ be in $\Natural$, and
let $\Phi$
be a Boolean function over the set of pairs of functions in $[M]^{[F]} \times [M]^{[G]}$. 
    For given oracles $O_f$ and $O_g$ for functions 
    $f\in [M]^{[F]}$ and $g\in [M]^{[G]}$,
suppose that there exists a quantum algorithm that computes $\Phi(f,g)$ with error $\epsilon$ using $q$ queries. 
Then, there is a polynomial $P$ of degree at most $2q$ in 
variables $x_{11},\dots, x_{FM},y_{11},\dots,y_{GM}$ such that $P$ $\epsilon$-approximates $\Phi$,
where $x_{ij}$ and $y_{kj}$ are the variables defined for $f$ and $g$, respectively, in \autoref{lm:oracles}.
\end{lemma}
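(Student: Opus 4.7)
The plan is to combine \autoref{lm:oracles} with the standard observation that measurement probabilities are sums of squared amplitudes, which doubles the polynomial degree. After running the $q$-query algorithm on oracles $O_f, O_g$, \autoref{lm:oracles} gives a final state
\begin{equation*}
\sum_{s,i,b,z} \alpha_{s,i,b,z}\,|s,i\rangle|b\rangle|z\rangle
\end{equation*}
in which each amplitude $\alpha_{s,i,b,z}$ is a polynomial of degree at most $q$ in the $FM+GM$ Boolean variables $\{x_{ij}\}\cup\{y_{kj}\}$ (with complex coefficients, since the $U_\ell$'s are arbitrary unitaries). Without loss of generality we may assume the algorithm reports its answer by measuring a designated output register in the computational basis; let $\mathcal{A}$ denote the set of basis states declared as ``accept'' (i.e., those indicating output $1$).

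I would then define the candidate polynomial
\begin{equation*}
P(x,y) \;:=\; \sum_{(s,i,b,z)\in \mathcal{A}} \bigl|\alpha_{s,i,b,z}(x,y)\bigr|^2.
\end{equation*}
Writing $\alpha_{s,i,b,z} = \alpha^R_{s,i,b,z} + i\,\alpha^I_{s,i,b,z}$ with $\alpha^R$ and $\alpha^I$ real polynomials of degree at most $q$ in the $x_{ij},y_{kj}$, each term $|\alpha|^2 = (\alpha^R)^2 + (\alpha^I)^2$ is a real polynomial of degree at most $2q$. Summing over $\mathcal{A}$ preserves this bound, so $P$ is a polynomial of degree at most $2q$ in the variables $\{x_{ij}\}\cup\{y_{kj}\}$.

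To conclude, note that whenever $(x,y)$ is a valid assignment encoding an actual function pair $(f,g)\in [M]^{[F]}\times [M]^{[G]}$, the Born rule says that $P(x,y)$ is precisely the probability that the algorithm outputs $1$ on input $(f,g)$. Hence $P(x,y)\in[0,1]$ automatically, and the error guarantee of the algorithm gives: if $\Phi(f,g)=1$ then $P(x,y)\ge 1-\epsilon$, while if $\Phi(f,g)=0$ then $P(x,y)\le \epsilon$. In either case $|\Phi(f,g)-P(x,y)|\le\epsilon$, so $P$ is a polynomial of degree at most $2q$ that $\epsilon$-approximates $\Phi$.

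There is no real obstacle here: the only conceptual step is the degree-doubling passage from amplitudes to probabilities, and this is immediate from the decomposition $|\alpha|^2=(\alpha^R)^2+(\alpha^I)^2$. The rest of the proof is an assembly of \autoref{lm:oracles} (which handles both the $+$-type and $\oplus$-type oracles uniformly) with the Born rule, so the argument works identically regardless of which oracle model is chosen.
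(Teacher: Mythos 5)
Your proof is correct and follows the same route as the paper's: both pass from the degree-$q$ amplitude polynomials of the final state (guaranteed by \autoref{lm:oracles}) to a degree-$2q$ acceptance-probability polynomial via the Born rule, and conclude from the algorithm's error guarantee. You make explicit two points the paper leaves implicit --- the $(\alpha^R)^2+(\alpha^I)^2$ decomposition that actually yields the degree bound $2q$, and the fact that the summation set should be the \emph{fixed} set of accepting basis states (the paper's wording lets its set $A$ depend on $\Phi(f,g)$, which would make $p_\Phi$ instance-dependent rather than a single polynomial) --- but the underlying argument is identical.
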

\begin{proof}
    Assume, without loss of generality, that the quantum algorithm measures the last single-qubit register in the final state 
    $\sum_{s,i,b,z}{\alpha_{s,i,b,z}\left|s,i\right\rangle\left|b\right\rangle\left|z\right\rangle}$
    in the computational basis and outputs the outcome as the value of $\Phi$. 
    Let $A$ be the set of tuples $(s,i,b,z)$ with the last bit of $z$ being $\Phi(f,g)$.
    Then, the probability that the algorithm outputs $\Phi(f,g)$ is 
    \begin{equation*}
    p_{\Phi}(f,g)\deq\sum_{(s,i,b,z)\in A}\Abs{\alpha_{s,i,b,z}}^2.
    \end{equation*}
    It follows from \autoref{lm:oracles} that $p_{\Phi}$ is represented as a polynomial
    of degree at most $2q$ in  $x_{11},\dots, x_{FM},y_{11},\dots,y_{GM}$.
    Since the error probability of the algorithm is at most $\epsilon$, we have 
    $\Abs{p_{\Phi}(f,g)-\Phi(f,g)}\le \epsilon$
    for all $f,g$. This implies that there is a polynomial $P$ of degree at most $2q$ in 
variables $x_{11},\dots, x_{FM},y_{11},\dots,y_{GM}$ such that $P$ $\epsilon$-approximates $\Phi$.
\end{proof}
\begin{lemma}[\mbox{\cite{BunTha21SIGACT}}; restatement of \autoref{lm:deg_linearly_related}
]
    For a partial Boolean function $\Phi$ and $0<\epsilon<\delta<1/2 $, it holds that
    \begin{equation*}
    \adeg_\epsilon(\Phi)\le \adeg_\delta (\Phi)\cdot O\left( \frac{\log(1/\epsilon)}{(1/2-\delta)^2} \right).
    \end{equation*}
    In particular, if $\delta$ and $\epsilon$ are constants, then $\adeg_\epsilon(\Phi)$ and $\adeg_\delta (\Phi)$ are linearly related.
\end{lemma}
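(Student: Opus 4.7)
The plan is to use univariate polynomial amplification (error reduction by composition), which is the standard route for this kind of statement. Let $p$ be a polynomial of degree $d = \adeg_\delta(\Phi)$ that $\delta$-approximates $\Phi$, so that $p(x) \in [0,\delta]$ whenever $\Phi(x) = 0$ and $p(x) \in [1-\delta, 1]$ whenever $\Phi(x) = 1$, with $p(x) \in [0,1]$ on the whole domain. The goal is to post-process $p$ by a low-degree univariate polynomial $r$ so that $r \circ p$ is an $\epsilon$-approximating polynomial for $\Phi$ of degree $d \cdot \deg(r)$.

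The core technical ingredient I would invoke is the following univariate amplification lemma: for every $0 < \epsilon < \delta < 1/2$ there exists a polynomial $r \colon \mathbb{R} \to \mathbb{R}$ of degree
\[
D = O\!\left(\frac{\log(1/\epsilon)}{(1/2 - \delta)^2}\right)
\]
such that $r([0,\delta]) \subseteq [0,\epsilon]$, $r([1-\delta, 1]) \subseteq [1-\epsilon, 1]$, and $r([0,1]) \subseteq [0,1]$. Such an $r$ is built in the classical way from a suitable affine transformation of a Chebyshev polynomial $T_n$: the exponential growth of $T_n$ outside $[-1,1]$ converts a constant-width gap around the midpoint into an exponentially small error, producing the $\log(1/\epsilon)$ factor, while the $(1/2-\delta)^{-2}$ factor arises from the width of the buffer region $(1/2-\delta, 1/2+\delta)^c \cap [0,1]$ that must be mapped outside $[-1,1]$. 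This construction is entirely classical (Chebyshev/Markov theory) and appears in many quantum-query-complexity references.

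Given such an $r$, set $q := r \circ p$. Then $\deg(q) \le d \cdot D$, and the three-part condition on $r$ immediately yields that $|q(x) - \Phi(x)| \le \epsilon$ for every $x$ in the domain of $\Phi$ and $q(x) \in [0,1]$ for every $x$. Hence $q$ is an $\epsilon$-approximating polynomial for $\Phi$ of degree at most $\adeg_\delta(\Phi) \cdot O\!\left(\log(1/\epsilon) / (1/2-\delta)^2\right)$, which is exactly the stated bound.

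The only genuinely nontrivial step is the construction of the amplifier $r$ with the advertised degree; the composition and the verification of the three inclusions are then routine. Since the statement is credited to Bun and Thaler, I would in the write-up either cite their construction directly or sketch the Chebyshev-based amplifier explicitly for completeness.
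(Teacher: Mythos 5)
Your proof is correct in outline and follows the same high-level strategy as the paper -- namely, post-composing a degree-$\adeg_\delta(\Phi)$ approximating polynomial $p$ with a low-degree univariate amplifier $r$ and taking $q = r\circ p$ -- but the choice of amplifier is genuinely different. The paper constructs $r$ explicitly as the binomial ``majority-vote'' polynomial $A(p) = \sum_{k\ge \ell/2}\binom{\ell}{k}p^k(1-p)^{\ell-k}$ and bounds its error directly with the Chernoff bound, which yields the factor $(1/2-\delta)^{-2}$ immediately and is entirely self-contained in two lines. You instead invoke a Chebyshev-based soft-threshold polynomial; this is also a standard route and, if carried out carefully, can in fact give a slightly better degree, roughly $O(\log(1/\epsilon)/(1/2-\delta))$ rather than $O(\log(1/\epsilon)/(1/2-\delta)^2)$, since the exponential growth of $T_n$ off $[-1,1]$ is governed by $\mathrm{arccosh}(1+\mu)\approx\sqrt{2\mu}$ rather than $\mu$. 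Either bound is of course consistent with the $O(\cdot)$ in the statement. The trade-off is that your route defers the nontrivial technical content (the existence of the amplifier with the advertised degree and the $r([0,1])\subseteq[0,1]$ containment) to a cited construction, whereas the paper's binomial amplifier makes all three inclusions obvious by inspection. One small omission: the last sentence of the statement, that $\adeg_\epsilon$ and $\adeg_\delta$ are \emph{linearly related} when both are constants, also needs the trivial reverse inequality $\adeg_\delta(\Phi)\le\adeg_\epsilon(\Phi)$ (an $\epsilon$-approximator is a fortiori a $\delta$-approximator); the paper notes this explicitly and your write-up should too.
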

\begin{proof}
    Let $p$ be a polynomial of $\adeg_\delta (\Phi)$ that $\delta$-approximate $\Phi$. Define a degree-$(\ell\cdot \deg_\delta (\Phi))$ polynomial $A(p)$ as
    \begin{equation*}
        A(p)\deq \sum_{k\ge \ell/2}\binom{\ell}{k}p^k(1-p)^{\ell-k}.
    \end{equation*}
    Intuitively, this is the probability of seeing at least $\ell/2$ heads among $\ell$ coin flips, if $p$ is the probability of seeing heads when flipping a single coin. By Chernoff's bound, the approximation error of $A(p)$ is $\exp (-2\ell (1/2-\delta)^2)$. For this to be $\epsilon$, we have 
     \begin{equation*}
    \ell= O\left( \frac{\log(1/\epsilon)}{(1/2-\delta)^2} \right).
    \end{equation*}
    The latter part follows from the trivial fact that $\adeg_\delta (\Phi)\le \adeg_\epsilon(\Phi)$.
    \end{proof}
\section{Fundamental Theorem of Multisymmetric Polynomials}\label{appdx:fundamentaltheoremofmultsymmetric}
\begin{theorem}[e.g.~\cite{GelKapZel94Book}; restatement of~\autoref{th:multsym-polynomial}]
    Let $R$ be a commutative ring that contains $\Rational$. For each $\Omega\in \Integer_+^{nm}$,
    $\mathrm{mon}_{\Omega}(X)$ can be expressed as a linear combination over $\Rational$ of products
    of the power-sums, $P_{\Omega'_1}(X)\cdots P_{\Omega'_n}(X)$ over $X=(x_{ij})_{i\in [n],j\in [m]}$, for $\Omega'\deq ((\Omega'_1)^T, \cdots ,(\Omega'_n)^T)^T\in \Integer_+^{nm}$ 
    such that $\abs{\Omega'}_1\le \abs{\Omega}_1$. Moreover,
    the power-sums $P_\lambda(X)$ for $\lambda \in \Integer_+^m$ generate the ring $R[x_{11},\dots,x_{nm}]^{S_n}$ of multisymmetric polynomials.
\end{theorem}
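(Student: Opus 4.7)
The plan is to prove the first statement (that each $\mathrm{mon}_{\Omega}$ is expressible as a $\Rational$-linear combination of products of power sums $P_{\Omega'_1}\cdots P_{\Omega'_n}$ with $\abs{\Omega'}_1\le \abs{\Omega}_1$) by strong induction on the \emph{length} $t(\Omega)$, defined as the number of nonzero rows of $\Omega$. Because the monomial symmetric functions $\{\mathrm{mon}_{\Omega}\}_{\Omega}$ already form an $R$-basis of $R[x_{11},\dots,x_{nm}]^{S_n}$ by construction, the first statement immediately implies the second (that the power sums generate the ring).

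The base cases $t(\Omega)\in\{0,1\}$ are direct: for $t=0$ the orbit is a singleton and $\mathrm{mon}_{\Omega}=1$ (the empty product of power sums), while for $t=1$ with unique nonzero row $\mu$ we have $\mathrm{mon}_{\Omega}=\sum_i X_i^{\mu}=P_{\mu}(X)$. For the inductive step, I will consider $\Omega$ with distinct nonzero rows $\mu_1,\dots,\mu_s$ occurring with multiplicities $m_1,\dots,m_s$ where $t=\sum_j m_j\ge 2$, and study the product
\[
\Pi(X)\deq\prod_{j=1}^{s} P_{\mu_j}(X)^{m_j}=\prod_{j=1}^{s}\Bigl(\sum_{i=1}^{n} X_i^{\mu_j}\Bigr)^{m_j}.
\]
Expanding $\Pi$ via the multinomial theorem and sorting the terms by whether each position $i\in[n]$ receives at most one $\mu_j$-token (the \emph{injective part}) or at least two (the \emph{non-injective part}), I plan to show: (i) the injective part equals $\bigl(\prod_j m_j!\bigr)\cdot \mathrm{mon}_{\Omega}(X)$, via a bijection between injective assignments of $\mu$-tokens to positions and elements of the orbit $S_n\Omega$, together with the observation that the relevant multinomial coefficients $\binom{m_j}{k^j_1,\dots,k^j_n}$ with $k^j_i\in\{0,1\}$ collapse to $m_j!$; and (ii) the non-injective part is $S_n$-invariant (as both $\Pi$ and the injective part are), and every monomial appearing in it has strictly fewer than $t$ nonzero positions and total degree exactly $\abs{\Omega}_1$, so it is a nonnegative integer linear combination of $\mathrm{mon}_{\Omega''}$'s with $t(\Omega'')<t$ and $\abs{\Omega''}_1=\abs{\Omega}_1$.

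Solving for $\mathrm{mon}_{\Omega}$ then yields
\[
\mathrm{mon}_{\Omega}(X)=\frac{1}{\prod_j m_j!}\bigl(\Pi(X)-(\text{non-injective part})\bigr),
\]
where dividing by $\prod_j m_j!$ is permitted since $\Rational\subseteq R$. The inductive hypothesis, applied to each $\mathrm{mon}_{\Omega''}$ appearing in the non-injective part, rewrites that part as a $\Rational$-linear combination of products of power sums with $\abs{\Omega'}_1\le\abs{\Omega''}_1=\abs{\Omega}_1$. Substituting and collecting with $\Pi$ itself, which by construction has $\abs{\Omega'}_1=\abs{\Omega}_1$, produces the desired representation.

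The main obstacle I expect is the careful bookkeeping in step (i): verifying that each element of $S_n\Omega$ arises from exactly $\prod_j m_j!$ multinomial choices (coming from permuting the $m_j$ indistinguishable $\mu_j$-tokens among the $m_j$ positions assigned to them) and that the multinomial coefficient simultaneously collapses to $m_j!$ on these injective configurations. Once this combinatorial identification is pinned down, the remainder is a clean induction, terminating in finitely many steps because the length $t$ strictly decreases at each stage while the total degree is preserved.
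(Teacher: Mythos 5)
Your proof is correct and follows essentially the same strategy as the paper: expand a product of power sums, split into the "injective" (diagonal) part which gives a multiple of $\mathrm{mon}_{\Omega}(X)$ and a "collision" part whose terms have strictly fewer nonzero rows, then induct on the number of nonzero rows. The only cosmetic difference is that you use $\prod_j P_{\mu_j}^{m_j}$ over the distinct nonzero rows, while the paper uses $P_{\Omega_1}\cdots P_{\Omega_n}$ including the trivial $P_0=n$ factors; accordingly your leading coefficient is $\prod_j m_j!$ rather than the full stabilizer order $\abs{(S_n)_{\Omega}}=(n-t)!\prod_j m_j!$, which is a mild streamlining of the same argument.
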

\begin{proof}
    Since every mutisymmetric polynomial of total degree $d$ in $R[x_{11},\dots,x_{nm}]^{S_n}$ is generated by $\mathrm{mon}_{\Omega}(X)$ for $\Omega$'s with $\abs{\Omega}_1\le d$, the former part implies that every multisymmetric polynomial can be expressed as a polynomial in the power-sums $P_\lambda(X)$ over $R$. Conversely, every polynomial generated by power-sums $P_\lambda(X)$ is obviously in $R[x_{11},\dots,x_{nm}]^{S_n}$. Therefore, the latter part of the statement holds. It thus suffices to prove the former part.
    
    For each $\Omega=(\Omega_1^T,\dots,\Omega_n^T)^T$, consider the product of power-sums:
    \begin{equation}\label{eq:ProductOfPowersum}
        P_{\Omega_1}(X)\cdots P_{\Omega_n}(X)=(X_1^{\Omega_1}+\cdots +X_n^{\Omega_1})\cdots (X_1^{\Omega_n}+\cdots +X_n^{\Omega_n}).
    \end{equation}
    If $\Omega=O$, then $P_{\Omega_1}(X)\cdots P_{\Omega_n}(X)=n^n$ and $\mathrm{mon}_{\Omega}(X)=1$. Thus, it holds that $\mathrm{mon}_{\Omega}(X)=(1/n^n)P_{\Omega_1}(X)\cdots P_{\Omega_n}(X)$.
    
    Suppose $\Omega\neq O$. In this case, there is at least one non-zero row.
    We claim that $P_{\Omega_1}(X)\cdots P_{\Omega_n}(X)$ is represented as
    \begin{equation}\label{claim:powersum}
        P_{\Omega_1}(X)\cdots P_{\Omega_n}(X)=c_\Omega\cdot \mathrm{mon}_{\Omega}(X)+\sum_{\Omega'}c_{\Omega'}\cdot \mathrm{mon}_{\Omega'}(X),
    \end{equation}
    where $c_\Omega$ and $c_{\Omega'}$ are positive integers (possibly depending on $n$), and   
    the latter summation is taken over $\Omega'=((\Omega'_1)^T,\dots,(\Omega'_n)^T)^T$ such that (1) the number of non-zero rows in $\Omega'$ is strictly less than the number of those in $\Omega$, 
    (2) $\abs{\Omega}_1=\abs{\Omega'}$.

    With \autoref{claim:powersum}, a simple induction on the number of non-zero rows of $\Omega$ completes the proof: Suppose that $\Omega$ has a single non-zero row, say $\Omega_i$. In this case,
    \begin{equation*}
        P_{\Omega_1}(X)\cdots P_{\Omega_n}(X)=n^{n-1}(X_1^{\Omega_i}+\dots+X_n^{\Omega_i})=n^{n-1} \cdot \mathrm{mon}_{\Omega}(X).
    \end{equation*}
    For $\Omega$ with $k\ge 2$ non-zero rows, it follows from the claim that
\begin{equation*}
    \mathrm{mon}_{\Omega}(X)=\frac{1}{c_\Omega}P_{\Omega_1}(X)\cdots P_{\Omega_n}(X)-\sum_{\Omega'}\frac{c_{\Omega'}}{c_{\Omega}}\cdot \mathrm{mon}_{\Omega'}(X).
\end{equation*}
    By the induction hypothesis, this implies that $\mathrm{mon}_{\Omega}(X)$ is a linear combination of $P_{\Omega'_1}(X)\cdots P_{\Omega'_n}(X)$ 
    with rational coefficients
    over $\Omega'=((\Omega'_1)^T,\dots,(\Omega'_n)^T)^T$ such that $\abs{\Omega'}_1\le \abs{\Omega}_1$. This proves the statement in the lemma.

   We now prove \autoref{claim:powersum}. First, consider the sum of the products obtained by taking a term
   for distinct row vector $X_i$ from each clause in the R.H.S. of \autoref{eq:ProductOfPowersum}.
    The sum is represented as 
    \begin{equation*}
     \sum_{\sigma\in S_n}X_1^{\Omega_{\sigma(1)}}\dots X_n^{\Omega_{\sigma(n)}}.  
    \end{equation*}
    By the orbit-stabilizer theorem, 
    each distinct term in this sum appears exactly $\abs{(S_n)_{\Omega}}$ times, where $(S_n)_{\Omega}$ is the subgroup of $S_n$ that stabilizes $\Omega$.
    Thus, the sum is represented as
    \begin{equation*}
     \Abs{(S_n)_{\Omega}}\sum_{\Lambda\in S_n\Omega}X_1^{{\Lambda_1}}\dots X_n^{\Lambda_{n}}=\Abs{(S_n)_{\Omega}}\cdot \mathrm{mon}_{\Omega}(X).
    \end{equation*}
    Next, consider the remaining products, i.e., the products obtained by taking terms
    for the same row of $X$ from multiple clause in the L.H.S. of \autoref{eq:ProductOfPowersum}. For instance, consider the product obtained by taking $X_i^{\Omega_1}$ and  $X_i^{\Omega_2}$ from the first and second clauses, respectively, and a term for distinct row vector $X_j\ (i\neq j)$ from each of the other clause. Then, the product is of the form of $X_i^{\Omega_1}X_i^{\Omega_2}Q$, where $Q$ is the product of remaining variables. This can be regarded as $X^{\Omega'}$ with $\Omega'_i=\Omega_1+\Omega_2$. In general, one can observe that such a product is represented as $X^{\Omega'}$, where
    each non-zero row in $\Omega'$ is the sum of a disjoint set of non-zero rows in $\Omega$ so that the number of non-zero rows in $\Omega'$ is strictly less than the number of those in $\Omega$, and $\abs{\Omega}_1=\abs{\Omega'}$. It is not difficult to see that the set of $\Omega'$ induced by products obtained by expanding R.H.S. of \autoref{eq:ProductOfPowersum} is partitioned into orbits under the action of $S_n$ such that $\Omega'_i\to \Omega'_{\sigma(i)}$ for each $i\in [n]$ and $\sigma\in S_n$. For any representative $\Omega'$ of each orbit, the sum of the products in \autoref{eq:ProductOfPowersum} associated with the orbit is expressed by the orbit-stabilizer theorem as
     \begin{equation*}
     \Abs{(S_n)_{\Omega'}}\sum_{\Lambda\in S_n\Omega'}X_1^{{\Lambda_1}}\dots X_n^{\Lambda_{n}}=\Abs{(S_n)_{\Omega'}}\cdot \mathrm{mon}_{\Omega'}(X).
    \end{equation*}
\end{proof}
\section{Symmetrization}\label{appdx:symmetrization}
For any function $f\in [M]^{[F]}$, 
the expectation of $x_{\pi(i_1)j_1}\cdots x_{\pi(i_m)j_m}$
over all permutations $\pi$
is 
\begin{align*}
\Ex_{\pi}\left[x_{\pi(i_1)j_1}\cdots x_{\pi(i_m)j_m}\right]&=1\cdot\Pr_{\pi}
{\left[x_{\pi(i_1)j_1}\cdots x_{\pi(i_m)j_m}=1\right]}\\
&=\Pr_\pi{\left[x_{\pi(i_1)j_1}=1\right]}\cdot\prod_{\ell=2}^{m}
\Pr_{\pi}{\left[x_{\pi(i_\ell) j_\ell}
=1\mid x_{\pi(i_1)j_1}\cdots x_{\pi(i_{\ell-1})j_{\ell-1\ }}=1\right]},
\end{align*}
where the expectation and the probability are taken over the permutation $\pi$ over $[F]$.
Then, $\Pr_{\pi}{\left[x_{\pi(i_1)j_1}=1\right]}=z_{j_1}/F$
follows from the definition of $z_{j_1}$. 
Next,
we impose on $\pi$ the condition that $f\left(\pi(i_1)\right)=j_1,\ldots,f\left(\pi(i_{\ell-1})\right)=j_{\ell-1}$, i.e.,
$x_{i_1j_1}\cdots x_{i_{\ell-1}j_{\ell-1\ }}=1$. 
Then, there exist $F-\left(\ell-1\right)$ choices for $\pi(i_\ell)$, since the term does not contain the pair of variables with the same first index. To satisfy $x_{\pi(i_\ell) j_\ell}=1$, it must hold that $\pi(i_\ell)\in f^{-1}\left(j_\ell\right)\setminus
\{\pi(i_1),\ldots,\pi(i_{\ell-1})\}$.
Let $s_\ell$ be 
the number of indices in $\{\pi(i_1),\ldots,\pi(i_{\ell-1})\}$ such
that they are mapped via $f$ to $j_\ell$.
Then, there are $z_{j_\ell}-s_\ell$ choices 
for $\pi(i_\ell)$ satisfying $x_{\pi(i_\ell) j_\ell}=1$.
Therefore, 
\[
\Pr{\left[x_{i_\ell j_\ell}=1\mid x_{i_1j_1}\cdots x_{i_{\ell-1}j_{\ell-1\ }}=1\right]}=\frac{z_{j_\ell}-s_\ell}{F-\left(\ell-1\right)}.
\]
Hence, $\Ex_{\pi}\left[x_{\pi(i_1)j_1}\cdots x_{\pi(i_m)j_m}\right]=\prod_{\ell=1}^{m}\frac{z_{j_\ell}-s_\ell}{F-\left(\ell-1\right)}$,
which is a polynomial in $z_j$'s of degree at most $m$.

\end{document}